\newtheorem{theorem}{Theorem}[section]
\newtheorem{lemma}[theorem]{Lemma}
\newtheorem{proposition}[theorem]{Proposition}
\newtheorem{corollary}[theorem]{Corollary}
\newtheorem{definition}[theorem]{Definition}
\newtheorem{remark}[theorem]{Remark}
\DeclareMathOperator {\diam}  {diam}
\DeclareMathOperator {\cost}  {cost}
\newcommand {\E}       {\mathbb{E}}
\newcommand {\bbR}    {\mathbb{R}}
\newcommand {\F}    {\mathcal{F}}
\newcommand{\eps}{\varepsilon}
\newcommand{\removelatexerror}{\let\@latex@error\@gobble}
\icmltitlerunning{Randomized Dimensionality Reduction for Clustering}
\begin{document}
\newcounter{num}
\setcounter{num}{2}

\twocolumn[
\icmltitle{Randomized Dimensionality Reduction for Facility Location and Single-Linkage Clustering}



\icmlsetsymbol{equal}{*}

\begin{icmlauthorlist}
\icmlauthor{Shyam Narayanan}{equal,mit}
\icmlauthor{Sandeep Silwal}{equal,mit}
\icmlauthor{Piotr Indyk}{mit}
\icmlauthor{Or Zamir}{ias}
\end{icmlauthorlist}

\icmlaffiliation{mit}{Electrical Engineering and Computer Science Department, Massachusetts Institute of Technology, Cambridge, MA, USA}
\icmlaffiliation{ias}{Institute of Advanced Study, Princeton, NJ, USA}

\icmlcorrespondingauthor{Shyam Narayanan}{shyamsn@mit.edu}
\icmlcorrespondingauthor{Sandeep Silwal}{silwal@mit.edu}
\icmlcorrespondingauthor{Piotr Indyk}{indyk@mit.edu}
\icmlcorrespondingauthor{Or Zamir}{orzamir@ias.edu}

\icmlkeywords{Dimension Reduction, Clustering, Facility Location, MST, Doubling Dimension}

\vskip 0.3in
]



\printAffiliationsAndNotice{\icmlEqualContribution} 

\begin{abstract}
Random dimensionality reduction is a versatile tool for speeding up algorithms for high-dimensional problems. We study its application to two clustering problems: the facility location problem, and the single-linkage hierarchical clustering problem, which is equivalent to computing the minimum spanning tree. We show that if we project the input pointset $X$ onto a random $d = O(d_X)$-dimensional subspace (where $d_X$ is the doubling dimension of $X$),
then the optimum facility location cost in the projected space approximates the original cost up to a constant factor. We show an analogous statement for minimum spanning tree, but with the dimension $d$ having an extra $\log \log n$ term and the approximation factor being arbitrarily close to $1$. Furthermore, we extend these results to approximating {\em solutions} instead of just their {\em costs}. Lastly, we provide experimental results to validate the quality of solutions and the speedup due to the dimensionality reduction.
Unlike several previous papers studying this approach in the context of $k$-means and $k$-medians, our dimension bound does not depend on the number of clusters but only on the intrinsic dimensionality of $X$. 
\end{abstract}

\section{Introduction}  \label{sec:intro}
Clustering is a fundamental problem with many applications in machine learning, statistics, and data analysis. Although many formulations of clustering  are NP-hard in the worst case, many heuristics and approximation algorithms exist and are widely deployed in practice. Unfortunately, many of those algorithms suffer from large running times, especially if the input data sets are high-dimensional.

In order to improve the performance of clustering algorithms in high-dimensional spaces, a popular  approach is to project the input point set into a lower-dimensional space and perform the clustering in the projected space. Reducing the dimension (say, from $m$ to $d \ll m$) has multiple practical and theoretical advantages, including (i)  lower storage space, which is linear in $d$ as opposed to $m$; (ii) lower running time of the clustering procedure - the running times are often dominated by distance computations, which take time linear in the dimension; and (iii) versatility: one can use {\em any}  algorithm or its implementation to cluster the data in the reduced dimension. 
Because of its numerous benefits, dimensionality reduction as a tool for improving algorithm performance has been studied extensively, leading to many theoretical tradeoffs between the projected dimension and the solution quality.  A classic result in this area is the Johnson-Lindenstrauss (JL) lemma \yrcite{originalJL} which (roughly) states that a random projection of a dataset $X \subseteq \mathbb{R}^m$ of size $n$ onto a dimension of size $O(\log n)$ approximately preserves all pairwise distances. This tool has been subsequently applied to many clustering and other problems (see \cite{dim_reductionsurvey} and references therein).

Although the JL lemma is known to be tight~\cite{larsen2017optimality} in general, better tradeoffs are possible for {\em specific} clustering problems. Over the last few years, several works \cite{kmeans1, kmeans2, becchetti2019oblivious, ilyapaper} have shown that combining random dimensionality reduction with $k$-means leads to better guarantees than implied by the JL lemma. In particular, a recent paper by Makarychev, Makarychev, and Razenshteyn \yrcite{ilyapaper} shows that to preserve the $k$-means cost up to an arbitrary accuracy,  it suffices to project the input set $X$ onto a dimension of size $O(\log k)$, as opposed to $O(\log n)$ guaranteed by the JL lemma. Since $k$ can be much smaller than $n$, the improvement to the dimension bound can be substantial.
However, when $k$ is comparable to $n$, the improvement is limited. This issue is particularly salient for clustering problems with a variable  number of clusters, where no a priori bound on the number of clusters exists.

In this paper we study randomized dimensionality reduction over Euclidean space $\mathbb{R}^m$ in the context of two fundamental clustering problems with a variable number of clusters. In particular:
\begin{itemize}
    \item Facility location (FL): given a set of points $X \subset \mathbb{R}^m$ and a facility opening cost, the goal is to open a subset $\mathcal{F} \subseteq X$ of facilities in order to minimize the total cost of opening the facilities plus the sum of distances from points in $X$ to their nearest facilities (see Section \ref{sec:prelim} for a formal definition). Such cost functions are often used when the ``true'' number of clusters $k$ is not known, see e.g., \cite{cambridge2009online}, section 16.4.1.
    \item Single-linkage clustering, or (equivalently) Minimum Spanning Tree (MST): given a set of points $X \subset \mathbb{R}^m$, the goal is to connect them into a tree in order to minimize  the total cost of the tree edges. This is a popular variant of Hierarchical Agglomerative Clustering (HAC) that creates a hierarchy of clusters, see e.g.,  \cite{cambridge2009online}, section 17.2.
\end{itemize}

We remark that some papers, e.g., \cite{abboud2019hac} define approximate HAC operationally, by postulating that each step of the clustering algorithm must be approximately correct. However, there are other theoretical formulations of approximate HAC as well, e.g., \cite{dasgupta2016hac, moseley2017hac}. 
Since single-linkage clustering has a natural objective function induced by MST, defining approximate single-linkage clustering as approximate MST is a natural, even if not unique, choice.

\paragraph{Our Results} Our main results show that, for both FL and MST, it is possible to project input point sets into low (sometimes even constant) dimension while provably preserving the quality of the solutions. Specifically, our theorems incorporate the \emph{doubling dimension} $d_X$ of the input datasets $X$. This parameter\footnote{We formally define it in Section \ref{sec:prelim}.} measures the ``intrinsic dimensionality'' of $X$ and can be much lower than its ambient dimension $m$. If $X$ has size $n$, the doubling dimension $d_X$ is always at most $\log n$, and is often much smaller. We show that random projections into dimension roughly proportional to $d_X$ suffice in order to approximately preserve the solution quality. The specific bounds are listed in Table \ref{table}. 

\begin{figure*}[t]
\begin{center}
\caption{Number of dimensions $d$ required for a random projection to provide a good clustering approximation}
\vskip 0.1in
\begin{tabular}{|l|l|l|l|l|}
\hline
Problem & Proj. dimension $d$ & Approx. & Cost/Solution & Reference \\
\hline
FL & $O(d_X)$ & $O(1)$ & Cost & Theorem \ref{thm:constant}\\
FL & $O(d_X)$ & $O(1)$ & Locally optimal solution & Theorem \ref{cor:main}\\
\hline
MST & $O(1/\epsilon^2 \cdot (d_X \log(1/\epsilon) +\log \log n))$ & $1+\epsilon$ & Cost  & Theorem~\ref{thm:MainMST}\\
MST & $O(1/\epsilon^2 \cdot (d_X \log(1/\epsilon) +\log \log n))$ & $1+\epsilon$ & Optimal solution & Theorem~\ref{thm:MainMST}\\
\hline
\end{tabular}
\label{table}
\end{center}
\vskip -0.1in
\end{figure*}

We distinguish between two types of guarantees. The first type states that the minimum {\em cost} of FL or MST is preserved by a random projection (with high probability) up to the specified factor. This guarantee is useful if the goal is to quickly estimate the optimal value. The second type states that a {\em solution} computed in the projected space induces a solution in the original space which approximates the best solution (in the original space) up to the specified approximation factor. This guarantee implies that one can find an approximately optimal clustering  by mapping the data into low dimensions and clustering the projected data. To obtain the second guarantee, we need to assume that the solution in the projected space is either globally optimal (for MST) or locally optimal\footnote{Informally, a solution is locally optimal if opening any new facility does not decrease its cost. The formal definition is slightly more general, and is given in Section~\ref{sec:local}. Note that any solution found by local search algorithms such as that in~\cite{MP_alg} satisfies this  condition.} (for FL). We note that these two types of guarantees are  incomparable. In fact, for FL, our proofs of the cost and of the solution guarantees are substantially different. We also prove analogous theorems for the ``squared'' version of FL, where the distance between points is defined as the {\em square} of the Euclidean distance between them.

We complement the above results by showing that the conditions and assumptions in our theorem cannot be substantially reduced or eliminated. Specifically, for both FL and MST, we show that:
\begin{itemize}
    \item The bounds on the projected dimension $d$ in the theorems specified in the table must be at least $\Omega(d_X)$, as otherwise the approximation factors for both the cost and the solution become super-constant (Theorems \ref{thm:lb_fac}, \ref{thm:lb_mst_cost}, \ref{thm:lb_mst_pullback})
    \item The assumptions that the solution in the projected space is (locally) optimal cannot be relaxed to ``approximately optimal'' (Lemmas \ref{lem:opt_nec}, \ref{lem:opt_nec_mst}).
\end{itemize}
Also, we show that, in contrast to facility location and MST, one must project to $\Omega(\log k)$ dimensions for preserving both the cost and solution for $k$-means and $k$-medians clustering, even if the doubling dimension $d_X$ is $O(1)$.

Finally, we present an experimental evaluation of the algorithms suggested by our results. Specifically, we show that both FL and MST, solving these problems in reduced dimension can reduce the running time by 1-2 orders of magnitude while increasing the solution cost only slightly. We also give empirical evidence that the doubling dimension of the input point set affects the quality of the approximate solutions. Specifically, we study two simple point sets of size $n$ that have  similar structure but very different doubling dimension values ($O(1)$ and $O(\log n)$, respectively). We empirically show that a good approximation of the MST can be found for the former point set by projecting it into much fewer dimensions than the latter point set.

\paragraph{Related Work}
There is a long line of existing work on approximating the solution of various clustering problems in metric spaces with small doubling dimensions (see \cite{db_dim1, db_dim2, db_dim3, db_dim4}). The state of the art result is given in \cite{focspaper} where a near linear $(1+\epsilon)$-approximation algorithm is given for a variety of clustering problems. However, these runtimes have a doubly-exponential dependence on $d$ which is proven to be unavoidable unless P = NP \cite{focspaper}. 
For MST in spaces of doubling dimension $d_X$, it is known that an $(1+\epsilon)$-approximate solution can be computed in time $2^{O(d_X)} n \log n + \epsilon^{-O(d_X)} n$ \cite{gottlieb2013proximity}.
To the best of our knowledge, none of these algorithms have  been implemented.

In addition, the notion of doubling dimension has also been previously used to study algorithms for high dimensional problems such as the nearest neighbor search, see e.g., \cite{indyknaor, nn1, nn2}. The paper~\cite{indyknaor} is  closest in spirit to our work, as it shows that, for a fixed point $q$ and a data set $X$,  a random projection into $O(d_X)$ dimensions approximately preserves the distance from $q$ to its nearest neighbor in $X$ with a ``good'' probability. If the probability of success was of the form $1-1/{2n}$, we could apply this statement to all (up to $n$) facilities in the solution simultaneously, which would prove our results.  Unfortunately, the probability of failure is much higher than $1/n$, and therefore this approach fails. Nevertheless, our proofs use some of the lemmas developed in that work, as discussed in Section~\ref{sec:prelim}. 

\section{Preliminaries} \label{sec:prelim}
\paragraph{Problem Definitions}
The \emph{Euclidean Facility Location} problem is defined as follows: We are given a dataset $X \subset \mathbb{R}^m$ of $n$ points and a nonnegative function $c: X \rightarrow \mathbb{R}$ that represents the cost of \emph{opening} a facility at a particular point. The goal is to find a subset $\mathcal{F} \subseteq X$ that minimizes the objective $\cost(\mathcal{F}) =  \sum_{f \in \mathcal{F}} c(f) + \sum_{x \in X} D(x, \mathcal{F}),$ where $D(x, \mathcal{F}) = \min_{f \in \mathcal{F}} \|x-f\|$. In this work we restrict our attention to the case that $\| \cdot \|$ is the Euclidean $(\ell_2)$ metric. The first term $\sum_{f \in \mathcal{F}} c(f)$ is referred to as the opening costs and the second term $ \sum_{x \in X} D(x, \mathcal{F})$ is referred to as the connection costs. In this work, we also focus on the \emph{uniform} version of facility location where 
all opening costs are the same. By re-scaling the points, we can further assume that $f(x) = 1$ for all $x \in X$. Therefore, throughout the paper, we focus on minimizing the following objective function:
\begin{equation}\label{eq:objective}
\cost(\mathcal{F}) =| \mathcal{F}| + \sum_{x \in X} \,  \min_{f \in \mathcal{F}}\|x-f\|. 
\end{equation}
A set $\mathcal{F}$ of facilities is also referred to as a \emph{solution} to the facility location problem.

The \emph{Euclidean Minimum Spanning Tree} problem is defined as follows. Given a dataset $X \subset \mathbb{R}^m$ of $n$ points, we wish to find a set $\mathcal{M}$ of edges $(x, y)$ that forms a spanning tree of $X$ and minimizes the following objective function:
\begin{equation}\label{eq:objective_mst}
\cost(\mathcal{M}) = \sum_{(x, y) \in \mathcal{M}}\|x-y\|.
\end{equation}

\paragraph{Properties of Doubling Dimension}
We parameterize our dimensionality reduction using \emph{doubling dimension}, a measure of the intrinsic dimensionality of the dataset. The notion of doubling dimension holds for a general metric space $X$ and is defined as follows. Let $B(x,r)$ denote the ball of radius $r$ centered at $x \in X$, intersected with the points in $X$. Then the doubling \emph{constant} $\lambda_X$ is the smallest constant $\lambda$ such that for all $x \in X$ and for all $r > 0$, there exists $S\subseteq X$ with $|S| \le \lambda$ such that $B(x,r) \subseteq \bigcup_{s \in S} B(s, r/2)$. The doubling \emph{dimension} of $X$ is is defined as $d_X := \log \lambda_X$. One can see that $\lambda_X \le |X|,$ so $d_X \le \log |X|$.
In this paper, we focus on the case that $X$ is a subset of Euclidean space $\mathbb{R}^m$. 
\paragraph{Dimension Reduction}
In this paper we define a random projection as follows.

\begin{definition}\label{def:proj}
 A random projection from $\mathbb{R}^m$ to $\mathbb{R}^d$ is a linear map $G$ with i.i.d.\@ entries drawn from $\mathcal{N}(0, 1/d)$.
\end{definition} 

The following dimensionality reduction result related to doubling dimension was proven in \cite{indyknaor}. Informally, the lemma below states that a random projection of $X$ onto a dimension $O(d_X)$ subspace does not `expand' $X$ very much.

\begin{lemma}[Lemma $4.2$ in \cite{indyknaor}]\label{lem:indyk} Let $X \subseteq B(0,1)$ be a subset of the $m$-dimensional Euclidean unit ball, and let $G$ be a random projection from $\mathbb{R}^m$ to $\mathbb{R}^d$. Then there exist universal constants $c, C > 0$ such that for $d \ge C \cdot d_X + 1$ and $t > 2$, $\Pr(\exists x \in X, \, \|Gx\| \ge t) \le \exp(-cdt^2)$.
\end{lemma}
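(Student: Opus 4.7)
The strategy is a standard multi-scale chaining argument that combines Gaussian concentration of $\|Gy\|$ for each fixed $y$ with the bounds on the covering numbers of $X$ implied by its doubling dimension. The first ingredient is the pointwise tail estimate: for any fixed $y \in \mathbb{R}^m$, $\|Gy\|^2/\|y\|^2$ is distributed as $\chi^2_d/d$, so a standard Laplace-transform calculation yields a universal constant $c_0$ with $\Pr[\|Gy\| \ge s\,\|y\|] \le \exp(-c_0 d s^2)$ for every $s > 2$.

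The second ingredient is a hierarchy of nets obtained from the doubling property. Since $X \subseteq B(0,1)$ and $\lambda_X = 2^{d_X}$, for every $i \ge 0$ there exists a $2^{-i}$-net $N_i \subseteq X$ with $|N_i| \le 2^{O(i\,d_X)}$ (apply the doubling property $i$ times starting from $B(0,1)$). For each $x \in X$ fix $x_i \in N_i$ with $\|x - x_i\| \le 2^{-i}$ and set $x_{-1} := 0$; then $\|x_i - x_{i-1}\| \le 3\cdot 2^{-i}$ and the difference $x_i - x_{i-1}$ ranges over a set $D_i$ of size at most $|N_i|\cdot|N_{i-1}| \le 2^{O(i\,d_X)}$ as $x$ varies in $X$. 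Telescoping gives $\|Gx\| \le \sum_{i \ge 0}\|G(x_i - x_{i-1})\|$, so it suffices to control each level uniformly.

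Choose per-level thresholds $u_i := 3\cdot 2^{-i}\bigl(t + \beta\sqrt{i+1}\bigr)$ for a sufficiently large absolute constant $\beta$. Since $u_i/\|x_i - x_{i-1}\| \ge t > 2$, the pointwise tail combined with a union bound over $D_i$ yields
\[
\Pr\!\bigl[\exists y \in D_i : \|Gy\| \ge u_i\bigr] \;\le\; \exp\!\bigl(O(i\,d_X) - c_0 d (t + \beta\sqrt{i+1})^2\bigr).
\]
Expanding the square, the $c_0 d \beta^2 (i+1)$ piece dominates $O(i\,d_X) + (i+1)\ln 2$ once $d \ge C d_X + 1$ and $\beta$ are large enough, so the right-hand side is at most $2^{-i-1}\exp(-c d t^2)$ for some smaller universal $c$, and summing over $i$ gives a total failure probability of at most $\exp(-c d t^2)$. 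On the good event, $\|Gx\| \le \sum_{i \ge 0} u_i = 3t + 3\beta\sum_{i \ge 0}\sqrt{i+1}\,2^{-i} = O(t)$ simultaneously for every $x \in X$, and rescaling $t$ by an absolute constant absorbs this $O(\cdot)$ factor into the exponent to deliver the claimed bound for all $t > 2$.

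\emph{Main obstacle.} The delicate step is choosing the $u_i$: they must grow with $i$ just fast enough (the $\beta\sqrt{i+1}$ term) for the Gaussian tail to absorb the $2^{O(i\,d_X)}$ metric entropy at every scale, yet still decay like $2^{-i}$ so that $\sum_i u_i = O(t)$. This trade-off is precisely what forces $d$ to be at least a constant multiple of $d_X$: if $d \ll d_X$, the covering number at level $i$ would outpace the pointwise concentration no matter how we split the budget among the $u_i$'s.
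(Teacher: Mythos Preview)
The paper does not supply its own proof of this statement; it is quoted verbatim as Lemma~4.2 of Indyk and Naor and used as a black box throughout. Your multi-scale chaining---build $2^{-i}$-nets $N_i$ of size $2^{O(i\,d_X)}$ from the doubling property, telescope each $x$ along the net hierarchy, and at every level trade the Gaussian tail $\exp(-c_0 d s^2)$ against the metric entropy via a union bound---is exactly the argument in that reference, and the sketch is correct in substance.

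One small caveat on the final sentence. As written, the chaining delivers
\[
\Pr\!\Big[\sup_{x\in X}\|Gx\|\ge \alpha t\Big]\le\exp(-c_0 d t^2)\qquad (t>2),
\]
where $\alpha=\bigl(\sum_i u_i\bigr)/t>1$ is an absolute constant. Substituting $t\mapsto t/\alpha$ then gives the claimed inequality only for $t>2\alpha$; ``shrinking $c$'' alone does not cover the range $2<t\le 2\alpha$, because the event $\{\sup_x\|Gx\|\ge t\}$ is \emph{larger} there, not smaller. This is a routine technicality rather than a real obstruction: one fixes it by using the sharper two-sided tail $\Pr[\|Gy\|\ge(1+s)\|y\|]\le\exp(-c_0' d s^2)$, valid for all $s>0$, at the coarse levels, and choosing the per-level budgets so that the telescoping sum is at most $t$ directly rather than $O(t)$. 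Your identification of the main obstacle---that the $\beta\sqrt{i+1}$ growth is precisely what lets $d\ge C d_X$ absorb the $2^{O(i\,d_X)}$ entropy---is exactly right.
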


For our proofs, we will need some additional preliminary results on random projections, which are deferred to Appendix \ref{sec:app_prelim}.

\section{Local Optimality for Facility Location}
\label{sec:local}
We now define the notion of a locally optimal solution for facility location. As stated in the introduction, this notion plays a key role in our approximation guarantees. Before we present our criterion for local optimality, we begin by discussing the Mettu Plaxton (MP) algorithm, an approximation algorithm for the facility location problem. The MP approximation algorithm gives a useful geometric quantity to understand the facility location problem.

\subsection{Approximating the Cost of Facility Location}
For each $p \in X$, we associate with it a radius $r_p > 0$ which satisfies the relation
\begin{equation}\label{eq:rdef}
    \sum_{q \in B(p, r_p) \cap X} (r_p - \|p-q\|) = 1.
\end{equation}
It can be checked that a unique value $r_p$ satisfying $1/n \le r_p \le 1$ exists for every $p$. The geometric interpretation of $r_p$ is shown in Figure $\ref{fig:circle_radii}$. This quantity was first defined by Mettu and Plaxton \yrcite{MP_alg}, 
who proved that a simple greedy procedure of iteratively selecting facilities that lie in balls of radii $2r_p$ gives a $3$ factor approximation algorithm for the facility location problem. 
For completeness, their algorithm is given in Appendix \ref{sec:app_local}.

    \begin{figure}[h!]
        \centering
      \includegraphics[width=0.35\linewidth]{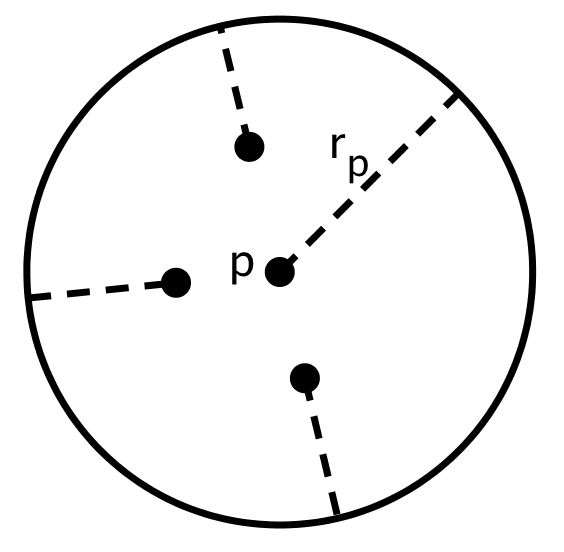}
      \caption{$r_p$ is defined such that the dotted lines add to $1$.}
      \label{fig:circle_radii}
    \end{figure}
One of the main insights from 
Mettu and Plaxton's algorithm
is that the sum of the radii $r_p$ is a constant factor approximation to the cost of the optimal solution. This insight was first stated in \cite{sublin_MP} where it was used to design a sublinear time algorithm to approximate the cost of the facility location problem. In particular, we have the following result from \cite{sublin_MP} about the approximation property of the radii values.
\begin{lemma}[Lemma $2$ in \cite{sublin_MP}]\label{lem:appx}
Let $C_{OPT}$ denote the cost of the optimal facility location solution. Then 
$\frac{1}{4} \cdot C_{OPT} \le \sum_{p \in X} r_p \le 6 \cdot C_{OPT}$.
\end{lemma}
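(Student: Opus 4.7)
The plan is to prove the two inequalities separately, since they go in opposite directions and call for different techniques.

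For the upper bound $\sum_{p} r_p \le 6\, C_{OPT}$, I would fix an optimal solution $\mathcal{F}^*$, partition $X$ by nearest facility into sets $X_f$ indexed by $f \in \mathcal{F}^*$, and let $d_p := \|p - f\|$ for $p \in X_f$, so that $\cost(\mathcal{F}^*) = \sum_{f}\bigl(1 + \sum_{p \in X_f} d_p\bigr)$. For each $f$ I would introduce an auxiliary \emph{cluster radius} $R_f$ uniquely determined by $\sum_{q \in X_f,\, d_q < R_f}(R_f - d_q) = 1$, and establish two facts. First, $r_p \le R_f + d_p$ for every $p \in X_f$: any $q \in X_f$ with $d_q < R_f$ lies in $B(p, R_f + d_p)$ by the triangle inequality, with slack $(R_f + d_p) - \|p - q\| \ge R_f - d_q$, so the defining sum in~\eqref{eq:rdef} evaluated at radius $R_f + d_p$ already reaches $1$. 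Second, $|X_f|\, R_f \le 1 + \sum_{p \in X_f} d_p$, obtained by splitting $X_f$ at the threshold $R_f$ (points with $d_p < R_f$ contribute via the defining equation of $R_f$, and the rest satisfy $R_f \le d_p$). Combining, $\sum_{p \in X_f} r_p \le |X_f|\, R_f + \sum_{p \in X_f} d_p \le 1 + 2\sum_{p \in X_f} d_p$, and summing over $f$ yields $\sum_p r_p \le 2\, C_{OPT}$, which is stronger than the desired bound of $6\, C_{OPT}$.

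For the lower bound $\sum_p r_p \ge C_{OPT}/4$, I would construct an explicit solution $\mathcal{F}$ with $\cost(\mathcal{F}) \le 4 \sum_p r_p$, which then upper bounds $C_{OPT}$. The natural choice is the Mettu--Plaxton greedy procedure referenced in Section~\ref{sec:local}: process the points in order of increasing $r_p$, and open $p$ as a new facility if no already-opened $f$ satisfies $\|p - f\| \le 2 r_p$. The connection cost is immediate: any unopened $p$ is, by construction, within distance $2 r_p$ of some earlier facility, so $\sum_{p \notin \mathcal{F}} D(p, \mathcal{F}) \le 2 \sum_p r_p$. For the opening cost $|\mathcal{F}|$, the key structural observation is that the balls $B(f, r_f)$ for $f \in \mathcal{F}$ are pairwise disjoint: if $f_1$ is opened before $f_2$, so $r_{f_1} \le r_{f_2}$, and their balls intersected, then $\|f_1 - f_2\| \le r_{f_1} + r_{f_2} \le 2 r_{f_2}$, contradicting the greedy rule at $f_2$. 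Writing $1 = \sum_{q \in B(f, r_f)}(r_f - \|f - q\|)$ for each $f \in \mathcal{F}$, swapping sums, and using disjointness gives $|\mathcal{F}| \le \sum_q r_{f(q)}$, where $f(q)$ is the unique facility (if any) whose ball contains $q$. A short case analysis on whether $f(q)$ is processed before or after $q$ shows $r_{f(q)} \le 2 r_q$: the \emph{before} case is immediate from the processing order, and the \emph{after} case combines the greedy separation for $f(q)$ with the existence of an earlier facility close to $q$ via one triangle inequality. Thus $|\mathcal{F}| \le 2 \sum_p r_p$, and $\cost(\mathcal{F}) \le 4 \sum_p r_p$.

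The main obstacle is the opening-cost step of the lower bound: the disjointness of the balls $B(f, r_f)$ is what prevents overcounting, and the final inequality $r_{f(q)} \le 2 r_q$ requires the two-case triangle-inequality argument described above. Once those are in place, the rest reduces to bookkeeping. The upper bound, by contrast, is a short direct computation once the auxiliary radius $R_f$ is introduced, and the only subtle point is verifying that $R_f$ is well-defined for every facility with at least one assigned client (which can be assumed without loss of generality by removing empty facilities from $\mathcal{F}^*$).
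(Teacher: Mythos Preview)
The paper does not prove this lemma; it is quoted verbatim from \cite{sublin_MP}. Your proposal is a correct self-contained proof. The lower bound is essentially the standard Mettu--Plaxton analysis, and matches what the paper later reproduces for the squared-cost variant in Appendix~\ref{sec:fl_squared} (the disjointness of the balls $B(f,r_f)$ for $f\in\mathcal{F}$ and the bound $r_{f(q)}\le 2r_q$ appear there in the same form). Your upper bound, however, takes a different route from the proof in \cite{sublin_MP} (and from the paper's squared analogue). Those arguments compare $\sum_p r_p$ to the cost of the \emph{algorithmic} solution via the charge function $\mathrm{charge}(x,\mathcal{F})=d(x,\mathcal{F})+\sum_{q\in\mathcal{F}}\max(0,r_q-\|q-x\|)$, and then invoke the approximation guarantee of the MP algorithm to relate that cost to $C_{OPT}$; this is where the factor $6$ enters. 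Your argument instead compares $\sum_p r_p$ directly to the \emph{optimal} solution by introducing the per-cluster auxiliary radius $R_f$ satisfying $\sum_{q\in X_f,\,d_q<R_f}(R_f-d_q)=1$ and establishing $r_p\le R_f+d_p$ and $|X_f|\,R_f\le 1+\sum_{p\in X_f}d_p$. This is more elementary (it avoids the MP approximation theorem altogether) and yields the sharper constant $\sum_p r_p\le 2\,C_{OPT}$. One small remark: your caveat about empty $X_f$ is unnecessary, since $f\in X_f$ always.
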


For our purposes, we use the radii values to define a local optimality criterion for a solution to the facility location problem. Our local optimality criterion states that each point $p$ must have a facility that is within distance $3r_p$. 

\begin{definition}\label{def:local}
A solution $\mathcal{F}$ to the facility location problem is \emph{locally optimal} if for all $p \in X$, 
$B(p, 3r_p) \cap \mathcal{F} \ne \emptyset$.
\end{definition}

We show in Lemma \ref{lem:localopt} that a solution that is not locally optimal can be improved, i.e. the objective function given in Eq.\@ \eqref{eq:objective} can be improved, by adding $p$ to the set of facilities. This implies that any global optimal solution must also be locally optimal, so requiring a solution of the facility location problem to be locally optimal is a \emph{less restrictive} condition than requiring a solution to be globally optimal.

\begin{lemma} \label{lem:localopt}
Let $\mathcal{F}$ be any collection of facilities. If there exists a $p \in X$ such that $B(p, 3r_p) \cap \mathcal{F} = \emptyset$ then $\textup{cost}(\mathcal{F} \cup \{p\}) < \textup{cost}(\mathcal{F})$,
i.e., we can improve the solution.
\end{lemma}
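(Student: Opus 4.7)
The plan is to compare $\cost(\mathcal{F} \cup \{p\})$ to $\cost(\mathcal{F})$ directly. Adding $p$ increases the opening cost by exactly $1$ but can only decrease each connection distance $D(x,\mathcal{F})$. So it suffices to exhibit a set of points $x$ whose connection distances drop by a total of more than $1$. I will take this set to be $B(p, r_p) \cap X$, because the defining equation \eqref{eq:rdef} for $r_p$ gives explicit control on sums over exactly this ball.

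The key geometric input is a triangle inequality argument. For any $f \in \mathcal{F}$ we have $\|f - p\| \ge 3r_p$ by the hypothesis $B(p,3r_p) \cap \mathcal{F} = \emptyset$, so for every $x \in X$,
\begin{equation*}
D(x, \mathcal{F}) \;\ge\; 3r_p - \|x - p\|.
\end{equation*}
For $x \in B(p, r_p)$ this gives $D(x,\mathcal{F}) \ge 2r_p \ge \|x-p\|$, so after adding $p$ the point $x$ is indeed reassigned to $p$ and its connection distance drops from at least $3r_p - \|x-p\|$ to at most $\|x-p\|$, a savings of at least $3r_p - 2\|x-p\|$.

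The algebraic step is to rewrite $3r_p - 2\|x-p\| = 2(r_p - \|x-p\|) + r_p$ and sum over $x \in B(p, r_p) \cap X$. The first piece telescopes via \eqref{eq:rdef} into exactly $2 \cdot 1 = 2$, and the second piece contributes a nonnegative term $r_p \cdot |B(p,r_p) \cap X| \ge 0$. So the total savings in connection cost is at least $2$, which strictly exceeds the added opening cost of $1$, and $\cost(\mathcal{F} \cup \{p\}) \le \cost(\mathcal{F}) - 1 < \cost(\mathcal{F})$.

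There is essentially no serious obstacle here; the only subtlety is noticing the identity $3r_p - 2\|x-p\| = 2(r_p - \|x-p\|) + r_p$, which is precisely what aligns the savings bound with the defining equation for $r_p$ and converts the $3r_p$ threshold in Definition~\ref{def:local} into a strict improvement. If one instead used a weaker threshold like $B(p, 2r_p) \cap \mathcal{F} = \emptyset$, the analogous calculation would only yield a savings of $\sum (r_p - \|x-p\|) = 1$, matching but not beating the opening cost, which is why the constant $3$ in the definition is (essentially) tight for this argument.
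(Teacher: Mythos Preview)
Your proof is correct and follows essentially the same approach as the paper: bound the connection-cost savings for points in $B(p, r_p) \cap X$ via the triangle inequality, then compare to the added opening cost of $1$. Your bookkeeping is slightly tighter---you sum the exact per-point savings $3r_p - 2\|x-p\|$ and invoke the defining equation \eqref{eq:rdef} directly, whereas the paper uses the cruder uniform bounds $D(x,\mathcal{F}) \ge 2r_p$ and $\|x-p\| \le r_p$ together with the derived count estimate $|B(p,r_p)\cap X| \ge 1/r_p$---but the underlying idea is the same.
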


The proof of Lemma \ref{lem:localopt} is deferred to Appendix \ref{sec:app_local}.

\section{Dimension Reduction for Facility Location}
\subsection{Approximating the Optimal Facility Location Cost}\label{subsec:approx}
In this subsection we show that we can \emph{estimate} the cost of the global optimal solution for a point set $X$ by computing the value of the radii after a random projection onto dimension $d = O(d_X)$. We do this by showing that for each $p$, the value of $r_p$ can be approximated up to a constant multiplicative factor in $\mathbb{R}^d$, the lower dimension.

For each $p \in X$, let $r_p$ and $\tilde{r}_p$ be the radius of $p$ and $Gp$ in $\mathbb{R}^m$ and $\mathbb{R}^d$, respectively, computed according to Eq.\@ \eqref{eq:rdef}. Then we prove that $\E[\tilde{r}_p] = \Theta(r_p)$, where the expectation is over the randomness of the projection $G$.

This proof can be divided into showing $\E[\tilde{r}_p] = O(r_p)$ and $\E[\tilde{r}_p] = \Omega(r_p).$ 
Our proof strategy for the former is to use the concentration inequality in Lemma \ref{lem:indyk} to roughly say that points in $B(p, r_p) \cap X$ cannot get `very far' away from $p$ after a random projection. In particular, they must all still be at a distance $O(r_p)$ of $p$ after the random projection. Then using the geometric definition of $r_p$ given in \eqref{eq:rdef} and Figure \ref{fig:circle_radii}, we can say that the corresponding radii of $Gp$ in $\mathbb{R}^d$ denoted as $\tilde{r}_p$ must then be upper bounded by $O(r_p)$.
Our proof strategy for the latter is different in that our challenge is to show that points do not `collapse' closer together. In more detail, for a fixed point $p$, we need to show that after a dimension reduction, many \emph{new} points do not come inside a ball of radius $O(r_p)$ around the point $Gp$. An application of Theorem \ref{thm:indyk} in Appendix \ref{sec:app_prelim},
due to Indyk and Naor \yrcite{indyknaor}, deals with this event.

By adding these expectations over each point $p$ and applying Lemma \ref{lem:appx}, we can prove that the facility location cost is preserved under a random projection. Formally, we obtain the following theorem:

\begin{theorem}\label{thm:constant}
Let $X \subseteq \mathbb{R}^m$ and let $G$ be a random projection from $\mathbb{R}^m$ to $\mathbb{R}^d$ for $d = O(d_X)$. Let $\mathcal{F}_m$ be the optimal solution in $\mathbb{R}^m$ and let $\mathcal{F}_d$ be the optimal solution for the dataset $GX \subseteq \mathbb{R}^d$. Then there exist constants $c, C >0$ such that
$ c \cdot \textup{cost}(\mathcal{F}_m) \le \E[\textup{cost}(\mathcal{F}_d)] \le C \cdot \textup{cost}(\mathcal{F}_m)$.
\end{theorem}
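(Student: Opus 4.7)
The whole proof reduces the FL cost problem to controlling Mettu–Plaxton radii. Concretely, Lemma~\ref{lem:appx} applied in both $\mathbb{R}^m$ and $\mathbb{R}^d$ gives $\textup{cost}(\mathcal{F}_m) \asymp \sum_{p \in X} r_p$ and, for every realization of $G$, $\textup{cost}(\mathcal{F}_d) \asymp \sum_{p \in X} \tilde r_p$ (where $\tilde r_p$ is the MP radius of $Gp$ in the projected point set $GX$). Taking expectations and using linearity, it therefore suffices to show that for each fixed $p \in X$,
\[
   c_1 \, r_p \le \mathbb{E}[\tilde r_p] \le c_2 \, r_p ,
\]
for absolute constants $c_1, c_2 > 0$. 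The two inequalities require quite different arguments.

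\textbf{Upper bound.} For the upper bound I would rescale: the set $Y_p := \{(q-p)/r_p : q \in X \cap B(p,r_p)\}$ lies in the unit ball and its doubling dimension is $O(d_X)$. Apply Lemma~\ref{lem:indyk} to $Y_p$ at a constant $t > 2$: with probability at least $1 - \exp(-cdt^2)$, every $q \in X \cap B(p,r_p)$ satisfies $\|Gq - Gp\| \le t r_p$. On this good event, setting $R = 2 t r_p$ makes each such point contribute at least $R - tr_p = t r_p$ to the projected ``mass'' $\sum_{q: \|Gq-Gp\|\le R}(R - \|Gq-Gp\|)$, and the defining equation for $r_p$ forces $|X\cap B(p,r_p)| \ge 1/r_p$, so the projected mass is at least $t \ge 1$. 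Hence $\tilde r_p \le 2 t r_p = O(r_p)$ on the good event. Because $\tilde r_p$ is always bounded by $1$ (it includes the term $q = Gp$), a tail integral $\mathbb{E}[\tilde r_p] = \int_0^\infty \Pr(\tilde r_p > s)\,ds$, with the Lemma~\ref{lem:indyk} tail plugged in for $s > 4 r_p$, collapses to $O(r_p)$ as long as $d = \Omega(d_X)$.

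\textbf{Lower bound (main obstacle).} The harder direction is ruling out a collapse: many points outside $B(p, r_p)$ could in principle be pushed into a small ball around $Gp$, making $\tilde r_p \ll r_p$. The plan is to control this via Theorem~\ref{thm:indyk} of Indyk--Naor (the nearest-neighbor preservation result referenced in the paper). Partition $X \setminus B(p, r_p/K)$ for a large constant $K$ into doubling shells $A_i = X \cap (B(p, 2^{i+1} r_p) \setminus B(p, 2^i r_p))$, whose sizes are controlled by $|A_i| \lesssim 2^{O(i \cdot d_X)}$. For each shell, the Indyk--Naor machinery bounds the expected number of points $q \in A_i$ whose image $Gq$ lands inside $B(Gp, r_p/K)$. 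Choosing $K$ (and thus $d = \Theta(d_X)$) large enough makes this expected ``contraction mass'' a small constant. In particular, with constant probability over $G$, the projected mass $\sum_q \max\bigl(\tfrac{r_p}{K} - \|Gq-Gp\|, 0\bigr)$ stays below $1$, which by the defining equation forces $\tilde r_p \ge r_p/K = \Omega(r_p)$. Taking expectations yields $\mathbb{E}[\tilde r_p] \ge \Omega(r_p)$.

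\textbf{Assembly.} Summing over $p$, combining with Lemma~\ref{lem:appx} in both dimensions, and using linearity of expectation gives $\mathbb{E}[\textup{cost}(\mathcal{F}_d)] = \Theta(\textup{cost}(\mathcal{F}_m))$, as required. The main technical obstacle is the lower bound: translating Indyk--Naor's pointwise nearest-neighbor preservation into a statement about the total projected MP mass in a shrinking ball, while keeping the projected dimension at $O(d_X)$, is where the doubling-dimension structure really has to be exploited rather than applied naively.
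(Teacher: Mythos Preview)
Your overall architecture is the paper's: reduce to radii via Lemma~\ref{lem:appx}, then show $\E[\tilde r_p]=\Theta(r_p)$ pointwise. Your upper bound is essentially identical to the paper's Lemma~\ref{lem:rupper} (they condition on ranges of the maximal expansion rather than writing a tail integral, but it is the same calculation).

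The lower bound, however, has a genuine error. You write that the shell sizes satisfy $|A_i|\lesssim 2^{O(i\cdot d_X)}$. This is false: doubling dimension bounds covering and packing numbers, not raw cardinalities. A shell $A_i$ can contain arbitrarily many points of $X$ regardless of $d_X$, so you cannot union-bound the contraction events over $|A_i|$ points. The argument \emph{can} be repaired by replacing the cardinality bound with a net: cover $A_i$ by $\lambda_X^{O(i+\log K)}$ balls of radius $r_p/(2K)$, control contraction of the net points via \eqref{eq:Gxb}, and control deviation of each cell from its center via Lemma~\ref{lem:indyk}. That is essentially what lives inside the Indyk--Naor proof, but you would have to actually do it; as written the step fails.

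The paper sidesteps the shell analysis entirely with a short trick (Lemma~\ref{lem:rlower}). Let $k=|B(p,r_p/2)\cap X|$; the defining equation of $r_p$ forces $k r_p/2\le 1$. Now delete the $k-1$ close points, insert one artificial point $q$ at distance $(1-\epsilon)r_p/2$ from $p$ so that $q$ becomes $p$'s nearest neighbor, and apply Theorem~\ref{thm:indyk} part~(2) once: with probability $\ge 1/2$ no point outside $B(p,r_p/2)$ is mapped within $(1-\epsilon)r_p/2$ of $Gp$. On that event at most $k$ points lie in $B(Gp,(1-\epsilon)r_p/2)$, so the projected mass there is at most $k\cdot(1-\epsilon)r_p/2\le 1-\epsilon<1$, forcing $\tilde r_p\ge(1-\epsilon)r_p/2$. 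This black-boxes the chaining inside Theorem~\ref{thm:indyk} and avoids any shell bookkeeping; it is worth internalizing as the cleaner route.
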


The full proof of Theorem \ref{thm:constant} and the lemmas bounding $\E[\tilde{r}_p]$ are deferred to Appendix \ref{sec:app_fl}.

\subsection{Obtaining Facility Location Solution in Larger Dimension} \label{subsec:pullback}
As discussed in the introduction, for many applications, it is not enough to be able to approximate the \emph{cost} of the optimal solution, but rather \emph{obtain} a good solution.

In particular, we would like to perform dimensionality reduction on a dataset $X$, use some algorithm to solve facility location, and then have the guarantee that the quality of the solution we found is a good indicator of the quality of the solution in the 
original
dimension. Furthermore, since optimally solving facility location in the smaller dimension might still be a challenging task, it is desirable to have a guarantee that a good solution (not necessarily the global optimum) will be a good solution in the larger dimension. We show in this section that this is indeed the case for \emph{locally optimal} solutions. 

Specifically, we show that the cost of a locally optimal solution found in $\mathbb{R}^d$ does not increase substantially when evaluated in the larger dimension. More formally, we prove the following theorem:

\begin{theorem}\label{cor:main}
Let $X \subset \mathbb{R}^m$ and $G$ be a random projection from $\mathbb{R}^m$ to $\mathbb{R}^d$ for $d = O(d_X \cdot  \log(1/\epsilon)/\epsilon^2)$. Let $\mathcal{F}_d$ be a locally optimal solution for the dataset $GX$. Then, the cost of $\mathcal{F}_d$ evaluated in $\mathbb{R}^m$, denoted as $\textup{cost}_m(\mathcal{F}_d)$, satisfies
$$ \E[\textup{cost}_m(\mathcal{F}_d)] \le |\mathcal{F}_d| + O\bigg(\sum_{p \in X} r_p\bigg) \le \textup{cost}_d(\F_d) + O(F),$$
where $F$ is the optimal facility location cost of $X$ in $\bbR^m$.
\end{theorem}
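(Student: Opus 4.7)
The plan is to reduce the theorem to a per-point expected distance bound. Writing $\cost_m(\F_d) = |\F_d| + \sum_{p \in X} D_m(p, \F_d)$ and using that $|\F_d|$ is unchanged by the choice of ambient dimension, taking expectations yields $\E[\cost_m(\F_d)] = \E[|\F_d|] + \sum_p \E[D_m(p, \F_d)]$. Hence the first inequality in the theorem reduces to the per-point bound $\E[D_m(p, \F_d)] = O(r_p)$, and the second inequality then follows from $|\F_d| \le \cost_d(\F_d)$ together with Lemma~\ref{lem:appx}.

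For the per-point bound, local optimality of $\F_d$ in $\mathbb{R}^d$ supplies some $f_p \in \F_d$ with $\|Gp - Gf_p\|_d \le 3\tilde{r}_p$, giving $D_m(p, \F_d) \le \|p - f_p\|_m$. I would split the expectation of $\|p - f_p\|_m$ by ``distance scale'': for $k \ge 0$ let $\mathcal{E}_k = \{D_m(p, \F_d) \in [2^k r_p, 2^{k+1} r_p)\}$, so that
\begin{equation*}
  \E[D_m(p, \F_d)] \;\le\; 2 r_p + \sum_{k \ge 1} 2^{k+1} r_p \cdot \Pr[\mathcal{E}_k].
\end{equation*}
The problem reduces to showing $\Pr[\mathcal{E}_k]$ decays in $k$ fast enough that $\sum_k 2^k \Pr[\mathcal{E}_k] = O(1)$.

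The main technical step is bounding $\Pr[\mathcal{E}_k]$. Note that $\mathcal{E}_k$ implies the existence of some $f \in X$ with $\|p - f\|_m \ge 2^k r_p$ and $\|G(p-f)\|_d \le 3\tilde{r}_p$. I would first restrict to the high-probability event $\{\tilde{r}_p \le C r_p\}$, obtained from the $\E[\tilde{r}_p] = O(r_p)$ bound of Subsection~\ref{subsec:approx} together with Markov's inequality; on this event we need to bound $\|G(p-f)\|_d \le 3C r_p$ for some far point $f$. I would handle this via a net argument driven by the doubling dimension: the annular shell $B(p, 2^{k+1} r_p) \setminus B(p, 2^k r_p)$ admits an $r_p$-net of size $\lambda_X^{O(k)} = 2^{O(k d_X)}$, and Lemma~\ref{lem:indyk} (applied at an appropriate scale) extends a bound from the net to all points in the shell, since it controls how far projections can expand small balls. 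For each net point $y$ at distance $\Omega(2^k r_p)$ from $p$, Gaussian anti-concentration bounds $\Pr[\|G(p-y)\|_d \le O(r_p)] \le (O(1)/2^k)^d$. A union bound over the $2^{O(k d_X)}$ net points gives $\Pr[\mathcal{E}_k] \le 2^{k(O(d_X) - d)}$, which decays geometrically once $d = \Omega(d_X)$; the additional factor $\log(1/\eps)/\eps^2$ in the theorem's dimension bound arises from tightening the constants (and the tail bound on $\tilde{r}_p$) to the precision the theorem requires.

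The step I expect to be the main obstacle is precisely this bound on ``false matches'': the probability that some $f \in X$ far from $p$ in $\mathbb{R}^m$ projects to within $3\tilde{r}_p$ of $Gp$ in $\mathbb{R}^d$. The difficulty is that $f_p$ is chosen based on $\F_d$, which is itself a function of the same projection $G$, so one cannot condition on $f_p$ in advance. The combination of a doubling-dimension net with Gaussian anti-concentration sidesteps this dependence, but making the numerics work out --- especially obtaining the precise exponent $d = O(d_X \log(1/\eps)/\eps^2)$ with a controlled constant in the $O(\sum r_p)$ term --- requires careful balancing of the tail on $\tilde{r}_p$, the net radii at each scale, and the per-pair anti-concentration exponents.
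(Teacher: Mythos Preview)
Your overall architecture matches the paper's: decompose $\cost_m(\F_d)$ into $|\F_d|$ plus connection costs, reduce to the per-point bound $\E[D_m(p,\F_d)] = O(r_p)$ via the local-optimality witness $f_p$ with $\|Gp-Gf_p\|\le 3\tilde r_p$, bucket by distance scale, and control ``false matches'' using a doubling-dimension net together with Gaussian anti-concentration. The final inequality via $|\F_d|\le \cost_d(\F_d)$ and Lemma~\ref{lem:appx} is exactly what the paper does.

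The step that does not go through is restricting to $\{\tilde r_p \le C r_p\}$ by Markov. Markov gives this event only constant probability, and on its complement you have no control whatsoever: a distant $f$ may still project within $3\tilde r_p$ of $Gp$, so $D_m(p,\F_d)$ can be as large as $\diam(X)$, which is unrelated to $r_p$ and destroys the expectation bound. You cannot discard this event; you must track the $\tilde r_p$ scale \emph{jointly} with the distance scale. The paper does exactly this: it first proves an exponential tail $\Pr(\tilde r_p \ge t\, r_p)\le \exp(-\Omega(d\,t^2))$ (large $\tilde r_p$ forces some point of $B(p,r_p)$ to expand by factor $\ge t/2$, ruled out by Lemma~\ref{lem:indyk}), then writes $\Pr(\mathcal E_i)=\sum_j \Pr(\mathcal E_i\cap\{\tilde r_p\approx t_j r_p\})$ and bounds each summand by whichever mechanism---$\tilde r_p$ being large, or a far point contracting---is less likely. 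The crossover occurs at $j\approx \sqrt{i}$ (points in the $i$th shell do not shrink below level $O(\sqrt i)$), and this, combined with a \emph{linear} bucketing $t_i = 1 + O(\epsilon i)$ rather than your geometric one and a two-regime split at $i=1/\epsilon^2$ (using \eqref{eq:Gxa} for small $i$ and \eqref{eq:Gxb} for large $i$), is precisely what produces the dimension $d=O(d_X\log(1/\epsilon)/\epsilon^2)$. Your geometric scheme, even once repaired with the exponential tail, would give $O(r_p)$ for $d=\Omega(d_X)$ but would not by itself recover this $\epsilon$-dependence.
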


To describe the proof intuition, first note that the cost function defined in Eq.\@ \eqref{eq:objective} has two components. One is the number of facilities opened, and the other is the connection cost. The first term is automatically preserved in the larger dimension since the number of facilities stays the same. Therefore, the main technical challenge is to show that if a facility is within distance $O(\tilde{r}_p)$ of a fixed point $p$ in $\mathbb{R}^d$ (note that $\tilde{r}_p$ is calculated according to Eq.\@ \eqref{eq:rdef} in $\mathbb{R}^d$), then the facility must be within distance $O(r_p)$ in $\mathbb{R}^m$, the larger dimension. From here, one can use Lemma \ref{lem:appx} to bound $\sum_{p \in X} r_p$ by $O(F)$, and the simple fact that $|\F_d| \le \cost_d(\F_d)$.

The proof of our main technical challenge relies on the careful balancing of the following two events. First, we control the value of the radius $\tilde{r}_p$ and show that $\tilde{r}_p \approx r_p$. In particular, we show that the probability of $\tilde{r}_p \ge k r_p$ for any constant $k$ is exponentially decreasing in $k$.
Next, we need to bound the probability that a `far' point comes `close' to $p$ after the dimensionality reduction. 
While there exists a known result on this (e.g., Theorem \ref{thm:indyk} in Appendix \ref{sec:app_prelim}),
we need a novel, more detailed result to quantify how close far points can come after the dimension reduction. 

To study this in a more refined manner, we bucket the points in $X \setminus \{p\}$ according to their distance from $p$, with the $i$th level representing distance approximately $i$ from $p$.
We show that points in $X\setminus \{p\}$ that are in `level' $i$ do not shrink to a `level' smaller than $O(\sqrt{i})$. Note that we need to control this even across all levels. To do this requires a chaining type argument which crucially depends on the doubling dimension of $X$. Finally, a careful combination of probabilities gives us our result.

The proof of Theorem \ref{cor:main} is deferred to Appendix \ref{sec:app_fl}.

\begin{remark}\label{rem:generalization}
Our proof of Theorem \ref{cor:main} generalizes to the case of \emph{arbitrary} opening costs $c_p$ by changing the definition of $r_p$ to be $ \sum_{q \in B(p, r_p)} (r_p - \|x-q\|) = c_p$.
\end{remark}

\subsection{Facility Location with Squared Costs}

Facility location problem with squared costs is the following variant of facility location. Given a dataset $X \subset \mathbb{R}^m$, our goal is to find a subset $\mathcal{F} \subseteq X$ that minimizes the objective
\begin{equation}
\cost(\mathcal{F}) =| \mathcal{F}| + \sum_{x \in X} \,  \min_{f \in \mathcal{F}}\|x-f\|^2. 
\end{equation}
In contrast to \eqref{eq:objective}, we are adding the \emph{squared} distance from each point to its nearest facility in $\mathcal{F}$, rather than just the distance. This is comparable to $k$-means, whereas standard facility location is comparable to $k$-medians.

For the facility location problem with squared costs, we are again able to show that a random projection of $X$ into $O(d_X)$ dimensions preserves the optimal cost up to an $O(1)$ factor, and that any locally optimal solution in the reduced dimension has its cost preserved in the original dimension. The formal statements and proofs are very similar to those of the standard facility location problem, and are deferred to Appendix \ref{sec:fl_squared}.

\section{Dimension Reduction for MST} \label{sec:mst}

In this section we demonstrate the effectiveness of dimensionality reduction for the minimum spanning tree (MST) problem. As in the case of facility location, we show that we can \emph{estimate} the cost of the optimum MST solution by computing the MST in a lower dimension, and that the minimum spanning tree in the lower dimension is an \emph{approximate} solution to the high-dimensional MST problem.

This time our approximations, both to the optimum cost and the optimum solution, can be $(1 + \epsilon)$-approximations for any $\epsilon > 0$, as opposed to the constant factor approximations that we could guarantee for facility location. To formally state our theorem, for some spanning tree $T$ of $X$, let $\cost_X(T)$ be the sum of the lengths of the edges in $T$. Likewise, let $\cost_{GX}(T)$ be the sum of the lengths of the edges in $T$, where distances are measured in the projected tree $GX$. Our main result is the following theorem:

\begin{theorem} \label{thm:MainMST}
    For some positive integers $m, d$, let $X \subset \mathbb{R}^m$ be a point set of size $n$ and let $G: \mathbb{R}^m \to \mathbb{R}^d$ be a random projection. Let $\mathcal{M}$ represent the minimum spanning tree of $X$, with $M = \cost_X(\mathcal{M})$ and $\widetilde{\mathcal{M}}$ represent the minimum spanning tree of $GX$, with $\widetilde{M} = \cost_{GX}(\widetilde{\mathcal{M}})$. Then, for some sufficiently large constant $C_6,$ if $d \ge C_6 \cdot \epsilon^{-2} \cdot (\log \epsilon^{-1} \cdot d_X + \log \log n)$, the following are true:
\begin{enumerate}
    \item The cost of the MST is preserved under projection with probability at least $\frac{9}{10}.$ In other words, $\widetilde{M} \in [1-\epsilon, 1+\epsilon] \cdot M$.
    \item The optimal projected MST $\widetilde{\mathcal{M}}$ is still an approximate MST in the original dimension with probability at least $\frac{9}{10}.$ In other words, $\cost_{X}(\widetilde{\mathcal{M}}) \in [1, 1+\epsilon] \cdot M$.
\end{enumerate}
\end{theorem}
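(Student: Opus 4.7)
The plan is to split the theorem into its ``expansion'' direction (the upper bounds $\widetilde{M} \le (1+\epsilon)M$ and $\cost_X(\widetilde{\mathcal{M}}) \le (1+\epsilon)M$) and its ``contraction'' direction (the lower bound $\widetilde{M} \ge (1-\epsilon)M$). After rescaling so that the shortest MST edge has unit length, all MST edges lie in a range of aspect ratio at most $M \le \poly(n)$, giving $O(\log n)$ dyadic length scales to consider.

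For the expansion direction, I would use that $G\mathcal{M}$ is a spanning tree of $GX$, so by optimality $\widetilde{M} \le \cost_{GX}(\mathcal{M}) = \sum_{e \in \mathcal{M}} \|Ge\|$. Each $\|Ge\|$ is chi-distributed with mean $(1-o(1))\|e\|$ and variance $O(\|e\|^2/d)$, so Chebyshev on this $(n-1)$-term sum gives $\cost_{GX}(\mathcal{M}) \le (1+\epsilon)M$ with probability $\ge 19/20$ already for $d \gtrsim \epsilon^{-2}$. Symmetric chi-squared concentration also gives $\|Ge\| \ge (1-\epsilon)\|e\|$ per MST edge, which is the input to the pullback bound once we also control contractions of edges in $\widetilde{\mathcal{M}}$.

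The real work is controlling how much the projection can contract distances --- both to forbid a cheap projected spanning tree (lower bound on $\widetilde{M}$) and to ensure each $\widetilde{\mathcal{M}}$-edge has roughly the same length in $X$ as in $GX$ (pullback). A naive JL union bound over $\binom{n}{2}$ pairs costs $\Omega(\epsilon^{-2}\log n)$ dimensions, which misses the $\log \log n$ target. The plan is a hierarchical net argument: at each dyadic scale $s = 2^i$, build a maximal $s$-separated net $N_s \subseteq X$, so that by the doubling-dimension packing lemma $|N_s \cap B(x, O(s/\epsilon))| \le (C/\epsilon)^{d_X}$ for every $x$. Each scale contributes at most $n \cdot (1/\epsilon)^{O(d_X)}$ ``local'' net pairs whose distances need to be preserved to relative error $\epsilon$, and chi-squared concentration for a single such pair requires $d \gtrsim \epsilon^{-2}(d_X \log(1/\epsilon) + \log n)$ after a per-scale union bound.

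The main obstacle --- and where the $\log \log n$ comes from rather than $\log n$ --- is the union bound across the $\Theta(\log n)$ scales. My plan is to allow each scale to fail with small constant probability and aggregate via a Bernstein-style bound: the cost charged to a failed scale $s$ is at most the total MST length at that scale, which is geometric in $s$ and absorbed into the $\epsilon M$ budget. Given simultaneous preservation at all scales, I would ``snap'' each edge of $\mathcal{M}$ (for the lower bound) or of $\widetilde{\mathcal{M}}$ (for the pullback) to a pair of nearby net points at its natural scale, losing only an $O(\epsilon)$ factor per edge by the net granularity, and comparing the snapped trees on net points using the preserved distances. Combined with the expansion bound this yields both $\widetilde{M} \ge (1-\epsilon)M$ and $\cost_X(\widetilde{\mathcal{M}}) \le (1+\epsilon)M$.
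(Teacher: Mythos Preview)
Your expansion-direction argument is correct and essentially matches the paper's: bound $\widetilde{M} \le \cost_{GX}(\mathcal{M})$ via the mean and variance of the chi distribution, then Markov/Chebyshev.

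The contraction direction has a genuine gap. Your net-snapping argument requires, at each scale $s$, preserving distances among the $n \cdot (1/\epsilon)^{O(d_X)}$ local net pairs to relative error~$\epsilon$. As you yourself compute, a union bound over these pairs already forces $d \gtrsim \epsilon^{-2}(d_X\log(1/\epsilon) + \log n)$ at a \emph{single} scale. That $\log n$ term comes from the $\sim n$ net points at the finest scale, not from the number of scales; your proposed fix --- allowing constant failure per scale and aggregating across scales --- addresses only the latter and cannot remove the former. There is also a secondary issue: the MST aspect ratio need not be $\poly(n)$ after rescaling (the longest MST edge can be arbitrarily larger than the shortest), so the number of dyadic scales is not a priori $O(\log n)$; the paper handles this by a separate geometric tail argument over very small scales.

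The paper's route is substantially different, and is where the $\log\log n$ actually arises. Rather than a high-probability union bound over pairs, it proves the \emph{expectation} bound $\E[\cost_X(\widetilde{\mathcal{M}}) - \cost_{GX}(\widetilde{\mathcal{M}})] \le O(\epsilon)M$ (Lemma~\ref{lem:MST}) and then derives both parts of the theorem via Markov on three telescoping differences. For the expectation bound, at each projected-length level $(1+\epsilon)^i$ it greedily partitions $X$ into $r$ balls of radius $t \approx \epsilon(1+\epsilon)^i$ and uses two structural facts: (i) the Steiner-tree bound $r t \le 2M$, so the number of balls is governed by $M$ rather than $n$; and (ii) any two (sub-)balls can contribute at most one edge to the MST $\widetilde{\mathcal{M}}$ (otherwise one could swap an edge and improve the tree), so the level's excess is bounded by a sum over ``bad'' ball-pairs weighted by their original distance. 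The expected contribution per level is then $O(rt)\cdot \lambda_X^{O(\log(1/\epsilon))}\cdot e^{-\Omega(d\epsilon^2)} \le O(M)\cdot e^{-\Omega(d\epsilon^2)}$; summing over $O(\epsilon^{-1}\log n)$ levels and demanding the total be at most $\epsilon M$ forces $e^{-\Omega(d\epsilon^2)} \lesssim \epsilon^2/\log n$, i.e.\ $d \gtrsim \epsilon^{-2}(\log\log n + d_X\log(1/\epsilon))$. The key idea missing from your plan is the at-most-one-MST-edge-per-ball-pair argument, which converts the per-level cost from something proportional to $n$ into something proportional to $M$.
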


Hence, we obtain a significantly stronger theoretical guarantee for preserving the MST than $d = \Theta(\epsilon^{-2} \log n)$, which is promised by the Johnson-Lindenstrauss Lemma, assuming that $d_X$ and $\epsilon^{-1}$ are constant or very small.

Our main technical result in establishing Theorem \ref{thm:MainMST} is the following crucial lemma, which will in fact allow us to prove both parts of the above theorem simultaneously.

\begin{lemma} \label{lem:MST}
    For all notation as in Theorem \ref{thm:MainMST}, $\E[\cost_{X}(\widetilde{\mathcal{M}}) - \cost_{GX}(\widetilde{\mathcal{M}})] \le O(\epsilon) \cdot M.$
\end{lemma}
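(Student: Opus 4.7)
We aim to bound $\mathbb{E}\bigl[\sum_{(u,v)\in\widetilde{\mathcal{M}}}\bigl(\|u-v\|-\|Gu-Gv\|\bigr)\bigr]$, i.e., the expected ``shrinkage'' of the projected MST measured in the original metric. A naive argument that fixes $\widetilde{\mathcal{M}}$ and applies Johnson--Lindenstrauss with a union bound over all $\binom{n}{2}$ pairs of $X$ would cost $d=\Omega(\epsilon^{-2}\log n)$, exceeding the bound in Theorem \ref{thm:MainMST}. The doubling dimension of $X$ must therefore enter the analysis in order to replace the $\log n$ by $d_X\log(1/\epsilon)+\log\log n$.

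My plan is a scale decomposition. Partition $\widetilde{\mathcal{M}}$ into dyadic buckets $B_k=\{(u,v)\in\widetilde{\mathcal{M}}:\|Gu-Gv\|\in[2^k,2^{k+1})\}$; after rescaling so that non-negligible pairwise distances in $X$ lie in $[1,\poly(n)]$, only $O(\log n)$ scales $k$ contribute. For each scale $s=2^k$, the per-scale goal is: with probability $\ge 1-O(1/\log n)$ over $G$, every edge $(u,v)\in B_k$ satisfies $\|u-v\|_X\le(1+O(\epsilon))\cdot 2s$. Granting this at all scales, the shrinkage per edge is at most $O(\epsilon)\|Gu-Gv\|$, so the total is at most $O(\epsilon)\cdot\cost_{GX}(\widetilde{\mathcal{M}})\le O(\epsilon)\cdot\cost_{GX}(\mathcal{M})\le O(\epsilon) M$ in expectation (the last step uses $\E[\cost_{GX}(\mathcal{M})]\le M$, which follows from $\E[\|Gu-Gv\|]\le\|u-v\|$ by Jensen's inequality).

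To establish the per-scale target at scale $s=2^k$, I would build an $(\epsilon s)$-net $N_s$ of $X$: by doubling dimension, any ball of radius $O(s)$ in $X$ contains at most $(1/\epsilon)^{O(d_X)}$ net points. Any pair $(u,v)$ at original distance $\Theta(s)$ can be snapped to a pair of nearby net points, and applying the Johnson--Lindenstrauss concentration inequalities (Lemma \ref{lem:indyk} together with its lower-tail counterpart from Appendix \ref{sec:app_prelim}) to each admissible net pair, each fails with probability $\exp(-\Omega(\epsilon^2 d))$. Longer-range pairs with $\|u-v\|\in[2^i s,2^{i+1}s)$ for $i\ge 1$ must also be controlled: by Gaussian lower-tail concentration their projections land in $[s,2s)$ with probability $\le\exp(-\Omega(id))$, while doubling dimension bounds the number of such pairs incident to any given point by $2^{O(d_X i)}$, so a chaining-style sum over $i$ keeps their expected contribution $O(\epsilon)M$. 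Union-bounding over $O(\log n)$ scales with per-scale failure probability $O(1/\log n)$ produces the additive $\log\log n$ in $d$, while the per-scale net contributes the $d_X\log(1/\epsilon)$.

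The hard part will be ensuring that the per-scale candidate count is genuinely $(1/\epsilon)^{O(d_X)}$ rather than $n\cdot(1/\epsilon)^{O(d_X)}$, since snapping \emph{both} endpoints of an MST edge to net points could naively introduce an $n$ factor. Overcoming this seems to require leveraging the MST's doubling-dimension packing structure (each point has only $2^{O(d_X)}$ MST neighbors within any constant factor of distance) together with the fact that we are bounding an expectation, so that low-probability long-range pairs can be absorbed into the chaining tail bound rather than appearing in a worst-case union bound. Carefully balancing these two contributions---the ``near'' net-pair JL bound at scale $s$ and the ``far'' chained tail bound over $i$---is the core technical challenge of the argument.
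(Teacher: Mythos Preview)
Your overall architecture---scale decomposition, nets at each scale, chaining over distance levels---is broadly the same shape as the paper's proof. But the step you flag as the ``hard part'' is exactly where your argument has a real gap, and your suggested fix does not work.

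You propose to avoid the factor of $n$ by invoking ``the MST's doubling-dimension packing structure (each point has only $2^{O(d_X)}$ MST neighbors within any constant factor of distance).'' The problem is that $\widetilde{\mathcal{M}}$ is the MST of $GX$, not of $X$. Its degree/packing structure is governed by the geometry of $GX$, whose doubling dimension you do not control by $d_X$; after projection many points can collapse together, and the ambient bound $2^{O(d)}$ is far too weak since $d$ already has the $\epsilon^{-2}$ and $\log\log n$ factors in it. So this route does not give the $(1/\epsilon)^{O(d_X)}$ count you need. Relatedly, your per-scale target ``with probability $\ge 1-O(1/\log n)$, \emph{every} edge in $B_k$ satisfies $\|u-v\|\le(1+O(\epsilon))\cdot 2s$'' is too strong: the paper does not (and probably cannot) prevent all shrinkage, it only bounds the \emph{number} of shrinking edges in expectation.

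The paper's resolution of exactly this difficulty is the idea your proposal is missing. At scale $s$, partition $X$ \emph{deterministically} (in the original metric) into pieces $X_1,\dots,X_r$ of radius $\approx \epsilon s$, and then refine each $X_p$ (now depending on $G$) into sub-pieces $X_{p,q}$ that are also of diameter $O(\epsilon s)$ in the \emph{projected} metric, using Lemma~\ref{lem:indyk} to control the number $s_p$ of sub-pieces. The key structural observation is purely combinatorial: between any two sub-pieces $X_{p,q}$ and $X_{p',q'}$, the MST $\widetilde{\mathcal{M}}$ can contain \emph{at most one} edge of length $\approx s$ in $GX$, since a second such edge could be swapped for a short intra-piece edge, contradicting optimality of $\widetilde{\mathcal{M}}$. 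This converts ``count bad pairs of points'' into ``count bad pairs of balls times $s_p s_{p'}$,'' and the ball count \emph{is} controlled by $d_X$ because the $X_p$ are defined in the original metric. Summing $\mathbb{E}[s_p s_{p'}\cdot \mathbb{I}(\text{pair is bad})]\cdot D(X_p,X_{p'})$ over pairs and then over scales gives the $O(\epsilon)M$ bound. Your chaining-over-$i$ instinct is right and appears in the paper as a sum over $j$ (distance levels $B_{i,j}$), but it is layered on top of this one-edge-per-ball-pair argument, not in place of it.
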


The proof strategy for Lemma \ref{lem:MST} involves first dividing the edges of $\widetilde{\mathcal{M}}$ into levels based on their lengths, and bounding the difference between edge lengths (pre- and post- projection) in each level separately. To analyze a level consisting of the edges of length approximately $t$, we first partition the point set $X$ (in the original dimension $\mathbb{R}^m$) into balls $B_1, \dots, B_r$ of radius $c \cdot t$ for a small constant $c$, and show via chaining-type arguments that not too many pairs of balls that were originally far apart come close together after the random projection. Moreover, using Lemma \ref{lem:indyk}, we show that almost all of the balls do not expand by much.

Therefore, there are not many \emph{bad} pairs of balls $(B_i, B_j)$, where $(B_i, B_j)$ is bad if there exists $p \in B_i, q \in B_j$ where $\|p-q\|$ is much bigger than $t$ but $\|Gp-Gq\|$ is approximately $t$. Now, assuming that none of the balls expand by much in the random projection, for any bad pair $(B_i, B_j)$ and edges $(p, q)$ and $(p', q')$ with $p, p' \in B_i$ and $q, q' \in B_j,$ we cannot have both edges in the minimum spanning tree of $GX$. This is because $\|Gp-Gq\|, \|Gp'-Gq'\| \approx t,$ but since $B_i$ and $B_j$ have radius $c \cdot t$ and do not expand by much, we can improve the spanning tree by replacing $(Gp, Gq)$ with either $(Gp, Gp')$ or $(Gq, Gq')$. So, each bad pair can have at most $1$ edge in $\widetilde{\mathcal{M}}$, the MST of $GX$. Overall, in each level, not too many edges in $\widetilde{\mathcal{M}}$ can shrink by much after the projection.

The full proofs of Lemma \ref{lem:MST} and Theorem \ref{thm:MainMST} are given in Appendix \ref{sec:app_mst}.

\section{Lower Bounds for Projection Dimension} \label{sec:lower}

In this section, we state various lower bounds for the projection dimension $d$ for both facility location clustering and minimum spanning tree. We also show that, in contrast to facility location, low doubling dimension does not actually help with dimensionality reduction for $k$-means or $k$-medians clustering.
All proofs are deferred to Appendix \ref{sec:app_lb}.

In all results of this section, we think of $X$ as a point set of size $n$ in Euclidean space $\mathbb{R}^m$, and $G$ as a random projection sending $X$ to $GX \subset \mathbb{R}^d$. In this section, for FL, we always let $\F$ be the optimal set of facilities in $X$, with cost $F$, and $\widetilde{\F}$ be the optimal set of facilities in $GX$, with cost $\widetilde{F}$. We define $\mathcal{M}, M, \widetilde{\mathcal{M}}, \widetilde{M}$ analogously for MST. We use $o(1)$ to denote functions going to $0$ as $n \to \infty,$ and $\omega(1)$ to denote functions going to $\infty$ as $n \to \infty$, where $n = |X|$.

First, we show that the dependence of the projected dimension $d$ on the doubling dimension $d_X$ in Theorems
\ref{thm:constant}, \ref{cor:main}, and \ref{thm:MainMST}
are all required to obtain constant factor approximations for either the cost or the pullback solution. Namely, we show the following three theorems:

\begin{theorem}[FL] \label{thm:lb_fac}
Let $d = o(\log n)$. There exists $X$ with doubling dimension $\Theta(\log n)$, such that with at least $2/3$ probability over $G: \mathbb{R}^m \to \mathbb{R}^d$, $\widetilde{F} = o(1) \cdot F$. Moreover, with probability at least $2/3,$ $\widetilde{\F}$, when pulled back to $X$, has cost $\omega(1) \cdot F$ in the original dimension.
\end{theorem}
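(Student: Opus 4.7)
I would take $X = \alpha \cdot \{e_1, \dots, e_n\} \subset \mathbb{R}^n$ to be a scaled orthogonal basis, with the scale $\alpha := n^{1/(2d)}$. Because $X$ is equidistant (all pairs at distance $\alpha\sqrt{2}$), a ball of radius $\alpha\sqrt{2}$ centered at any $x \in X$ contains all of $X$, yet any ball of radius $<\alpha\sqrt{2}$ centered in $X$ contains exactly one point. Covering the former by the latter therefore requires $n$ balls, so the doubling constant of $X$ is exactly $n$ and $d_X = \log_2 n$, satisfying the $\Theta(\log n)$ hypothesis. In the original dimension, opening $k$ facilities gives cost $k + (n-k)\alpha\sqrt{2} \ge n$ (since $\alpha\sqrt{2} \ge 1$), with equality at $k = n$, so $F = n$.

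For the first assertion ($\widetilde{F} = o(F)$), I would upper bound $\widetilde{F}$ by explicitly constructing a good solution in $\mathbb{R}^d$. The projected points $\alpha G e_i$ are i.i.d.\ draws from $\mathcal{N}(0, (\alpha^2/d) I_d)$ in $\mathbb{R}^d$, so by Gaussian concentration the typical point lies in a ball $B(0, O(\alpha)) \subset \mathbb{R}^d$. Covering that ball by $N = (O(\alpha)/\rho)^d$ Euclidean balls of radius $\rho$ and opening one facility in $GX$ per non-empty cell gives a feasible solution of cost at most $N + 2n\rho$; optimizing $\rho$ yields $O((n\alpha)^{d/(d+1)}) = O(n^{1 - 1/(2(d+1))})$. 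To handle projected points that escape the bulk ball, I would use a shell decomposition $R_k = 2^k \alpha$: by the Gaussian tail, the expected number of projected points in the $k$-th shell decays as $n \cdot e^{-\Omega(4^k d)}$, so summing the per-shell covering cost yields a geometric series dominated by the bulk term. For $d = o(\log n)$, this produces a feasible solution with cost $o(n)$ in expectation, and a Markov argument upgrades this to a bound holding with probability at least $2/3$.

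For the second assertion, I use that every feasible solution $\mathcal{F}$ satisfies $|\mathcal{F}| \le \textup{cost}(\mathcal{F})$, so conditional on the first assertion we have $|\widetilde{\F}| \le \widetilde{F} = o(n)$. Since $X$ is equidistant in $\mathbb{R}^m$ with pairwise distance $\alpha\sqrt{2}$, the pullback cost is
$$\textup{cost}_m(\widetilde{\F}) \;=\; |\widetilde{\F}| + (n - |\widetilde{\F}|)\,\alpha\sqrt{2} \;\ge\; \tfrac{n \alpha}{\sqrt{2}}$$
for large $n$. Dividing by $F = n$ yields a pullback-to-original ratio of $\Omega(\alpha) = \Omega(n^{1/(2d)})$, which is $\omega(1)$ whenever $d = o(\log n)$. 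Since both assertions follow from the single event $\widetilde{F} = o(n)$, they simultaneously hold with probability at least $2/3$.

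The principal obstacle in this plan is the low-probability regime of Gaussian concentration when $d$ is much smaller than $\log n$: a naive union bound over all $n$ projected points is not available, so the covering argument for the first assertion must be coupled with the shell decomposition sketched above to control the tail contribution without requiring that \emph{every} projected point lies inside the bulk ball. Once the first assertion is established, the pullback lower bound is essentially immediate from the equidistant geometry of $X$.
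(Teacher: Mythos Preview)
Your approach is correct and takes a genuinely different route from the paper. Both proofs use the same example $X = \{\text{scaled orthogonal basis}\}$ and derive the second assertion from the first identically (few facilities $\Rightarrow$ large pullback cost on an equidistant set). The divergence is in bounding $\widetilde{F}$. The paper works \emph{intrinsically} through the Mettu--Plaxton radii: with scale $R = (\log n/(10d))^{1/4}$, it uses a lower bound on Gaussian small-ball probability (Lemma~\ref{Nearby}) plus Chernoff to show that for all but $O(m/C)$ points $p$ there are $\ge R$ points of $GX$ within $1/R$ of $Gp$, hence $\tilde r_p \le 2/R$, and then invokes Lemma~\ref{lem:appx}. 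You work \emph{extrinsically}, exhibiting a feasible solution in $\mathbb{R}^d$ by covering the bulk of $GX$ with $\rho$-balls and opening one facility per non-empty cell. Your argument is more elementary (no MP machinery or Lemma~\ref{lem:appx}) and, with your larger scale $\alpha = n^{1/(2d)}$, gives a sharper pullback ratio $\Omega(n^{1/(2d)})$ versus the paper's $\Omega((\log n/d)^{1/4})$.

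Two places to tighten. First, your ``$O((n\alpha)^{d/(d+1)})$'' silently absorbs a factor $C_0^{d/(d+1)}$ coming from the covering constant, which is not $O(1)$ when $d \to \infty$; this is harmless because $d = o(\log n)$ forces $C_0^{d} = n^{o(1)}$, so the $o(n)$ conclusion survives, but it should be said. Second, your shell-decomposition sentence ``summing the per-shell covering cost yields a geometric series dominated by the bulk term'' is not literally right: covering numbers $(2^k\alpha/\rho)^d$ \emph{increase} with $k$. The working fix is what you implicitly do by opening one facility per \emph{non-empty} cell: for outer shells use the bound $n_k = n\,e^{-\Omega(4^k d)}$ on the number of points rather than the covering number, after choosing the bulk radius $K\alpha$ with $K \to \infty$ slowly enough that both $n\,e^{-\Omega(dK^2)} = o(n)$ and $(C_0 K)^d = n^{o(1)}$. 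You correctly flag this as the main technical point; just make the splitting explicit.
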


\begin{theorem}[MST] \label{thm:lb_mst_cost}
    Let $d = o(\log n).$ There exists $X$ with doubling dimension $\Theta(\log n)$, such that with probability at least $2/3,$ $\widetilde{M} = o(1) \cdot M$.
\end{theorem}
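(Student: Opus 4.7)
The plan is to exhibit a specific point set $X$ whose MST cost collapses under any random projection into $d = o(\log n)$ dimensions. Take $X = \{e_1, \ldots, e_n\} \subset \mathbb{R}^n$, the standard basis. All pairwise distances equal $\sqrt{2}$, so every spanning tree has cost $M = (n-1)\sqrt{2} = \Theta(n)$. For the doubling dimension, fix any $r \in [\sqrt{2}, 2\sqrt{2})$: the ball $B(e_i, r)$ equals all of $X$, while any ball of radius $r/2 < \sqrt{2}$ in $X$ contains at most one point, so $n$ half-radius balls are needed to cover it, giving $\lambda_X = n$ and $d_X = \Theta(\log n)$.

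The key observation is the distribution of the projected points. Since $G$ has i.i.d.\@ $\mathcal{N}(0, 1/d)$ entries, each $Ge_i$ is exactly the $i$-th column of $G$, and different columns are independent. Thus $\{Ge_i\}_{i \in [n]}$ is a set of $n$ i.i.d.\@ samples from $\mathcal{N}(0, \tfrac{1}{d} I_d)$ in $\mathbb{R}^d$, and $\widetilde{M}$ is the MST cost of this random point set.

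The main technical step is to show $\E[\widetilde{M}] = O(n^{1-1/d})$ with a constant independent of $d$. One clean route: order the points in a uniformly random permutation and use $\widetilde{M} \le \sum_{i=2}^n \min_{j < i} \|Ge_i - Ge_j\|$. Conditioning on $Ge_i = x$ and using the density $f(x) = (d/(2\pi))^{d/2} e^{-d\|x\|^2/2}$, a standard volume computation gives $\E[\min_{j < i}\|Ge_j - x\| \mid Ge_i = x] = \Theta\bigl(((i-1) f(x) V_d)^{-1/d}\bigr)$, where $V_d$ is the unit ball volume in $\mathbb{R}^d$. Taking expectation over $x$ introduces the factor $\int f^{1-1/d}\,dx = \Theta(1/\sqrt{d})$, which cancels exactly with $V_d^{-1/d} = \Theta(\sqrt{d})$, yielding $\E[\min_{j<i}\|Ge_i - Ge_j\|] = O((i-1)^{-1/d})$ and hence $\E[\widetilde{M}] = O(n^{1-1/d})$ with an absolute constant. (The same bound can also be obtained from the Beardwood--Halton--Hammersley theorem applied to the Gaussian density.)

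Applying Markov's inequality, $\widetilde{M} \le 3\E[\widetilde{M}] = O(n^{1-1/d})$ with probability at least $2/3$. Since $d = o(\log n)$, we have $n^{-1/d} = e^{-\log n / d} \to 0$, so $\widetilde{M}/M = O(n^{-1/d}) = o(1)$ on this event, as required. The main obstacle is ensuring the $O(n^{1-1/d})$ MST bound holds with a constant that does not blow up as $d$ grows (potentially as fast as $\log n / \log\log n$): a crude grid-based argument carries $d$-dependent factors like $\poly(d)$ or $\sqrt{d}^{d}$ that could dominate $n^{-1/d}$ in this extreme regime, so the dimensional cancellation between $V_d^{-1/d}$ and $\int f^{1-1/d}\,dx$ above must be tracked carefully rather than invoking asymptotic BHH as a black box.
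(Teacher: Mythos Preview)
Your construction and overall strategy are sound, but the route is genuinely different from the paper's. The paper also takes (essentially) the standard basis, but rather than estimating expected nearest-neighbor distances, it argues structurally: for any pair $Ge_i, Ge_j$ with $\|Ge_i\|,\|Ge_j\|\le C$ and $\|Ge_i-Ge_j\|\ge 4/C$ (where $C=\sqrt{\log n/(10d)}$), it uses Lemma~\ref{Nearby} to find, with probability $1-n^{-10}$, some third column $Ge_k$ within $1/C$ of their midpoint, so the long edge cannot lie in the MST. This forces every MST edge among the bulk points to have length $\le 4/C$, and a direct construction of a spanning tree (bulk edges, one edge to the origin, and edges from outliers to the origin) gives $\widetilde{M}\le 10m/C$ with probability $\ge 2/3$. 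The argument is entirely elementary and uses only the two probabilistic inputs already developed in the paper.

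Your approach has the appeal of yielding a sharper rate, $\widetilde{M}/M = O(n^{-1/d}) = O(e^{-(\log n)/d})$, versus the paper's $O(\sqrt{d/\log n})$, and it isolates the phenomenon cleanly: the projected points are an i.i.d.\ Gaussian sample, and one is really just bounding the MST of such a sample. The cost is exactly the issue you flag. The heuristic $\E[\min_{j<i}\|Ge_j-x\|\mid Ge_i=x]\asymp ((i-1)f(x)V_d)^{-1/d}$ needs the density to be essentially constant on the nearest-neighbor scale around $x$; this fails for $x$ in the Gaussian tail, and the tail contribution must be shown to be negligible uniformly in $d$. Your claimed cancellation $V_d^{-1/d}\int f^{1-1/d}\,dx=\Theta(1)$ is correct, and a rigorous version can be obtained by splitting on $\|Ge_i\|\le C_0$ for a fixed constant $C_0$ and lower-bounding $\Pr(\|Ge_1-x\|\le r)$ by $V_d r^d\cdot\min_{B(x,r)}f$ there, which leads to a bound of the form $\Gamma(1+1/d)\cdot e^{O(1)}\cdot (i-1)^{-1/d}$; the tail $\|Ge_i\|>C_0$ then needs a separate argument rather than a crude probability bound (since for small $d$ that probability is not small enough on its own). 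So the proposal is correct as a plan, but the step you call ``a standard volume computation'' is really where all the work lives, and the paper's combinatorial route sidesteps it entirely.
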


\begin{theorem}[MST] \label{thm:lb_mst_pullback}
    Let $d = o(\log n).$ There exists $X$ with doubling dimension $\Theta(\log n)$, such that with probability at least $2/3,$ $\widetilde{\mathcal{M}},$ when pulled back to $X$, will have cost $\omega(1) \cdot M$.
\end{theorem}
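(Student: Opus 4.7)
The plan is to construct a specific $X \subseteq \mathbb{R}^m$ of size $n$ with doubling dimension $\Theta(\log n)$ and then carefully analyze the cost of the pulled-back projected MST. The natural starting candidate is the ``star'' $X_0 = \{0\} \cup \{e_1, \ldots, e_{n-1}\}$ used also for Theorem~\ref{thm:lb_mst_cost}: it has $d_{X_0} = \Theta(\log n)$ and $M_0 = n - 1$, and every pair of points is at distance either $1$ (origin--basis) or $\sqrt 2$ (basis--basis). Consequently, the pullback of any spanning tree $T$ of $G X_0$ equals $\deg_T(0) + (n - 1 - \deg_T(0)) \sqrt 2 \in [n-1,\, \sqrt 2\,(n-1)]$, so this construction is fundamentally limited to a $\sqrt 2$ ratio, which is only $\Omega(1)$ and not $\omega(1)$.

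To amplify the blow-up to $\omega(1)$, I would design a multi-scale variant of $X_0$. Let $k = k(n) \to \infty$ slowly (for instance $k = \log \log n$) and, in mutually orthogonal subspaces of $\mathbb{R}^m$, place $k$ scaled copies of the star construction at geometrically shrinking scales $1, \epsilon, \epsilon^2, \ldots, \epsilon^{k-1}$ with balanced point counts $n_1, \ldots, n_k$ summing to $n$. The doubling dimension remains $\Theta(\log n)$ because the basis-vector structure recurs at every scale, and $M$ decomposes as $M \approx \sum_{i=1}^{k} (n_i - 1) \epsilon^{i-1}$. At each scale, I would apply the Kruskal-style argument from the single-scale case---combining Lemma~\ref{lem:indyk} with chi-square lower-tail bounds to show that the minimum projected distance among $\Theta(n_i^2)$ pairs at scale $i$ is $o(\epsilon^{i-1})$---to conclude that with constant probability the local star is scrambled, inflating its contribution to the pullback by a $\sqrt 2$ factor relative to its contribution to $M$.

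The main obstacle is to make the per-scale inflations genuinely compound to $\omega(1)$ rather than to a constant: a naive additive combination over scales gives an overall pullback ratio of only $\sqrt 2$. The compounding has to come from cross-scale interactions---for instance, by arranging the construction so that the projected MST is forced to ``route'' through multiple scales, with each routing step contributing an independent distortion factor. This likely requires a chaining-type analysis analogous to (but inverted from) the one in the proof of Lemma~\ref{lem:MST}, exhibiting a simultaneous failure of concentration at $\omega(1)$ nested scales with constant probability, so that the pulled-back edges of $\widetilde{\mathcal M}$ are long in the original metric at every level. Formalizing this multi-scale joint failure argument, and balancing $\epsilon$, $k$, and the $n_i$ so that no one scale dominates either $M$ or the pullback, is the technical heart of the proof.
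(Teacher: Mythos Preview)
Your diagnosis of the simple star $\{0,e_1,\ldots,e_{n-1}\}$ is correct: any spanning tree pulls back to cost at most $\sqrt 2\,(n-1)$, so the ratio is stuck at $\sqrt 2$. But the multi-scale amplification you propose does not escape this bound, and you essentially say so yourself. If you place $k$ disjoint scaled stars in orthogonal subspaces, the pullback cost of \emph{any} spanning tree is additive over edges, and each edge lies either within one copy (where the pullback-to-original ratio is at most $\sqrt 2$) or between copies (where the ratio is bounded by basic JL-type estimates). No amount of ``routing through multiple scales'' multiplies these per-edge ratios; the tree cost is a sum, not a product. The chaining/inversion idea you sketch has no mechanism for turning additive contributions into a super-constant multiplicative blow-up, so the proposal has a genuine gap at exactly the point you flag as ``the technical heart.''

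The paper's construction achieves $\omega(1)$ at a \emph{single} scale by a different and much simpler idea: instead of putting one point on each coordinate axis, put a path of $C := \sqrt{(\log n)/(10d)} = \omega(1)$ equally spaced points $\{e_i\cdot k/C : 1\le k\le C\}$ on each of $m$ axes (plus the origin), so $n = Cm+1$. The true MST walks along each ray with edges of length $1/C$, giving $M = m$. Under the random projection, a constant fraction of indices $i$ have some $j\neq i$ with $\|Ge_i - Ge_j\| \le \frac{1}{100C}$ (Lemma~\ref{Nearby}), while $\|Ge_i\|\ge 1/10$. For such $i$, the nearest neighbor of $G(e_i\cdot k/C)$ in $GX$ lies on a \emph{different} ray $j$, and since the MST must contain every nearest-neighbor edge, $\widetilde{\mathcal M}$ is forced to include, for each $k=1,\ldots,C$, a cross-ray edge whose original length is at least $k/C$. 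Summing over $k$ gives a contribution of $\Theta(C)$ per such ray, hence pullback cost $\Omega(C\cdot m) = \omega(M)$. The amplification comes not from nesting scales but from subdividing each ray so that a single collision $Ge_i\approx Ge_j$ corrupts $C$ edges at once, each individually cheap in the true MST but $\Theta(1)$ in the pullback.
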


Next, we show that (local) optimality is required for Theorems \ref{cor:main} and \ref{thm:MainMST}, and cannot be replaced with \emph{approximate} optimality. In other words, random projections to $o(\log n)$ dimensions do not necessarily preserve the set of approximate solutions for either facility location or MST, even for point sets of low doubling dimension. Namely, we show 
the following two lemmas:

\begin{lemma}[FL] \label{lem:opt_nec}
    Let $d = o(\log n)$. There exists $X$ with constant doubling dimension, such that with at least $2/3$ probability, there exists a $(1+O(m^{-1/2d})) = (1+o(1))$-approximate solution $\F'$ for $GX$ whose total cost when pulled back to $X$ is at least $\Omega(m^{1/2d}) \cdot F = \omega(1) \cdot F$.
\end{lemma}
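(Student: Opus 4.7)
The plan is to exhibit an explicit point set $X \subset \mathbb{R}^m$ (with $m$ polynomially related to $n$, say $m = \Theta(n^2)$) of constant doubling dimension, together with a specific projected solution $\F'$ that simultaneously has cost $(1+O(m^{-1/(2d)})) \cdot \widetilde{F}$ in $\mathbb{R}^d$ but pullback cost $\Omega(m^{1/(2d)}) \cdot F$ in $\mathbb{R}^m$. The underlying idea is that a random $G: \mathbb{R}^m \to \mathbb{R}^d$ with $d = o(\log n)$ cannot preserve all $\binom{n}{2}$ pairwise distances: for a well-chosen $X$, many pairs $(p,q) \in X \times X$ will see their distance shrink by a factor $\approx m^{1/(2d)}$ under $G$, and each such ``collided'' pair will let $\F'$ deviate from the optimal assignment at negligible cost in $\mathbb{R}^d$ but large cost when pulled back.

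Concretely, I would design $X$ so that its optimum $\F$ in $\mathbb{R}^m$ consists of $\Theta(n)$ facilities, with the opening cost contributing most of $F$ and the pairwise distances in $X$ concentrated around some scale $\Delta = \Theta(1)$. By Theorem~\ref{thm:constant}, the projected optimum cost satisfies $\widetilde{F} = \Theta(F)$. To construct $\F'$, I start from $\widetilde{\F}$ and greedily delete every facility $q$ for which some surviving $p \in \F'$ satisfies $\|Gp - Gq\| \le m^{-1/(2d)}$, reassigning points originally served by $q$ to $p$. In $\mathbb{R}^d$, each deletion saves one unit of opening cost and adds at most $m^{-1/(2d)}$ of connection cost per reassigned point, so $\cost_d(\F') \le \cost_d(\widetilde{\F}) + O(m^{-1/(2d)}) \cdot n$, a $(1+O(m^{-1/(2d)}))$-factor since $\widetilde{F} = \Theta(n)$. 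In $\mathbb{R}^m$ the same reassignment pays $\|p-q\| = \Omega(\Delta) = \Omega(1)$ per relocated point, and arranging for $\Omega(n)$ deletions yields $\cost_m(\F') = \Omega(n) = \Omega(m^{1/(2d)}) \cdot F$ once $F$ is appropriately scaled.

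The main obstacle is producing the required $\Omega(n)$ collided pairs under $G$ with constant probability while simultaneously keeping $d_X = O(1)$. Classical high-dimensional configurations (orthogonal bases, simplices) have $d_X = \Omega(\log n)$ and are forbidden, while a truly low-dimensional manifold in $\mathbb{R}^m$ would have its structure preserved by $G$ and therefore admit no useful collisions. I envision resolving this by placing $X$ as an appropriately spaced set on a constant-dimensional grid inside a fixed $d_X$-flat of $\mathbb{R}^m$, and then invoking a volume/pigeonhole argument in $\mathbb{R}^d$: since $GX$ must sit in a bounded region of $\mathbb{R}^d$, at least $\Omega(n)$ pairs of points in $GX$ lie within the packing radius $\Theta(n^{-1/d})$, which for $m = \Theta(n^2)$ matches $\Theta(m^{-1/(2d)})$. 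Lemma~\ref{lem:indyk} would then be used to control the complementary rare event that balls in $X$ expand (rather than contract) under $G$, so that the whole construction succeeds with probability at least $2/3$.
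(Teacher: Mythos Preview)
Your proposal has a genuine gap at its core. You correctly observe that ``a truly low-dimensional manifold in $\mathbb{R}^m$ would have its structure preserved by $G$,'' but then you propose exactly that: placing $X$ on a grid inside a fixed $O(1)$-dimensional flat of $\mathbb{R}^m$. A Gaussian projection restricted to a $k$-flat is just a random linear map $\mathbb{R}^k \to \mathbb{R}^d$; for $d \ge k$ (and here $k = d_X = O(1)$) it is almost surely a bi-Lipschitz isomorphism, and in fact a near-isometry with probability $1-o(1)$. Consequently $GX$ again sits in a $k$-dimensional flat of $\mathbb{R}^d$ with all pairwise distances preserved up to a constant, so the $d$-dimensional pigeonhole/packing argument does not apply and no collisions at scale $m^{-1/(2d)}$ occur. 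Separately, the arithmetic does not close: with $F = \Theta(n)$, $\Omega(n)$ deletions, and each reassignment costing $\Omega(\Delta) = \Omega(1)$ in $\mathbb{R}^m$, you get pullback cost $\Omega(n) = \Theta(F)$, not $\Omega(m^{1/(2d)}) \cdot F$; ``scaling $F$ appropriately'' is not available since $F$ is already pinned at $\Theta(n)$ by your setup.

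The paper's construction addresses both issues with a single idea: take the ``walk'' set $b_i = e_1 + \cdots + e_i$ for $1 \le i \le m$, scaled by $m^{1+1/(2d)}$. This set has constant doubling dimension (it is locally one-dimensional) yet does \emph{not} lie in any low-dimensional flat; crucially the consecutive differences $b_{i+1} - b_i = e_{i+1}$ are mutually orthogonal, so their images under $G$ are independent. With probability $\ge 9/10$ some consecutive pair shrinks by a factor $\Theta(m^{1/d})$, and with probability $\ge 9/10$ no pair shrinks by more than that order. Then removing a \emph{single} facility $b_{i^*}$ from the all-facilities optimum yields cost $m - 1 + O(m^{1-1/(2d)}) = (1 + O(m^{-1/(2d)})) \cdot m$ in the projected space, while in $\mathbb{R}^m$ reconnecting $b_{i^*}$ costs $m^{1+1/(2d)}$, giving pullback cost $\Omega(m^{1/(2d)}) \cdot F$. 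The key missing ingredient in your plan is a set that is intrinsically low-dimensional but genuinely uses all $m$ ambient coordinates so that independent shrinkage events can occur.
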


\begin{lemma}[MST] \label{lem:opt_nec_mst}
    Let $d = o(\log n)$ but $d = \omega(\log \log n)$. There exists $X$ with constant doubling dimension, such that with at least $2/3$ probability, there exists a $(1+o(1))$-approximate MST $\mathcal{M}'$ for $GX$ whose total cost whose total cost when pulled back to $X$ is at least $\omega(1) \cdot M$.
\end{lemma}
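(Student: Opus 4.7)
The plan is to exhibit a specific $X$ with constant doubling dimension and explicitly construct the bad near-optimal spanning tree $\mathcal{M}'$. A natural candidate is the $\sqrt{\cdot}$-snowflake of a discrete interval: take $X = \{x_i = \sum_{j=1}^i e_j : i \in [n]\} \subset \mathbb{R}^n$. A direct calculation (each ball $B(x_i, r) \cap X$ contains $O(r^2)$ points, giving doubling constant $O(1)$) shows $X$ has constant doubling dimension, and since $\|x_i - x_j\| = \sqrt{|i-j|}$, the MST of $X$ is the Hamiltonian path $x_1 - x_2 - \cdots - x_n$ with $M = n-1$.

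For the random projection $G$, each pair at graph distance $k$ satisfies $\|Gx_i - Gx_j\|^2 \sim (k/d) \cdot \chi_d^2$, so $\Pr[\|Gx_i - Gx_j\| \le t\sqrt{k}]$ scales like $(et^2)^{d/2}$ for small $t$. In the regime $d = o(\log n)$, a routine first/second-moment calculation produces many ``shortcut'' pairs whose projected distance is much smaller than their $X$-distance $\sqrt{k}$. Simultaneously, the assumption $d = \omega(\log\log n)$ together with Theorem~\ref{thm:MainMST} guarantees $\widetilde{M} = (1+o(1)) M$, so any $(1+o(1))$-approximate spanning tree of $GX$ has total cost $(1+o(1)) n$.

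The main step is to assemble these shortcuts into a spanning tree $\mathcal{M}'$ of $GX$. I would choose a graph-distance scale $K = K(n,d)$ and a target projected length $L$ so that, with high probability over $G$: (i) sufficiently many shortcut pairs at graph distance $\approx K$ with $GX$-length $\le L$ exist, (ii) these shortcuts, together with a small number of residual path edges, form a valid spanning tree, and (iii) the resulting $GX$-cost is within $(1+o(1))$ of $\widetilde{M}$, while each shortcut contributes $\sqrt{K}$ to $\cost_X(\mathcal{M}')$, making the total $X$-cost $\omega(n)$. The main obstacle is the simultaneous satisfaction of (i)--(iii): the threshold $d = \omega(\log\log n)$ is exactly where the MST itself is preserved yet the set of $(1+o(1))$-approximate MSTs can still shift dramatically, so the parameters $K$, $L$, and the shortcut count must be tuned tightly against the $(1+o(1))$ approximation slack, and I expect one needs to use overlapping (rather than disjoint) shortcut intervals to break the Cauchy--Schwarz barrier on $\sum \sqrt{K_\ell}$ that disjoint shortcuts are subject to. Once a valid parameter regime is identified, verifying $\cost_X(\mathcal{M}') = \omega(1) \cdot M$ is immediate from the per-edge contribution $\sqrt{K}$.
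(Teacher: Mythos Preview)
Your approach differs from the paper's and has a genuine gap that I do not think can be closed with the walk dataset.

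The paper does \emph{not} reuse the walk dataset for this lemma (it uses it only for the FL analogue, Lemma~\ref{lem:opt_nec}). Instead it builds a set $X = X' \cup X_0 \cup \cdots \cup X_{k-1}$ where $X'$ is a length-$k^2$ arithmetic progression on the first coordinate and each $X_i$ is a copy of a length-$k$ subprogression translated by $e_i$. The crucial feature is that for each $i$ there are $k$ \emph{parallel} edges, all in direction $e_i$, connecting $X_i$ to $X'$. A single event $\|Ge_i\|\le\delta$ therefore makes all $k$ of those edges short in $GX$ simultaneously. The bad tree $\mathcal{M}'$ swaps in these $k$ edges (each of $X$-length $1$ but $GX$-length $\le\delta$) and removes the $k-1$ path edges of $X_i$ (each of $X$-length $1/k$). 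Doing this for $\epsilon^{-1}$ values of $i$ raises $\cost_X$ by $\Theta(\epsilon^{-1}k)=\omega(M)$ while raising $\cost_{GX}$ by only $O(\epsilon^{-1}k\delta)=o(M)$; the assumption $d=\omega(\log\log n)$ is used only to invoke Theorem~\ref{thm:MainMST} so that $\widetilde{M}=(1+o(1))M$.

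Your walk dataset has no repeated directions: every pair $(x_i,x_j)$ points along the distinct vector $e_{i+1}+\cdots+e_j$, so no single shrinkage event can make many edges short at once. This is exactly what blocks your shortcut scheme. Concretely, to get $\cost_X(\mathcal{M}')-M=\omega(n)$ while $\cost_{GX}(\mathcal{M}')-\widetilde{M}=o(n)$ via $S$ swaps at scale $K$, you need $S\sqrt{K}=\omega(n)$ and $S\cdot(\text{$GX$-length of shortcut}-1)=o(n)$. A shortcut at scale $K$ has $GX$-length $O(1)$ only with probability $O(K^{-d/2})$; even counting all $n$ overlapping start positions you get at most $S\lesssim nK^{-d/2}$ such shortcuts, which forces $\sqrt{K}=\omega(K^{d/2})$, impossible for $d\ge 2$. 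Relaxing the $GX$-length target to $L>1$ just reinstates $S(L-1)=o(n)$ and the same arithmetic fails. Your ``overlapping intervals'' idea breaks the Cauchy--Schwarz barrier on $\sum K_\ell$, but it does not manufacture the needed supply of individually shrunk edges. The missing structural idea is precisely the parallel-edge redundancy that the paper's construction provides.
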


Finally, we show that the guarantees of facility location are in fact not maintained for $k$-means and $k$-medians clustering. In other words, the bound of $O(\log k)$ by \cite{ilyapaper} is optimal even for sets of doubling dimension $O(1)$.

\begin{theorem}[$k$-means/$k$-medians] \label{thm:lb_kmeans}
    Let $k < n$ and $d = o(\log k)$. Then, there exists $X$ with constant doubling dimension, such that with probability at least $2/3$, the $k$-means (resp., medians) cost of $GX$ is $o(1)$ times the $k$-means (resp., medians) cost of $X$. Moreover, the optimal choice of $k$ centers in $GX$, when pulled back to $X$, will be an $\omega(1)$-approximate solution in the original dimension $\mathbb{R}^m$.
\end{theorem}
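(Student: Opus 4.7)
The plan is to build a low-doubling-dimension set $X$ whose $k$-means (and $k$-medians) structure is destroyed by any random projection of dimension $d = o(\log k)$.

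I would start from a ``snowflake-of-a-line'' lift: take $N = k+1$ and set $x_i := e_1 + e_2 + \cdots + e_i \in \mathbb{R}^k$ for $i = 0, 1, \ldots, k$, so that $\|x_i - x_j\|^2 = |i-j|$. A direct ball-counting argument (a ball of radius $r$ contains $\Theta(r^2)$ of these points and is coverable by $O(1)$ balls of radius $r/2$) shows doubling dimension $O(1)$. Replicating each $x_i$ with multiplicity $\Theta(n/k)$ makes $|X| = n$ without affecting the doubling dimension.

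For the cost gap, with $k$ centers for $k+1$ distinct atoms at least one pair must share a cluster, so the original OPT is $F := \Theta(n/k)$ (merging the closest pair, squared distance $1$). Under projection, the adjacent displacements $G x_i - G x_{i-1} = G e_i$ are i.i.d.\@ $N(0, I_d/d)$, so $\|Ge_i\|^2$ are i.i.d.\@ $\chi_d^2/d$. Because the density of $\chi_d^2/d$ near $0$ scales as $t^{d/2-1}$, the minimum of $k$ such variables is $\Theta(k^{-2/d})$ with constant probability, giving projected OPT $O(n/k \cdot k^{-2/d}) = o(F)$ exactly when $d = o(\log k)$. The $k$-medians case is analogous with $\chi_d/\sqrt{d}$ in place of $\chi_d^2/d$ and exponent $1/d$ in place of $2/d$.

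For the pullback claim, the bare snowflake-line is insufficient because its minimum projected pair is almost surely an original-adjacent pair whose pullback cost coincides with OPT up to a constant. To remedy this I would augment the construction with $N_{\mathrm{aux}} = \mathrm{poly}(k)$ additional auxiliary points clustered tightly along a line segment at an orthogonal offset of length $D = \omega(1)$, with parameters tuned so that the union retains $O(1)$ doubling dimension. A second birthday calculation over the $\Theta(k \cdot N_{\mathrm{aux}})$ snowflake-auxiliary cross pairs then shows, with constant probability, that some cross-pair has projected squared distance smaller than every adjacent snowflake pair's; the projected OPT consequently merges a cross-pair, whose pullback cost is $\Omega(D^2 \cdot n/k) = \omega(F)$.

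The main technical obstacle is simultaneously choosing the snowflake scale, the offset $D$, and $N_{\mathrm{aux}}$ so that three properties coexist: (i) constant doubling dimension of the union across all scales, (ii) the adjacent-pair minimum beats the cross-pair minimum on the cost side, and (iii) the cross-pair minimum beats the adjacent-pair minimum on the pullback side with a gap of $\omega(1)$ for every $d = o(\log k)$. A further subtlety is that the $\Theta(k \cdot N_{\mathrm{aux}})$ cross-pair projected distances are not fully independent (they share factors $G x_i$ and $G z_t$), so the birthday bound needs a second-moment or conditional argument rather than a naive union bound.
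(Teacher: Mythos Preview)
Your cost argument is fine and mirrors the paper's: orthogonal unit increments project to i.i.d.\ scaled Gaussians, and the minimum of $k$ such $\chi_d/\sqrt{d}$ variables is $\Theta(k^{-1/d})$ with constant probability, giving the $o(1)$ shrinkage exactly when $d = o(\log k)$.

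The pullback argument, however, has a genuine gap. You correctly diagnose that the bare snowflake fails (the projected nearest pair is adjacent, so its pullback cost equals OPT up to a constant), but your proposed fix via auxiliary points is left as a list of unresolved obstacles: you never specify $D$, $N_{\mathrm{aux}}$, the auxiliary spacing, or the multiplicities, and you do not verify that any concrete choice simultaneously achieves (i)--(iii). There is also a tension you do not address: the tight auxiliary cluster introduces its own intra-auxiliary nearest pairs, and you would need the projected global minimum to be a cross-pair rather than an auxiliary--auxiliary pair, which imposes a further constraint on the spacing $\delta$ that interacts badly with keeping the original OPT at $\Theta(n/k)$. The dependence issue you flag is real but secondary to these parameter-balancing problems.

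The paper's construction sidesteps all of this with one idea: instead of adding far-away auxiliary points, make one pair \emph{closer} than all the others. Concretely, take $t$ pairs $(a_i, b_i)$ with $a_i = (2i, \mathbf{0})$ and $b_i = a_i + e_{i+1}$ for $i < t$, but set $b_t = a_t + \frac{1}{R} e_{t+1}$ for a slowly growing $R = \omega(1)$. The original nearest pair is the special one $(a_t, b_t)$ at distance $1/R$, so the original OPT (with $k = 2t-1$ centers) is $1/R$. Under projection, your own birthday calculation shows some non-special pair shrinks to $\le 1/D$ for $D = R^2$, while the special pair, having original length $1/R$, stays above $1/(20R) > 1/D$ with constant probability. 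Hence the projected nearest pair is non-special, the projected OPT is $\le 1/D = o(1/R)$, and pulling back the optimal projected centers leaves a non-special point at original distance $\ge 1$ from its nearest center, giving pullback cost $\ge 1 = \omega(1) \cdot (1/R)$. Doubling dimension is $O(1)$ because the $a_i$ lie on a line and each $b_i$ is at bounded offset from $a_i$, with $\|b_i - b_j\|$ bi-Lipschitz equivalent to $|i-j|$. This handles cost and pullback with a single parameter $R$, no auxiliary structure, and no second-moment argument.
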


At a first glance, Theorem \ref{thm:lb_kmeans} may appear to contradict our upper bounds for facility location. However, in our counterexamples for $k$-means and $k$-medians, the cost (both initially and after projection) is substantially smaller than $k$. Facility location adds a cost of $k$ for the $k$ facilities that are created, and since these facilities now make up the bulk of the cost, the facility location cost is still approximately preserved under random projection.

\section{Experiments} \label{sec:experiments}
We use the following datasets in our experiments for Subsections \ref{subsec:exp_FL} and \ref{subsec:exp_MST}.
\begin{itemize}
    \item \textbf{Faces Dataset}: This dataset is used in the influential ISOMAP paper and consists of $698$ images of faces in dimension $4096$ \cite{isomap}. From \cite{faces}, we can estimate that the doubling dimension of this dataset is a small constant.  
    \item \textbf{MNIST `2' Dataset}: $1000$ randomly chosen images from the MNIST dataset (dimension 784) restricted to the digit $2$. We picked $2$ since it is considered in the original ISOMAP paper \cite{isomap}.
\end{itemize}
All of our experimental results are averaged over $20$ independent trials and the projection dimension $d$ ranges from $5$ to $20$ inclusive.All of our experiments were done on a CPU with i5 2.7 GHz dual core and 8 GB RAM. 

\begin{figure*}[!ht]
    \centering
      \subcaptionbox{Faces Dataset\label{fig:FL_a} }[0.65\columnwidth][c]{%
     \includegraphics[width=2.2in]{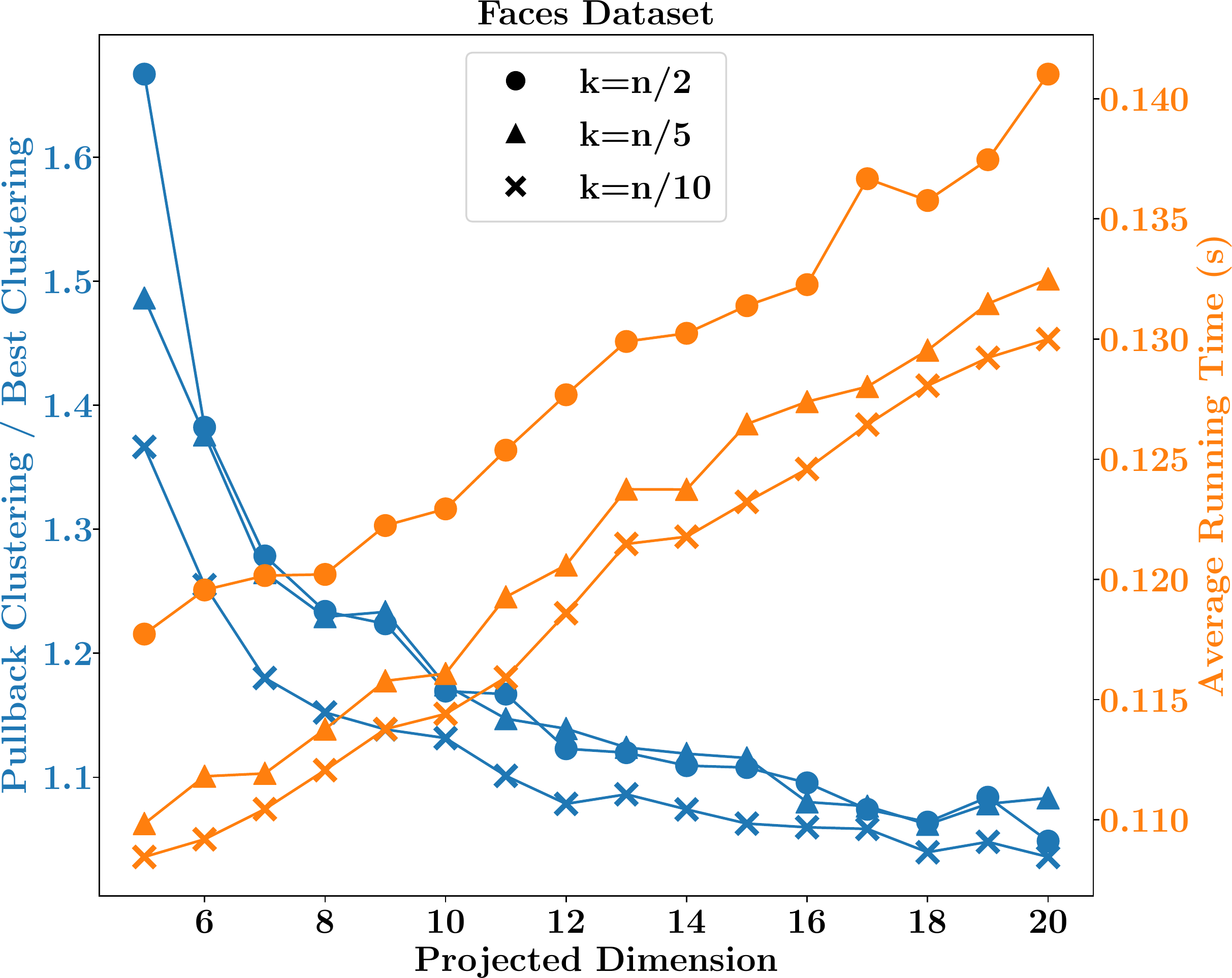}}
     \hspace{35mm}
      \subcaptionbox{MNIST `2' Dataset\label{fig:FL_b}}[0.65\columnwidth][c]{%
        \includegraphics[width=2.2in]{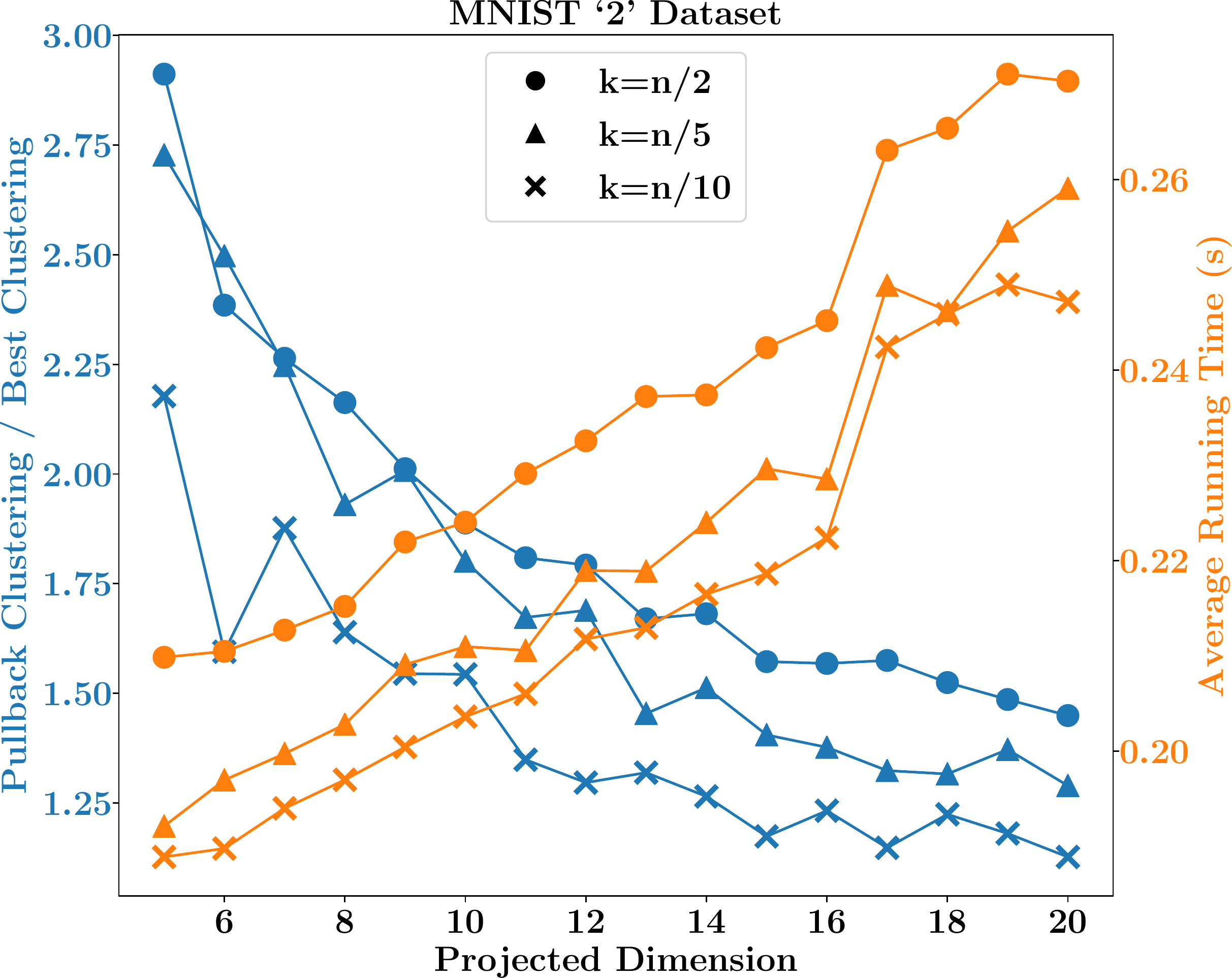}}
      \caption{Facility Location Experiments. (a) \textbf{Blue:} Ratio of solution costs with/without dimensionality reduction, as a function of $d$. \textbf{Orange: }Running time (in secs) as a function of $d$. (b) Same plot as (a) but for MNIST `2' dataset.}
      \label{fig:FL_experiments} 
    \end{figure*}
 \begin{figure*}[!ht]
      \centering
      \subcaptionbox{Faces Dataset\label{fig:MST_a} }[0.5\columnwidth][c]{%
      \hspace*{-1cm}
     \includegraphics[width=2.2in]{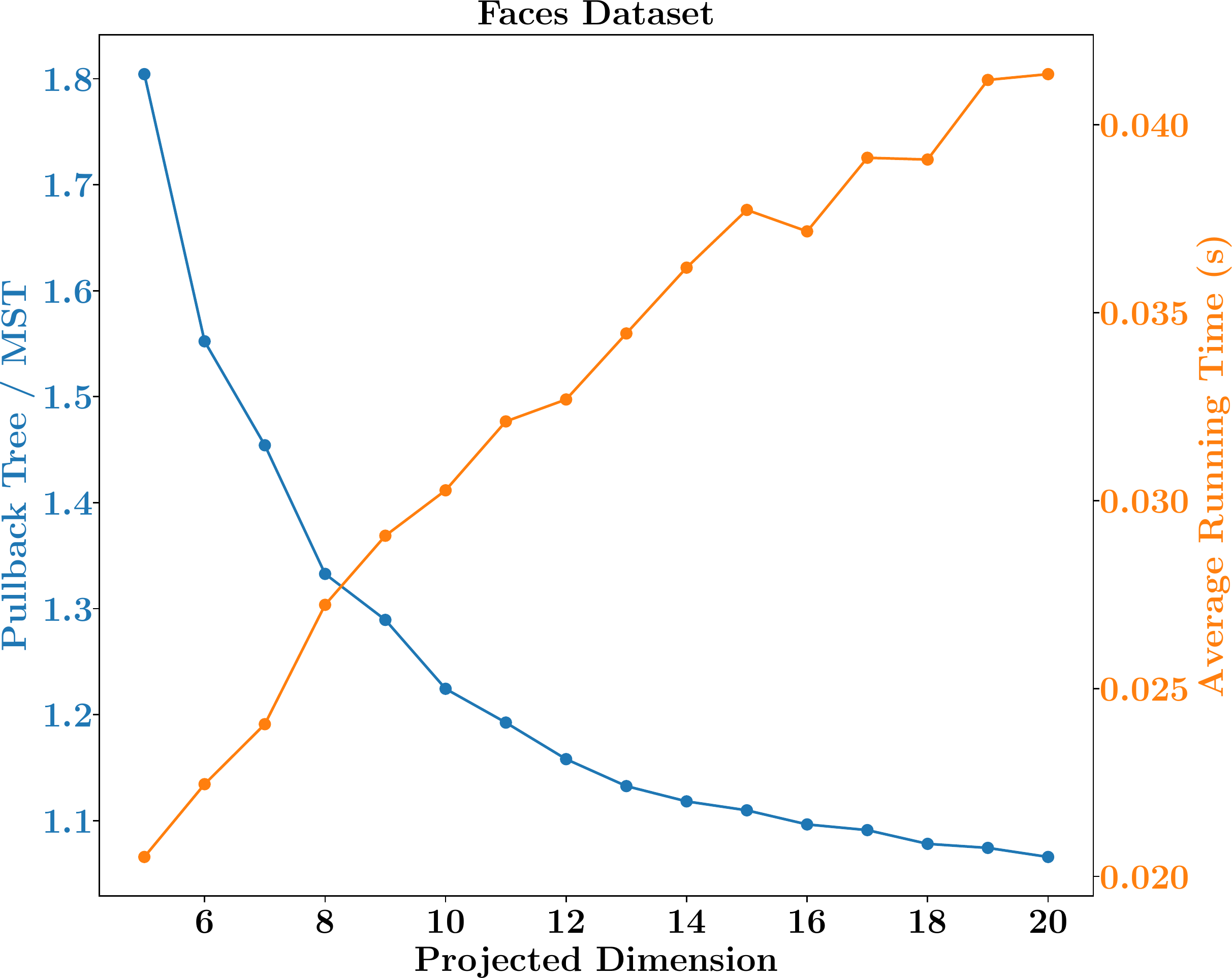}}\hspace{17mm}
      \subcaptionbox{MNIST `2' Dataset\label{fig:MST_b}}[0.5\columnwidth][c]{%
      \hspace*{-1cm}
        \includegraphics[width=2.2in]{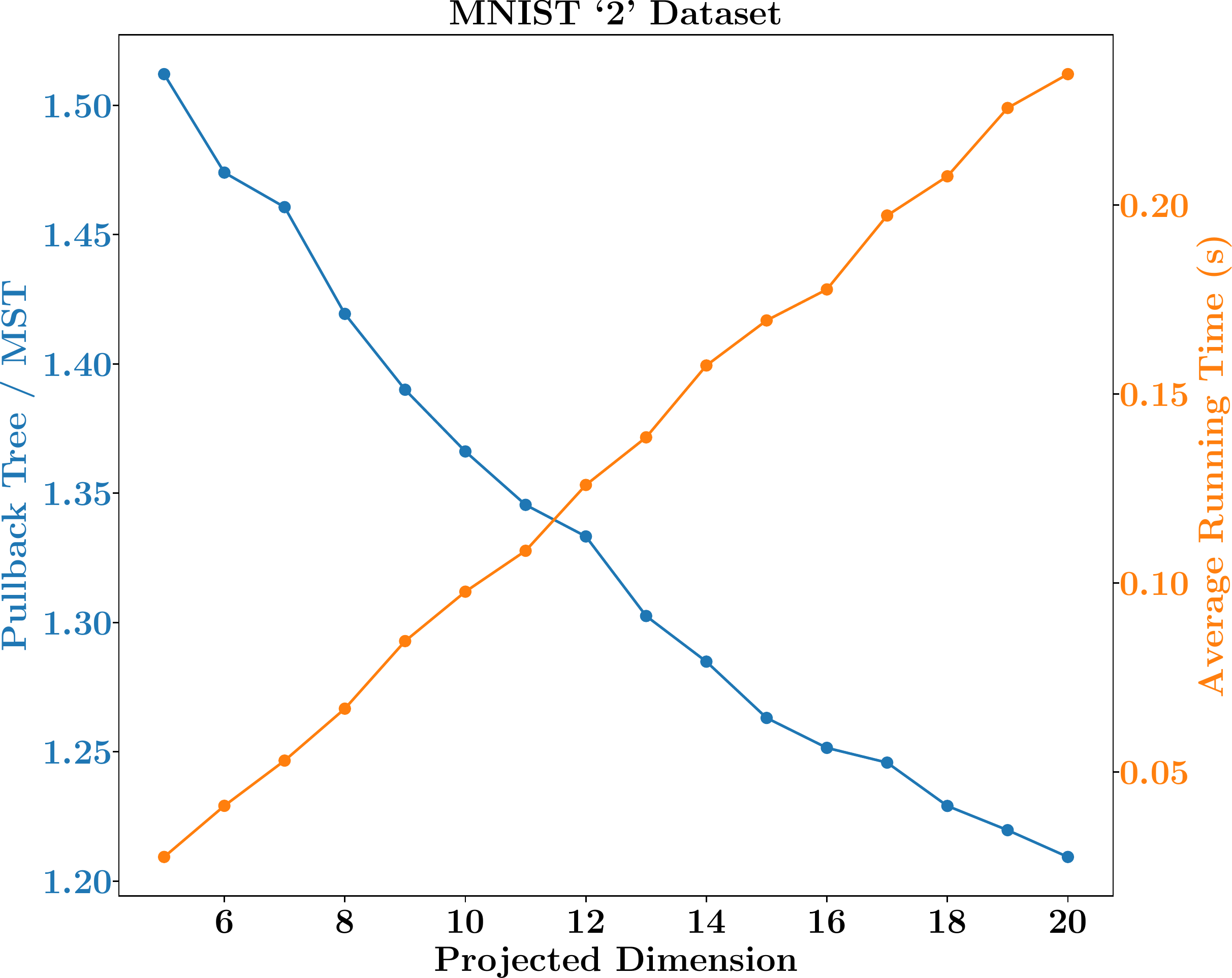}}\hspace{17mm}
      \subcaptionbox{Low vs High Doubling Dimension Comparison\label{fig:large_vs_small}}[0.5\columnwidth][c]{%
      \hspace*{-1cm}
        \includegraphics[width=2.2in]{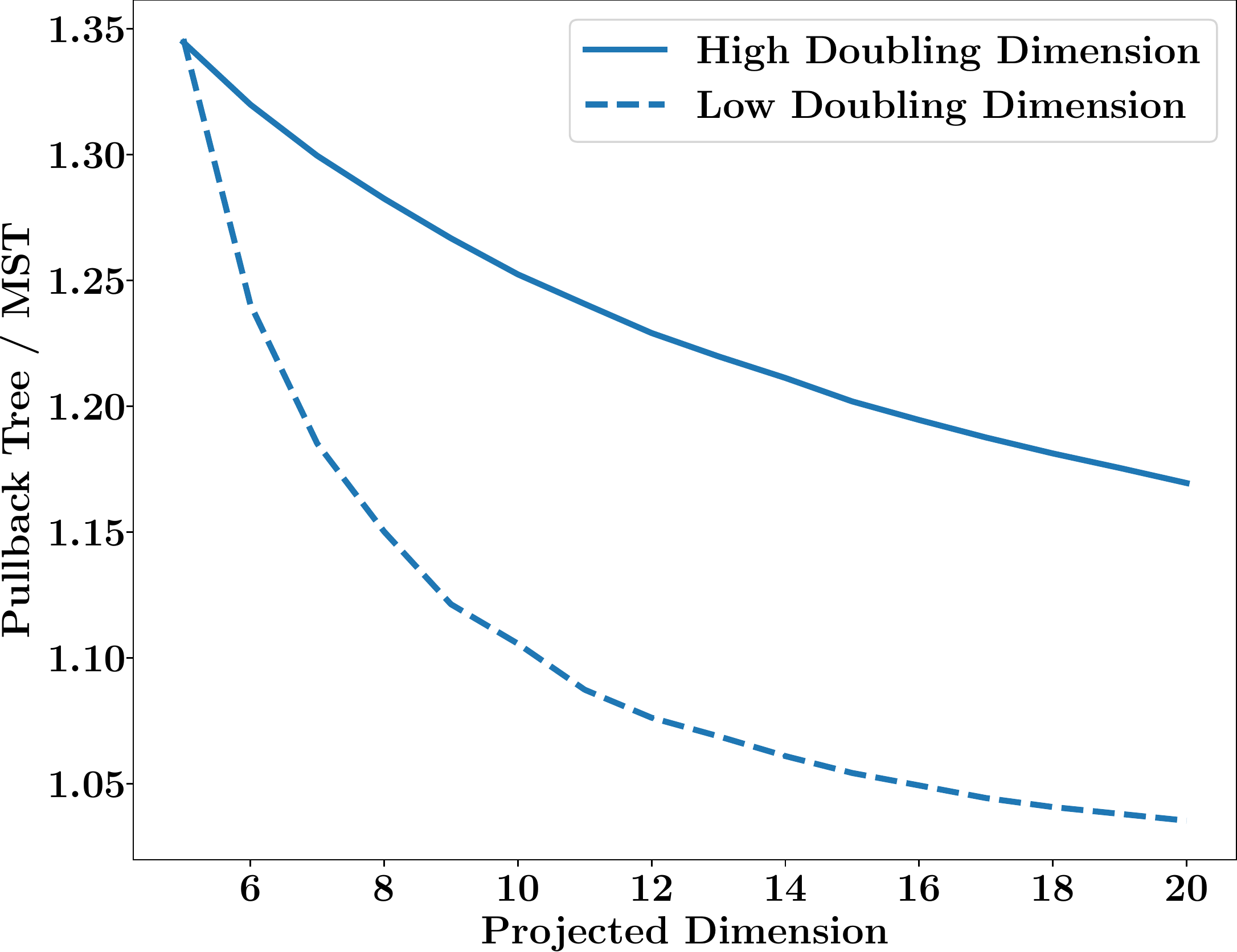}}
      \caption{Minimum Spanning Tree Experiments. (a) \textbf{Blue:} Ratio of solution costs with/without dimensionality reduction, as a function of $d$. \textbf{Orange:} Running time (in secs) as a function of $d$. (b) Same plots as (a) but for MNIST `2' dataset. (c) Dataset $1$ (low doubling dimension) can be projected into a much smaller dimension than Dataset $2$ for MST computation.}
      \label{fig:MST_experiments} 
    \end{figure*}

\subsection{Facility Location: Cost versus Accuracy Analysis} \label{subsec:exp_FL}
In this section we compare the accuracy of the MP algorithm with/without dimensionality reduction for various number of centers opened. 
\paragraph{Experimental Setup }
We project our datasets and compute a facility location clustering with the opening costs scaled so that $n/2, n/5,$ and $n/10$ facilities are opened respectively. We then take this solution and evaluate its cost in the original dimension. We also perform a clustering in the original dimension with the same prescribed number of facilities opened and plot the ratio of the cost of the solution found in the lower dimension (but evaluated in the larger dimension) to the solution found in the larger dimension. We also plot the time taken for the clustering algorithm in the projected dimension. We use the MP algorithm to perform our clustering due to the intractability of finding the exact optimum and also because the MP algorithm is fast and quite practical to use.
\paragraph{Results}
Our results are plotted in Figures \ref{fig:FL_a}-\ref{fig:FL_b}.
Our experiments empirically demonstrate that the dimensionality reduction step does not significantly decrease the accuracy of the solution. Furthermore, we get a substantial reduction in the runtime since the average runtime was at least $20$ seconds for Faces and around  $6.5$ seconds for MNIST `2' in the original dimension for all the values of $k$ tested, which is \textbf{1-2} orders of magnitude higher than the runtime when random projections are used. Note that the runtime includes the time taken to perform the random projection. Overall, our experiments demonstrate that the method of performing dimensionality reduction to perform facility location clustering is well-founded.

\subsection{MST: Cost versus Accuracy Analysis} \label{subsec:exp_MST}

We empirically show the benefits of using dimensionality reduction for minimum spanning tree computation.
\paragraph{Experimental Setup}
We project our datasets and compute a MST. We then take the tree found in the lower dimension and compare its cost in the higher dimension against the actual MST. Our MST algorithm is a variant of the Boruvka algorithm from \cite{MST_mlpack} that is suitable for point sets in large dimensions and is implemented in the popular `mlpack' machine learning library \cite{mlpack2018}.
\paragraph{Results}
Our results are plotted in Figures \ref{fig:MST_a}-\ref{fig:MST_b}. In the blue plots of these figures, the ratio of the cost of the tree found in the projected dimension, but evaluated in the original dimension, to the cost of the actual MST is shown. We see that indeed as projection dimension increases, the ratio approaches $1$. However even for very low values of $d$, such as $10$, the tree found in the projected space serves as a good approximate for the actual MST. Conversely, we see that as $d$ increases, the cost of computing the MST also increases as shown in the orange plots of the Figures \ref{fig:MST_a} and \ref{fig:MST_b}. Note that the time taken to perform the projection is also included. The time taken to compute the MST in the original dimension was approximately $3.2$ seconds for the Faces dataset and $7.1$ seconds for the MNIST `2' dataset. Therefore, projection to dimension $d=20$ gives us approximately \textbf{80x} improvement in speed for the Faces dataset and \textbf{30x} improvement in speed for the MNIST `2' dataset while having a low cost distortion.

\subsection{Large versus Small Doubling Dimension } \label{subsec:exp_MST_doubling}
In this section we present two datasets in $\mathbb{R}^n$ where one dataset has doubling dimension $O(1)$ and the other has doubling dimension at least $\Omega(\log n)$ which is asymptotically the largest doubling dimension of any set of size $n$. We empirically show that the second dataset requires larger projection dimension than the first to guarantee that the MST found in the projected space induces a good solution in the original space. Our two datasets are the following. Let $e_i$ denote the standard basis vectors in $\mathbb{R}^n$. We first draw $n$ standard Gaussians $g_1, \cdots, g_n \in \mathbb{R}$. Our datasets are:

\noindent \textbf{Dataset 1:} $\{ g_1 \cdot e_1, g_1 \cdot e_1 + g_2 \cdot e_2, \ldots, g_1 \cdot e_1 + \cdots + g_n \cdot e_n\}$. \\
\noindent \textbf{Dataset 2:} $\{g_1 \cdot e_1, g_1 \cdot e_2, \ldots, g_n \cdot e_n \}$. 

Note that we use the same $g_i$'s for both datasets. The above datasets appear to be similar, but it can be shown that their respective doubling dimensions are $O(1)$ and $\Omega(\log n)$.
\paragraph{Experimental Setup} We let $n = 1000$ and construct the two datasets. We project our datasets and find the MST for each dataset in the projected space. Then we evaluate the cost of this tree in the larger dimension and compare this cost to the cost of the actual MST for each dataset. 
\paragraph{Results} Figure \ref{fig:large_vs_small} demonstrates that we can find a high quality approximation of the MST by finding the MST in a much smaller dimension for Dataset $1$ compared to Dataset $2$. For example, Dataset $1$ required only $d = 10$ dimensions to approximate the true MST within $10\%$ relative error while Dataset $2$ needed $d = 38$ to get within $10\%$ relative error of the true MST.

\section*{Acknowledgments} This  research  was  supported  in  part  by  the  NSF  TRIPODS  program  (awards  CCF-1740751  and DMS-2022448);  NSF  award  CCF-2006798; MIT-IBM Watson collaboration;  Simons Investigator Award; and NSF Graduate Research Fellowship Program.

\appendix

\section{Omitted Preliminaries} \label{sec:app_prelim}

In this section, we state all of the preliminary results needed relating to random projections that were omitted in Section \ref{sec:prelim}. In all of the following results, we treat $G$ as a random projection from $\mathbb{R}^m$ to $\mathbb{R}^d$.

First, if $x \in S^{m-1}$ then the following statements hold about the distribution of $\|Gx\|$ \cite{indyknaor}:
\begin{align}
    \Pr( | \|Gx\| -1 | \ge t) &\le \exp(-dt^2/8), \label{eq:Gxa} \\
    \Pr( \|Gx\| \le 1/t) &\le \left( \frac{3}t \right)^d. \label{eq:Gxb}
\end{align}
The following serves as a converse to Equation \eqref{eq:Gxb}.
\begin{proposition} \label{prop:ChiLB}
If $x \in S^{m-1}$ and $t \ge 1$ then the following is true about the distribution of $\|Gx\|$:
\begin{equation}\label{eq:Gx2}
    \Pr( \|Gx\| \le 1/t) \ge \left( \frac{1}{et} \right)^d.
\end{equation}
\end{proposition}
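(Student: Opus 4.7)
The plan is to exploit rotational invariance of the Gaussian distribution to reduce the problem to a one-dimensional tail computation, and then lower-bound that one-dimensional probability by hand.

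First, I would observe that since $x \in S^{m-1}$ and the entries of $G$ are i.i.d.\ $\mathcal{N}(0, 1/d)$, the vector $Gx \in \mathbb{R}^d$ has coordinates that are i.i.d.\ $\mathcal{N}(0, 1/d)$ (each coordinate is a weighted sum $\sum_j G_{ij} x_j$ of independent Gaussians with variance $\|x\|^2/d = 1/d$, and distinct coordinates use disjoint rows of $G$). Writing each coordinate as $Z_i/\sqrt{d}$ for $Z_i \sim \mathcal{N}(0,1)$ i.i.d., we get $\|Gx\|^2 = \frac{1}{d}\sum_{i=1}^d Z_i^2$, so that $\|Gx\| \le 1/t$ is equivalent to $\sum_i Z_i^2 \le d/t^2$.

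The key step is the containment of events: if $|Z_i| \le 1/t$ for every $i \in \{1,\dots,d\}$, then $\sum_i Z_i^2 \le d/t^2$, hence $\|Gx\| \le 1/t$. By independence of the $Z_i$, this gives
\[
\Pr\!\bigl(\|Gx\| \le 1/t\bigr) \;\ge\; \Pr\!\bigl(|Z_1| \le 1/t\bigr)^d.
\]
It then remains to show $\Pr(|Z_1| \le 1/t) \ge 1/(et)$ whenever $t \ge 1$. Using the standard normal density,
\[
\Pr(|Z_1| \le 1/t) \;=\; \int_{-1/t}^{1/t} \frac{e^{-u^2/2}}{\sqrt{2\pi}}\, du \;\ge\; \frac{2}{t\sqrt{2\pi}}\, e^{-1/(2t^2)} \;\ge\; \frac{2}{t\sqrt{2\pi e}},
\]
where the last inequality uses $t \ge 1$ so that $e^{-1/(2t^2)} \ge e^{-1/2}$. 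A direct numerical check shows $2/\sqrt{2\pi e} \approx 0.484 > 1/e \approx 0.368$, so $\Pr(|Z_1| \le 1/t) \ge 1/(et)$.

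Combining the two displays yields $\Pr(\|Gx\| \le 1/t) \ge (1/(et))^d$, which is the claim. There is no real obstacle here: the only thing to be careful about is the numerical verification that the constant $2/\sqrt{2\pi e}$ exceeds $1/e$, which is what allows the bound to be stated cleanly with base $1/(et)$ (matching the form of the converse bound in Equation \eqref{eq:Gxb}).
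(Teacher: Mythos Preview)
Your proof is correct, but it takes a different route from the paper's own argument. The paper works directly with the chi-squared density: since $d\|Gx\|^2 \sim \chi_d^2$, it lower-bounds $\int_0^{d/t^2} \frac{1}{2^{d/2}\Gamma(d/2)} x^{d/2-1} e^{-x/2}\,dx$ by replacing $e^{-x/2}$ with its minimum on the interval, evaluating the resulting power integral, and then invoking $\Gamma(x)\cdot x \le x^x$ to simplify to $(1/(et))^d$.

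Your argument instead exploits the product structure of the Gaussian: you pass from the $\ell_2$ ball $\{\sum Z_i^2 \le d/t^2\}$ to the inscribed cube $\{|Z_i|\le 1/t \text{ for all } i\}$, factor the probability by independence, and reduce everything to a single one-dimensional Gaussian estimate plus a numerical check that $2/\sqrt{2\pi e} > 1/e$. This is more elementary in that it avoids the Gamma-function inequality entirely, at the cost of throwing away the sphere-versus-cube volume ratio (which is harmless here since the target bound already has the loose constant $e$). The paper's direct integration is slightly tighter in principle and would adapt more readily if one wanted a sharper base than $1/(et)$, but for the stated proposition your approach is perfectly adequate and arguably cleaner.
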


\begin{proof}
    Since $x \in S^{m-1}$, $d \cdot \|G x\|^2$ is a chi-squared random variable with $d$ degrees of freedom so it has density
\[\frac{1}{2^{d/2} \Gamma(d/2)} x^{d/2-1} \exp(-x/2).\]
    Thus, for all $t \le 1,$ the probability that $\|Ga\|^2$ is less than $1/t^2$ is at least
\begin{align*}
    &\hspace{0.5cm} \frac{1}{2^{d/2} \Gamma(d/2)} \int_0^{d/t^2} x^{d/2-1} \exp(-x/2) dx \\
    &\ge \frac{\exp(-d/2t^2)}{2^{d/2} \Gamma(d/2)} \int_0^{d/t^2} x^{d/2-1} dx \\
    &\ge \frac{\exp(-d)}{2^{d/2}\Gamma(d/2) (d/2)} \cdot \left(\frac{d}{t^2}\right)^{d/2} \\
    &\ge \frac{\exp(-d)}{2^{d/2} \cdot (d/2)^{d/2}} \cdot \left(\frac{d}{t^2}\right)^{d/2} \\
    &= \left(\frac{1}{e \cdot t}\right)^d,
\end{align*}
    where we used the well-known fact that $\Gamma(x) \cdot x \le x^x$ for all $x \ge 1$.
\end{proof}

We will need the following lemma to prove some of our lower bound results from Section \ref{sec:lower}.

\begin{lemma} \label{Nearby}
    Let $C \ge 1$ and fix some point $v$ of norm at most $C$ in $\mathbb{R}^d$. Then, if $x \sim \frac{1}{\sqrt{d}} \cdot \mathcal{N}(0, I_d)$ is a $d$-dimensional scaled multivariate Normal, then $\Pr(\|x-v\| \le \frac{1}{C}) \ge n^{-1/10},$ if $d \le \log n/(10 C^2)$ and $n$ is sufficiently large.
\end{lemma}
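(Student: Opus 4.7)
The plan is to lower bound the Gaussian integral $\Pr(\|x - v\| \le 1/C) = \int_{B(v, 1/C)} (d/(2\pi))^{d/2} e^{-d\|x\|^2/2}\,dx$ by a direct computation, using a short symmetrization to handle the mean shift from $0$ to $v$. The main obstacle is that the obvious two-step approach -- apply Proposition~\ref{prop:ChiLB} at the origin and pay a Radon--Nikodym factor $e^{-d\|v\|^2/2}$ -- loses about $n^{-1/20}$ too much, yielding only $n^{-3/20}$; instead one must combine the ``ball volume'' and ``density on the ball'' contributions in one step, so that the factor $e^{d/2}$ produced by Stirling's formula cancels most of the $e^{-d\|v\|^2/2}$ loss.

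Concretely, set $y = x - v$ and expand $\|y + v\|^2 = \|y\|^2 + 2\langle y, v\rangle + \|v\|^2$ to obtain
\[
\Pr(\|x - v\| \le 1/C) = e^{-d\|v\|^2/2} \int_{B(0, 1/C)} (d/(2\pi))^{d/2} e^{-d\|y\|^2/2} e^{-d\langle y, v\rangle}\,dy.
\]
Since $B(0, 1/C)$ and the factor $e^{-d\|y\|^2/2}$ are invariant under $y \mapsto -y$ while $\langle y, v\rangle$ is odd, symmetrization replaces $e^{-d\langle y, v\rangle}$ by $\cosh(d\langle y, v\rangle) \ge 1$. Dropping the $\cosh$ and using $e^{-d\|y\|^2/2} \ge e^{-d/(2C^2)}$ on $B(0, 1/C)$, combined with $\mathrm{Vol}(B(0, 1/C)) = \pi^{d/2}/(\Gamma(d/2+1) C^d)$ and Stirling's bound $(d/2)^{d/2}/\Gamma(d/2+1) \ge c\, e^{d/2}/\sqrt{d}$ for an absolute $c > 0$, yields (after invoking $\|v\| \le C$ in the worst case)
\[
\Pr(\|x - v\| \le 1/C) \ge \frac{c}{\sqrt{d}} \exp\!\left(-d \cdot \left[\tfrac{C^2 + 1/C^2 - 1}{2} + \ln C\right]\right).
\]

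It remains to show $\sup_{C \ge 1}\,[(C^2 + 1/C^2 - 1)/2 + \ln C]/C^2 < 3/5$: a short calculus check reveals that this ratio equals $1/2$ both at $C = 1$ and as $C \to \infty$, and that its unique interior maximum on $[1, \infty)$ (attained near $C \approx 2$) is below $0.6$. Combined with the hypothesis $d \le \log n/(10 C^2)$, the exponent is therefore at most $(3/5) \cdot (\log n/10) = 0.06 \log n$. Since $d \le \log n$, the prefactor $1/\sqrt{d}$ is $n^{-o(1)}$, so we conclude $\Pr(\|x - v\| \le 1/C) \ge n^{-0.06 - o(1)} \ge n^{-1/10}$ for $n$ sufficiently large.
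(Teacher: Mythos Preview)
Your proof is correct and takes a genuinely different route from the paper's. The paper uses rotational symmetry to put $v=(r,0,\dots,0)$, splits $x=(x_1,y)$, and bounds separately the probability that $x_1\in[r-\tfrac{1}{2C},r]$ (a one-dimensional Gaussian estimate giving roughly $\tfrac{\sqrt d}{2C\sqrt{2\pi}}e^{-C^2 d/2}$) and the probability that $\|y\|\le\tfrac{1}{2C}$ (via Proposition~\ref{prop:ChiLB}, giving $(2eC)^{-d}$), then multiplies. You instead treat the integral directly: the symmetrization $y\mapsto -y$ converts the shift factor $e^{-d\langle y,v\rangle}$ into $\cosh(d\langle y,v\rangle)\ge 1$, after which a uniform density bound on the ball together with the volume formula and Stirling handles everything in one shot. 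Both arguments are short; the paper's is more modular (it recycles Proposition~\ref{prop:ChiLB}), while yours is more self-contained and avoids the coordinate split.

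A pleasant byproduct is that your exponent is sharper. The paper's combined bound has exponent $d\big[\tfrac{C^2}{2}+1+\ln 2+\ln C\big]$, whose ratio to $C^2$ exceeds $1$ for $C$ near $1$ (e.g.\ about $2.19$ at $C=1$), so strictly speaking the final inequality $\ge n^{-1/10}$ in the paper's proof does not hold for all $C\ge 1$; it only kicks in once $C$ is moderately large. This is harmless for the paper's applications, since every invocation of the lemma has $C=\sqrt{\log n/(10d)}\to\infty$. Your calculus check that $\sup_{C\ge 1}\big[\tfrac{C^2+C^{-2}-1}{2}+\ln C\big]/C^2\approx 0.581<\tfrac{3}{5}$ is correct (the unique interior critical point solves $\ln C=1-C^{-2}$, near $C\approx 2.22$), so your argument actually establishes the lemma in the full range $C\ge 1$ as stated.
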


\begin{proof}
    By the rotational symmetry of the multivariate normal, assume $v = (r, 0, \dots, 0),$ where $0 \le r \le C.$ Then, if $x = (x_1, y)$ for $x_1 \in \mathbb{R}, y \in \mathbb{R}^{d-1}$, then if $r-\frac{1}{2C} \le x_1 \le r$ and $\|y\| \le \frac{1}{2C},$ then we indeed have $\|x-v\| \le \frac{1}{C}.$ Since $\sqrt{d} x_1 \sim \mathcal{N}(0, 1)$ and $r \le C$, the probability that $r - \frac{1}{2C} \le x_1 \le r$ equals the probability that $\mathcal{N}(0, 1) \in [(r-1/2C) \sqrt{d}, r \sqrt{d}],$ which is at least $\frac{\sqrt{d}}{2C} \cdot \frac{1}{\sqrt{2 \pi}} \cdot e^{-C^2 d/2}.$ Moreover, the probability that $\|y\| \le \frac{1}{2C}$ is at least $\left(\frac{1}{2eC}\right)^d$ by Proposition \ref{prop:ChiLB}. Therefore,
\begin{align*}
\Pr\left(\|x-v\| \le \frac{1}{C}\right) &\ge \frac{\sqrt{d}}{2C} \cdot \frac{1}{\sqrt{2 \pi}} \cdot e^{-C^2 d/2} \cdot \left(\frac{1}{2 e C}\right)^d \\
&\ge n^{-1/10},
\end{align*}
    where the last inequality is true because $d \le \log n/(10 C^2)$ and that $n$ is sufficiently large.
\end{proof}

We recall Lemma \ref{lem:indyk}, due to Indyk and Naor \yrcite{indyknaor}.

\begin{lemma}[Lemma \ref{lem:indyk}] Let $X \subseteq B(0,1)$ be a subset of the $m$-dimensional Euclidean unit ball. Then there exist universal constants $c, C > 0$ such that for $d \ge C \cdot d_X + 1$ and $t > 2$, $\Pr(\exists x \in X, \, \|Gx\| \ge t) \le \exp(-cdt^2)$.
\end{lemma}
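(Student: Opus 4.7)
The plan is to prove this via a multi-scale chaining argument driven by the doubling property. For each integer $i \ge 0$, let $r_i = 2^{-i}$ and, using the definition of doubling dimension iteratively, construct a net $N_i \subseteq X$ such that every $x \in X$ lies within distance $r_i$ of some $\pi_i(x) \in N_i$, with $|N_i| \le 2^{O(d_X \cdot i)}$. I would then write the telescoping decomposition
\[
Gx \;=\; G\pi_0(x) + \sum_{i \ge 1} G\bigl(\pi_i(x) - \pi_{i-1}(x)\bigr),
\]
where each increment $y_{i,x} := \pi_i(x) - \pi_{i-1}(x)$ has norm at most $3 \cdot 2^{-i}$ by the triangle inequality, and $\pi_0(x)$ has norm at most $1$.

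Next I would apply standard Gaussian concentration level by level. For any fixed $y \in \mathbb{R}^m$, the projection satisfies $\Pr(\|Gy\| \ge \tau \|y\|) \le \exp(-c d \tau^2)$ for $\tau > 2$. At level $i$, the set of possible increments $\{y_{i,x} : x \in X\}$ has size at most $|N_i| \cdot |N_{i-1}| \le 2^{O(d_X \cdot i)}$, so a union bound shows that with probability at least $1 - 2^{O(d_X \cdot i)} \exp(-c d \tau_i^2)$, every increment at level $i$ satisfies $\|Gy_{i,x}\| \le \tau_i \cdot 3 \cdot 2^{-i}$. I would choose the thresholds $\tau_i$ so that (a) the sum $\sum_i \tau_i \cdot 2^{-i}$ together with the base term is bounded by $t$ on the event of no failure, and (b) the per-level failure probability forms a geometric series summing to $\exp(-c d t^2)$.

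A workable choice is $\tau_i = \alpha \cdot t \cdot 2^{i/2}$ for a small constant $\alpha$; this makes $\tau_i \cdot 2^{-i} = \alpha t \cdot 2^{-i/2}$ summable to $O(\alpha t)$, while $d \tau_i^2 = \alpha^2 d t^2 \cdot 2^i$ grows exponentially in $i$. Provided $d \ge C \cdot d_X + 1$ for a sufficiently large constant $C$, the exponent $c d \tau_i^2 - O(d_X \cdot i) = \alpha^2 c d t^2 \cdot 2^i - O(d_X \cdot i)$ dominates by an amount at least $c' d t^2 \cdot 2^i$ at every level, and summing the geometric tail gives a total failure probability of $\exp(-\Omega(d t^2))$, which is exactly the stated bound after absorbing constants into $c$. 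The base term $\|G\pi_0(x)\|$ is handled by the same union bound applied to the single scale $i = 0$ with $|N_0| = 1$ (picking any anchor point).

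The main obstacle is the balancing act in step two: the level-$i$ net has size exponential in $d_X \cdot i$, so the Gaussian concentration at level $i$ must be exponentially strong in $i$ to beat the union bound, yet the resulting threshold contributions $\tau_i \cdot \|y_{i,x}\|$ must still sum to $O(t)$. The geometric choice $\tau_i \sim t \cdot 2^{i/2}$ threads this needle, but verifying that the total failure probability collapses to a single clean $\exp(-c d t^2)$ (rather than a messy sum) requires the assumption $d \ge C d_X + 1$ with $C$ large enough that $d - C' d_X$ remains a constant fraction of $d$, absorbing the $2^{O(d_X i)}$ factors into the exponent uniformly in $i$.
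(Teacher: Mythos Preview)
Your chaining argument is correct and is essentially the standard proof of this result. However, the paper does not actually prove this lemma: it is quoted verbatim as Lemma~4.2 of Indyk and Naor~\cite{indyknaor} and used as a black box throughout, with no proof supplied in the paper or its appendices. Your sketch recovers the Indyk--Naor argument (multi-scale nets of size $\lambda_X^i$, telescoping increments of norm $O(2^{-i})$, per-level Gaussian tail plus union bound, geometric choice $\tau_i \asymp t\,2^{i/2}$), so there is nothing to compare against here beyond noting that you have reconstructed the cited proof rather than something the present paper contributes.

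One small point worth tightening if you write this out in full: your per-vector tail bound $\Pr(\|Gy\| \ge \tau\|y\|) \le \exp(-cd\tau^2)$ is stated for $\tau > 2$, but at level $i=0$ you apply it with $\tau_0 = \alpha t$, which for small $\alpha$ may fall below $2$. This is harmless---either take $\alpha$ a fixed constant like $1/10$ (so $\tau_0 > 1/5 \cdot 2$ and use the form $\exp(-cd(\tau-1)^2)$ valid for $\tau > 1$), or absorb the base level into the constant---but it is the one place where the constants need a sentence of care.
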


Indyk and Naor also prove the following result about the distance to the nearest neighbor after a random projection.

\begin{theorem}[Theorem $4.1$ in \cite{indyknaor}]\label{thm:indyk}
Let $G$ be a random projection from $\mathbb{R}^m$ to $\mathbb{R}^d$ for $d = O(d_X  \cdot \log(1/\epsilon)/\epsilon^2 \log(1/\delta))$. Then for every $x \in X$, with probability at least $1-\delta$, the following statements hold:
\begin{enumerate}
    \item $D(Gx, G(X \setminus \{x\})) \le (1+\epsilon)D(x, X \setminus \{x \})$
    \item Every $y \in X$ with $\|x-y\| > (1+2\epsilon)D(x, X\setminus\{x\})$ satisfies $\|Gx-Gy\| > (1+\epsilon)D(x, X\setminus\{x\})$
where $D(x, X) = \min_{y \in X} \|x-y\|.$
\end{enumerate}
\end{theorem}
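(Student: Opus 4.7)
The plan is to fix $x \in X$ and, after rescaling, assume $D(x, X \setminus \{x\}) = 1$. Parts 1 and 2 will be established separately: Part 1 is immediate from JL-type concentration applied to a single vector, while Part 2 is where a chaining/net argument tailored to the doubling dimension is needed.

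For Part 1, let $y^* \in X \setminus \{x\}$ be a nearest neighbor of $x$. Applying the JL concentration bound \eqref{eq:Gxa} to the single unit vector $x - y^*$ yields $\|Gx - Gy^*\| \le 1 + \epsilon$ with probability at least $1 - \exp(-d\epsilon^2/8) \ge 1 - \delta/2$ once $d = \Omega(\epsilon^{-2}\log(1/\delta))$. This already certifies $D(Gx, G(X \setminus \{x\})) \le 1+\epsilon$, so all of the remaining work is in Part 2.

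For Part 2, I would partition the ``far'' points $\{y \in X : \|x-y\| > 1+2\epsilon\}$ into dyadic shells $S_k = \{y : \|x-y\| \in (L_k, 2L_k]\}$ with $L_0 = 1 + 2\epsilon$ and $L_k = 2^k L_0$. In each shell I would build an $r_k$-net $N_k \subseteq S_k$ at scale $r_k = \epsilon L_k / C$ for a large absolute constant $C$; since $S_k \subseteq X$ inherits doubling dimension $\le d_X$, the doubling property gives $|N_k| \le (1/\epsilon)^{O(d_X)}$. For each net point $n \in N_k$ I would upper bound $\Pr\bigl(\|Gx - Gn\| \le 1 + \tfrac{3\epsilon}{2}\bigr)$: in the near shell $k=0$ the multiplicative shrinkage required is only $\Theta(\epsilon)$, so \eqref{eq:Gxa} yields $\exp(-c d \epsilon^2)$; for shells with large $L_k$ I would use the small-ball bound \eqref{eq:Gxb} to obtain $(C'/L_k)^d$, which decays geometrically in $k$. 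Simultaneously, for each cluster $\{y \in S_k : n(y) = n\}$ (a subset of $X$ of doubling dimension $\le d_X$ contained in a ball of radius $r_k$), I would apply Lemma \ref{lem:indyk} after translating by $n$ and rescaling by $1/r_k$ to get a uniform residual bound $\|G(y - n)\| \le O(\epsilon L_k / C)$ with probability at least $1 - \exp(-cd)$.

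On the intersection of these good events, the triangle inequality gives, for every $y \in S_k$, $\|Gx - Gy\| \ge \|Gx - Gn(y)\| - \|G(y - n(y))\| > 1 + \tfrac{3\epsilon}{2} - O(\epsilon/C) > 1 + \epsilon$, provided $C$ is large enough. Summing failure probabilities over all net points and clusters in every shell, the total is dominated by the $k=0$ contribution (deeper shells contribute a geometrically shrinking tail via the $(C'/L_k)^d$ factor) and amounts to at most $(1/\epsilon)^{O(d_X)}\exp(-cd\epsilon^2) \le \delta/2$ once $d \ge C_0 \epsilon^{-2}\bigl(d_X \log(1/\epsilon) + \log(1/\delta)\bigr)$, matching the stated bound. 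I expect the main obstacle to be the careful doubling-dimension bookkeeping: one must apply Lemma \ref{lem:indyk} cluster-by-cluster, so that each set on which it is invoked is genuinely a subset of $X$ rather than a difference set $\{y - n(y) : y\}$ whose doubling dimension could in principle blow up, while still keeping the residuals tight enough to be absorbed by the thin $\Theta(\epsilon)$ margin available in shell $k=0$.
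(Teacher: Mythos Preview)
The paper does not prove this theorem: it is quoted verbatim as Theorem~4.1 of Indyk and Naor and used as a black box in the preliminaries, so there is no ``paper's own proof'' to compare against. Your sketch is essentially the argument given in that reference --- JL concentration for the single nearest-neighbor vector to get Part~1, then dyadic shells, $\epsilon$-nets of size $(1/\epsilon)^{O(d_X)}$ per shell, small-ball/JL bounds on the net points, and Lemma~\ref{lem:indyk} on each cluster to control residuals --- and it recovers the stated dimension bound correctly.

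One remark on the obstacle you flag: the worry about ``difference sets'' is not a real issue. Lemma~\ref{lem:indyk} is applied to $\{y - n : y \text{ in the cluster of } n\}$, which is a translate of a subset of $X$, and doubling dimension is invariant under translation. The only subtlety is that a subset $Y \subseteq X$ can have doubling constant as large as $\lambda_X^2$ rather than $\lambda_X$, so its doubling dimension may double; this is harmless since it is absorbed into the $O(d_X)$ constants in the net-size and dimension bounds. The genuine place to be careful is the $k=0$ shell, where the available margin between $1+2\epsilon$ and $1+\epsilon$ is only $\Theta(\epsilon)$: your choice of net scale $r_0 = \Theta(\epsilon)$ together with the constant-$t$ application of Lemma~\ref{lem:indyk} handles this, but the constant $C$ in $r_k = \epsilon L_k/C$ must be fixed large enough before any of the probability estimates are made.
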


\section{The Mettu-Plaxton (MP) Algorithm and Local Optimality} \label{sec:app_local}

First, we give the pseudocode for the Mettu-Plaxton (MP) algorithm for facility location, described in Section \ref{sec:local}.

\begin{minipage}{\linewidth}
\RestyleAlgo{boxruled}
\removelatexerror
\begin{algorithm}[H]
	\SetKwInOut{Input}{Input}
	\SetKwInOut{Output}{Output}
\Input{Dataset $X = \{p_1, \cdots, p_n\} \subseteq \mathbb{R}^d$}
	\Output{Set $\mathcal{F}$ of facilities}
	\DontPrintSemicolon
	$\mathcal{F} \gets \emptyset$ \;
	\For{$i = 1$ to $n$}{
    Compute $r_i$ satisfying:
    	$ \sum_{q \in B(p_i, r_i)} (r_i - \|p_i-q\|) = 1$
	\;
    }
    Sort such that $r_1 \le \ldots \le r_n$\;
    \For{$i=1$ to $n$}{
    \If{$B(p_i, 2r_i) \cap \mathcal{F} = \emptyset$}{
    $\mathcal{F} \gets \mathcal{F} \cup \{p_i\}$}
    }
    Output $\mathcal{F}$
	\caption{$\textsc{MP Algorithm}$}
	\label{alg:MPalg}
\end{algorithm}
\end{minipage}

\bigskip

Next, we prove Lemma \ref{lem:localopt}, which roughly stated that a globally optimal solution for facility location is always locally optimal.

\begin{proof}[Proof of Lemma \ref{lem:localopt}]
Consider an arbitrary point $p \in X$. We first establish a lower bound on the number of points in $B(p, r_p) \cap X$. Note that by definition of $r_p$, we have
$$|B(p, r_p) \cap X|r_p \ge \sum_{q \in B(p,r_p)} (r_p-\|p-q\|) = 1 $$
so it follows that 
\begin{equation}\label{eq:numberlb}
   |B(p, r_p) \cap X| \ge 1/r_p.
\end{equation}
Now suppose that $B(p, 3r_p) \cap  \mathcal{F} = \emptyset$ and let $m$ be the number of points in $|B(p, r_p) \cap X|$ \emph{excluding} $p$. The total connection cost of all these points to their nearest facility must be at least $2mr_p$. Accounting for point $p$, the total connection costs of points in $B(p, r_p) \cap X$ is at least $2mr_p + 3r_p$. Now if we open a new facility at point $p$, then the connection costs of these points is at most $mr_p$ but we also incur an additional cost for opening a facility at $p$. Therefore, the total cost of the solution decreases by at least
$$(2mr_p + 3r_p) - (1 + mr_p) = (m+3)r_p -1. $$
Now from Eq.\@ \eqref{eq:numberlb}, we have that $(m+3)r_p > 1$, which means that the total cost decreases if we open a new facility at $p$.
\end{proof}

\section{Dimension Reduction for Facility Location: Omitted Proofs} \label{sec:app_fl}
\subsection{Approximating the Optimal Facility Location Cost}\label{subsec:approx_appendix}
In this subsection, we prove Theorem \ref{thm:constant}. As stated in Subsection \ref{subsec:approx}, our proof involves computing an upper bound and a lower bound for $\mathbb{E}[\tilde{r}_p]$. We first proceed with an upper bound in Lemma \ref{lem:rupper}.

\begin{lemma}\label{lem:rupper}
Let $X \subseteq \mathbb{R}^m$ and let $p \in X$. Let $G$ be a random projection from $\mathbb{R}^m$ to $\mathbb{R}^d$ for $d = O(\log \lambda_X \cdot \log(1/\epsilon)/\epsilon^2)$. Let $r_p$ and $\widetilde{r}_p$ be the radius of $p$ and $Gp$ in $\mathbb{R}^m$ and $\mathbb{R}^d$ respectively, computed according to Eq.\@ \eqref{eq:rdef}. Then $$\E[\tilde{r}_p] \le (2+O(\epsilon))r_p.$$
\end{lemma}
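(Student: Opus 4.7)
Since $p \in X$ forces the $q' = Gp$ term of the defining equation for $\tilde{r}_p$ to contribute $\tilde{r}_p$ itself, we have $\tilde{f}(r) := \sum_{q' \in B(Gp, r) \cap GX}(r - \|Gp - q'\|) \ge r$ for every $r$, hence $\tilde{r}_p \le 1$ deterministically. In particular, the claim holds trivially whenever $r_p \ge 1/(2+\epsilon)$, so we focus on the regime $r_p < 1/(2+\epsilon)$. In this regime, \eqref{eq:rdef} yields $|B(p, r_p) \cap X| \ge 1/r_p$, so the ball around $p$ is densely populated.

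The plan is to establish the per-point expansion bound
\[
\|Gp - Gq\| \le (2+O(\epsilon)) \|p - q\| \qquad \text{for all } q \in B(p, r_p) \cap X
\]
on a high-probability event $\mathcal{E}$. A single application of Lemma \ref{lem:indyk} to the rescaled translated set $\{(q-p)/r_p : q \in B(p, r_p) \cap X\} \subseteq B(0,1)$ with parameter $t = 2(1+\epsilon)$ already yields the weaker uniform bound $\|Gp - Gq\| \le t \cdot r_p$. I then upgrade this to the per-point bound by applying Lemma \ref{lem:indyk} at a refined geometric sequence of scales $r_p, r_p/(1+\epsilon), r_p/(1+\epsilon)^2, \ldots$ within $B(p, r_p)$ and union-bounding. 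Each application fails with probability $\exp(-cdt^2)$, and the doubling property of $X$ controls the number of relevant scales, so for $d = \Omega(d_X \log(1/\epsilon)/\epsilon^2)$ the event $\mathcal{E}$ holds except on a set of exponentially small probability.

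Conditional on $\mathcal{E}$, I plug the candidate radius $r^* := (2+O(\epsilon)) r_p$ into the projected defining equation. Every $q \in B(p, r_p) \cap X$ satisfies $\|Gp - Gq\| \le r^*$, so lies in $B(Gp, r^*) \cap GX$, and contributes at least $r^* - \|Gp - Gq\| \ge (2+O(\epsilon))(r_p - \|p - q\|)$ to $\tilde{f}(r^*)$. Summing over such $q$ and invoking the original identity $\sum_{q \in B(p, r_p) \cap X}(r_p - \|p - q\|) = 1$ gives $\tilde{f}(r^*) \ge 2 + O(\epsilon) \ge 1$, whence $\tilde{r}_p \le r^*$ on $\mathcal{E}$. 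Combining with the deterministic fallback $\tilde{r}_p \le 1$ off $\mathcal{E}$ yields $\E[\tilde{r}_p] \le (2+O(\epsilon)) r_p + \Pr(\mathcal{E}^c) \le (2+O(\epsilon)) r_p$.

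The main obstacle is the upgrade from the one-scale uniform bound $\|Gp - Gq\| \le t r_p$ to the per-point bound $\|Gp - Gq\| \le t \|p-q\|$. A naive argument using only the uniform bound yields just $\tilde{r}_p \le (t+1) r_p \approx 3 r_p$, because a worst-case projection can map every $q \in B(p, r_p)$ to distance exactly $t r_p$ from $Gp$, collapsing the contribution $(r^* - t r_p)$ to zero. The multi-scale union bound is what converts the $+1$ slack into the desired $O(\epsilon)$ slack, sharpening the constant from $3$ to $2$; this is also what drives the $\log(1/\epsilon)/\epsilon^2$ factor in the dimension requirement.
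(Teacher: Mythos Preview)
Your proof has a genuine gap in its last step. You write $\E[\tilde r_p] \le (2+O(\epsilon)) r_p + \Pr(\mathcal E^c)$, using the deterministic fallback $\tilde r_p \le 1$ on $\mathcal E^c$, and then absorb $\Pr(\mathcal E^c)$ into the $O(\epsilon)r_p$ term. But $r_p$ can be as small as $1/n$, whereas with the stated dimension $d = O(d_X \log(1/\epsilon)/\epsilon^2)$ and $d_X$ a fixed constant, each application of Lemma~\ref{lem:indyk} fails with probability $\exp(-\Theta(d))$, a quantity depending only on $\epsilon$. Your multi-scale union bound therefore gives $\Pr(\mathcal E^c)$ of size $(\text{scales})\cdot\exp(-\Theta(d))$, which is independent of $n$ and hence can be arbitrarily large compared to $r_p$. (There is also a secondary loose end: your claim that ``the doubling property of $X$ controls the number of relevant scales'' is not right---doubling controls the number of balls per scale, not the aspect ratio $r_p/\min_q\|p-q\|$, which can be unbounded. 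You would need to truncate the scale ladder at, say, $\epsilon r_p$ and handle $\|p-q\|\le\epsilon r_p$ separately.)

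The paper sidesteps both issues by never invoking a bound on $\tilde r_p$ that is decoupled from $r_p$. Rather than a single good event, it partitions according to $\mathcal E_k = \{\max_{q\in B(p,r_p)}\|G(q-p)\| \in [(k-1)(1+\delta)r_p,\, k(1+\delta)r_p)\}$ and shows that on $\mathcal E_k$ one has $\tilde r_p \le (k+1)(1+\delta)r_p$, always a multiple of $r_p$. Summing $(k+1)(1+\delta)r_p\cdot\Pr(\mathcal E_k)$ gives the leading $2(1+\delta)r_p$ from $k=1$ plus a tail that is $O(\epsilon)\cdot r_p$. Note too that your multi-scale per-point bound is unnecessary for the constant~$2$: on $\mathcal E_1$ the uniform bound $\|G(q-p)\|<(1+\delta)r_p$ combined with $|B(p,r_p)\cap X|\ge 1/r_p$ already gives
\[
\tilde f\bigl(2(1+\delta)r_p\bigr)\;\ge\;|B(p,r_p)\cap X|\cdot(1+\delta)r_p\;\ge\;1,
\]
hence $\tilde r_p\le 2(1+\delta)r_p$. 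Your assertion that the uniform bound ``yields just $\tilde r_p\le(t+1)r_p\approx 3r_p$'' misses this cardinality argument; the paper gets the constant~$2$ from a single scale plus tail integration, not from a per-point multiplicative bound.
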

\begin{proof}
Let $\delta > 0$ be fixed and let $\mathcal{E}_k$ be the event that
$$\max_{x \in B(p, r_p) \cap X} \|G(x-p)\| \in [(k-1)(1+\delta)r_p, k(1+\delta)r_p)$$
Note that $\mathcal{E}_k$ implies that there exists an $x \in B(p, r_p) \cap X$ such that $\|G(x-p)\| \ge (k-1)(1+\delta)r_p$, so by Lemma \ref{lem:indyk} we have
\begin{align}\label{eq:rpprob}
  \Pr(\mathcal{E}_k) &\le \Pr(\exists x \in B(p, r_p) \cap X, \, \|G(x-p)\| \ge k(1+\delta)r_p) \nonumber \\
  &\le \exp(-c(k-1)^2(1+\delta)^2d)  
\end{align}
for some constant $c$. We now show that conditioned on $\mathcal{E}_k$, we have $\tilde{r}_p \le (k+1)r_p(1+\delta)$. This is because conditioning on $\mathcal{E}_k$ gives us
\begin{multline*}
\sum_{Gq \in B(Gp, (k+1)(1+\delta)r_p)} ((k+1)r_p(1+\delta)-\|Gp-Gq\|) \\
\ge \sum_{q \in B(p, r_p)} (1+\delta)r_p,
\end{multline*}
where $Gq \in B(Gp, (k+1)(1+\delta)r_p)$ is interpreted as summing over the points in the set $$GX \cap B(Gp, (k+1)(1+\delta)r_p).$$
Furthermore,
\begin{align*}
\sum_{q \in B(p, r_p)} r_p(1+\delta) &\ge \sum_{q \in B(p, r_p)} r_p \\
&\ge \sum_{q \in B(p, r_p)} (r_p - \|p-q\|) = 1.
\end{align*}
Therefore, by the observation that the function  $f(r) = \sum_{q \in B(p, r)} (r - \|p-q\|)$ is increasing in $r$, it follows that $(k+1)(1+\delta)r_p \ge \tilde{r}_p$. Therefore, we have
\begin{equation}\label{eq:rpcond}
    \E[\tilde{r}_p \mid \mathcal{E}_k] \le (k+1)(1+\delta)r_p.
\end{equation}
Now using Eq.\@ \eqref{eq:rpprob}
\begin{align*}
    \E[\tilde{r}_p] &= \sum_{k=1}^{\infty} \E[\tilde{r}_p \mid \mathcal{E}_k]\Pr(\mathcal{E}_k) \\
    &\le (2+\delta)r_p \\
    &\hspace{1cm}+ r_p\sum_{k=2}^{\infty}(k+1) \exp(-c(k-1)^2(1+\delta)^2d) \\
    &\le (2+\delta)r_p \\
    &\hspace{1cm}+ (1+\delta)r_p\int_0^{\infty}(x+2)\exp(-C'x^2) \ dx
\end{align*}
where $C' = c(1+\delta)^2d$. We can explicitly evaluate that
$$ \int_0^{\infty} (x+2)\exp(-C'x^2) \ dx  = \sqrt{\frac{\pi}{C'}} + \frac{1}{2C'}.$$
Noting that $d = \Omega(1/\epsilon^2)$, we have that 
$$ \E[\tilde{r}_p] \le (2+O(\epsilon))r_p $$
by picking $\delta = O(\epsilon)$.
\end{proof}

We now show the corresponding lower bound.

\begin{lemma}\label{lem:rlower}
Let $X \subseteq \mathbb{R}^m$ and let $p \in X$. Let $G$ be a random projection from $\mathbb{R}^m$ to $\mathbb{R}^d$ for $d = O(\log \lambda_X \cdot  \log(1/\epsilon)/\epsilon^2)$. Let $r_p$ and $\widetilde{r}_p$ be the radius of $p$ and $Gp$ in $\mathbb{R}^m$ and $\mathbb{R}^d$ respectively, computed according to Eq.\@ \eqref{eq:rdef}. Then 
$$ \E[\tilde{r}_p] \ge \frac{(1-\epsilon)r_p}4.$$
\end{lemma}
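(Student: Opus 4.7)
The plan complements the proof of the upper bound (Lemma~\ref{lem:rupper}): instead of bounding how far the ball $B(p, r_p)$ can expand in the projection, we rule out the possibility that too many points originally far from $p$ collapse close to $Gp$. Set $R := r_p/4$, and recall that $\tilde r_p$ is the unique value with $\phi(\tilde r_p) = 1$, where $\phi(r) := \sum_{q \ne p}(r - \|Gp-Gq\|)^+$ is continuous and nondecreasing. Hence $\phi(R) < 1$ deterministically implies $\tilde r_p > R$, so it suffices to exhibit an event of probability $\ge 1 - \epsilon$ on which $\phi(R) < 1$; combined with $\tilde r_p \ge 0$, this yields $\E[\tilde r_p] \ge R(1 - \epsilon) = (1-\epsilon) r_p/4$.

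The heart of the proof is a scale-$r_p$ analog of Theorem~\ref{thm:indyk}: applied to $p$ at scale $\rho := r_p/4$, it asserts that with probability at least $1 - \epsilon$, every $q \in X$ with $\|p-q\| > (1+2\epsilon)\rho$ satisfies $\|Gp-Gq\| > (1+\epsilon)\rho > R$. Condition on this event. Then every $q$ contributing to $\phi(R)$ (i.e., with $\|Gp-Gq\| \le R$) must satisfy $\|p-q\| \le (1+2\epsilon)\rho$. Each such $q$ contributes at least $r_p - (1+2\epsilon)\rho = (3-2\epsilon) r_p / 4$ to the unit sum $\sum_{q}(r_p - \|p-q\|)^+ = 1$ defining $r_p$, so $|B(p, (1+2\epsilon) r_p/4) \cap X| \le 4/((3-2\epsilon)\, r_p)$. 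Bounding each term of $\phi(R)$ trivially by $R$,
$$\phi(R) \le R \cdot \frac{4}{(3-2\epsilon)\, r_p} = \frac{1}{3-2\epsilon} < 1,$$
forcing $\tilde r_p > R$ on this event, as desired.

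The main obstacle is the scale-$r_p$ generalization of Theorem~\ref{thm:indyk}. Indyk and Naor state their result only for the specific scale $\rho = D(p, X \setminus \{p\})$, but the underlying argument---partitioning far points into dyadic shells around $p$ at the reference scale, replacing each shell by a $\lambda_X^{O(1)}$-sized net at its natural scale via the doubling property, and applying the Gaussian concentration bound in Equation~\eqref{eq:Gxa} with a union bound over the resulting nets---is insensitive to the choice of $\rho$. A careful reading recovers the same conclusion for $\rho = r_p/4$ whenever $d = \Omega(d_X \log(1/\epsilon)/\epsilon^2)$, matching the dimension assumption of the lemma; verifying this adaptation is the only technical ingredient.
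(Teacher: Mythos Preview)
Your argument is correct and follows essentially the same strategy as the paper: show that, with good probability, points originally outside a small ball around $p$ do not collapse into a small ball around $Gp$, then use the defining equation for $r_p$ to cap the number of points that can legitimately land nearby, forcing $\tilde r_p$ to be at least a constant fraction of $r_p$. One small slip: your $\phi$ excludes $q=p$, but Equation~\eqref{eq:rdef} includes it, so $\tilde r_p$ actually satisfies $\phi(\tilde r_p)+\tilde r_p=1$ rather than $\phi(\tilde r_p)=1$; this is harmless since your counting already includes $p$ in the $4/((3-2\epsilon)r_p)$ bound, and the same calculation gives $\phi(R)+R\le NR\le 1/(3-2\epsilon)<1$.

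The one substantive difference is in how the Indyk--Naor nearest-neighbor preservation is invoked. You appeal to a scale-$\rho$ generalization of Theorem~\ref{thm:indyk} and note (correctly) that the shelling-and-netting proof is oblivious to the choice of reference scale, so the adaptation goes through. The paper instead avoids reopening that proof via a phantom-point trick: it deletes the points of $B(p,r_p/2)\setminus\{p\}$ from $X$ and inserts a single point $q$ at distance $(1-\epsilon)r_p/2$ from $p$, so that in the modified set the nearest-neighbor distance of $p$ is exactly the desired scale, and Theorem~\ref{thm:indyk} applies verbatim (the doubling dimension changes by at most a constant under these edits). This yields $\Pr(\tilde r_p\ge(1-\epsilon)r_p/2)\ge 1/2$, hence the same bound $\E[\tilde r_p]\ge(1-\epsilon)r_p/4$. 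The paper's route is a bit slicker as a black-box reduction; yours is more direct but, as you acknowledge, asks the reader to re-verify the chaining argument at the new scale.
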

\begin{proof}
Let $k$ be the size of the set $|B(p, r_p/2) \cap X|$. By definition of $r_p$, the following inequality holds:
\begin{align}\label{eqri}
    1 &= \sum_{q \in B(p, r_p)}(r_p-\|p-q\|) \nonumber \\
    &\ge \sum_{q \in B(p, r_p/2)}(r_p-\|p-q\|) \ge \frac{k r_p}2.
\end{align}
Now let $\mathcal{E}$ be the event that the ball $B(Gp, (1-\epsilon)r_p/2)$ contains at most $k$ points. 
By invoking Theorem \ref{thm:indyk}, we will show that $\Pr(\mathcal{E}) \ge 1/2$.
Consider the set $X$ without the $k-1$ points in $B(p,r_p/2)  - \{p\}$, and with an extra point $q$ at distance $(1-\epsilon)r_p/2$ from $p$. The added point $q$ becomes a nearest neighbor of $p \in X$. By Theorem \ref{thm:indyk} part (2) applied to an appropriately chosen $\epsilon'= O(\epsilon)$,  with probability at least $1/2$, no point outside of $B(p,r_p/2)$ is mapped within $(1-\epsilon)r_p/2$ of $p$. After removing $q$, only the originally removed  $k-1$ points (and $p$) can lie in $B(p,r_p(1-\epsilon)/2)$.

Conditioning on $\mathcal{E}$, it follows that 
\begin{multline}\label{eqri2}
    \sum_{Gq \in B(Gp, (1-\epsilon)r_p/2)}\left(\frac{(1-\epsilon) r_p}2-\|Gp-Gq\| \right) \\
    \le \frac{(1-\epsilon)kr_p}{2} .
\end{multline}
Combining Eq.\@ \eqref{eqri2} with \eqref{eqri}, we have that 
$$ \sum_{Gq \in B(Gp, (1-\epsilon)r_p/2)}\left(\frac{(1-\epsilon) r_p}2-\|p-q\| \right) \le 1-\epsilon < 1. $$ 
Therefore, conditional on $\mathcal{E}$, it follows that $\widetilde{r}_p \ge (1-\epsilon)r_p/2$. Hence,
\begin{equation*}
    \mathbb{E}[\tilde{r}_p] \ge \frac{ \mathbb{E}[\tilde{r}_p \mid \mathcal{E}]}2 \ge \frac{(1-\epsilon)r_p}{4}.
    \qedhere
\end{equation*}
\end{proof}
Combining Lemma \ref{lem:rupper} and \ref{lem:rlower} gives us the complete proof of Theorem \ref{thm:constant}. 
\begin{proof}[Proof of Theorem \ref{eq:rdef}]
The theorem follows from combining the result given in Lemma \ref{lem:appx}, that the sum of the radii $r_p$ is a constant factor approximation to the global optimal solution, and Lemmas \ref{lem:rupper} and \ref{lem:rlower}, which state that $\mathbb{E}[\tilde{r}_p]$ is a constant factor approximation to $r_p$.
\end{proof}

\subsection{Obtaining a Solution to Facility Location in Larger Dimension}
Recall that the main technical challenge is to show that if a facility is within distance $O(\tilde{r}_p)$ of a fixed point $p$ in $\mathbb{R}^d$ (note that $\tilde{r}_p$ is calculated according to Eq.\@ \eqref{eq:rdef}, in $\mathbb{R}^d$), then the facility must also be within distance $O(r_p)$ in $\mathbb{R}^m$, the larger dimension. We prove this claim formally in Theorem \ref{thm:main}. 

Before presenting Theorem \ref{thm:main}, we need the following technical result later on for our probability calculations.

\begin{lemma}\label{lem:tech}
Denote $\textup{erf}(x)$ to be the error function defined as
$$\textup{erf}(x) = \frac{2}{\sqrt{\pi}} \int_0^x \exp(-t^2) \ dt.$$ Then,
$$1-\textup{erf}(x) \le \exp(-x^2) $$
for all $x \ge 1$.
\end{lemma}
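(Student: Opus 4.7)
The statement is essentially a standard Mills-ratio-type tail bound for the Gaussian integral, so the plan is to prove it by a direct analytical bound rather than anything clever.

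My plan is to start from the integral representation
\[
1 - \mathrm{erf}(x) \;=\; \frac{2}{\sqrt{\pi}} \int_x^{\infty} e^{-t^2}\, dt,
\]
and then use the elementary trick of inserting a factor of $t/x$ into the integrand. Specifically, since $t \geq x \geq 1$ implies $t/x \geq 1$, we have $e^{-t^2} \leq (t/x)\, e^{-t^2}$ throughout the range of integration. The point of doing this is that $t\,e^{-t^2}$ has an explicit antiderivative, namely $-\tfrac{1}{2}e^{-t^2}$, so the resulting integral collapses cleanly.

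Carrying this out, I would write
\[
\int_x^{\infty} e^{-t^2}\, dt \;\leq\; \frac{1}{x}\int_x^{\infty} t\, e^{-t^2}\, dt \;=\; \frac{1}{2x}\, e^{-x^2},
\]
which gives $1 - \mathrm{erf}(x) \leq \frac{1}{x\sqrt{\pi}}\, e^{-x^2}$. To finish, note that for $x \geq 1$ the prefactor satisfies $\frac{1}{x\sqrt{\pi}} \leq \frac{1}{\sqrt{\pi}} < 1$, so $1 - \mathrm{erf}(x) \leq e^{-x^2}$, which is what we want.

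There is no real obstacle here — this is purely a computation. The only thing to be slightly careful about is that the insertion trick gives the suboptimal constant $\frac{1}{x\sqrt{\pi}}$ rather than the sharper $\frac{1}{2x\sqrt{\pi}}$ that comes from integration by parts, but since we just need to beat $1$ and $x \geq 1$ already makes $\sqrt{\pi} > 1$ handle that, the crude version suffices. (If one wanted a tighter bound, a single integration by parts $\int_x^\infty e^{-t^2}\,dt = \frac{e^{-x^2}}{2x} - \int_x^\infty \frac{e^{-t^2}}{2t^2}\,dt \leq \frac{e^{-x^2}}{2x}$ works too.)
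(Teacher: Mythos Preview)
Your proof is correct and takes essentially the same approach as the paper: both use the insertion trick $e^{-t^2} \le (t/x)e^{-t^2}$ for $t \ge x$ to obtain $1 - \mathrm{erf}(x) \le \frac{1}{x\sqrt{\pi}}e^{-x^2}$, then note that $\frac{1}{x\sqrt{\pi}} < 1$ for $x \ge 1$. The paper merely wraps the same computation in probabilistic language (viewing $\exp(-t^2)/\sqrt{\pi}$ as a density), but the analytic content is identical.
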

\begin{proof}
Note that $f(x) \exp(-t^2)/\sqrt{\pi}$ is a valid probability density function over $\mathbb{R}$ so that
$$  \text{erf}(x) = 1- \Pr(|Z| \ge x)$$
where $Z$ is distributed according to the density $f$. Now
\begin{align*}
    \Pr(Z \ge x) &= \frac{1}{\sqrt{\pi}} \int_x^{\infty} \exp(-t^2) \ dt\\
    &\le \frac{1}{\sqrt{\pi}} \int_x^{\infty} \frac{t}x \exp(-t^2) \ dt \\
    &= \frac{\exp(-x^2)}{2x\sqrt{\pi}}
\end{align*} 
where the inequality follows from the fact that $t \ge x$. By symmetry, we have
$$ \text{erf}(x) + \exp(-x^2) -1 \ge \exp(-x^2)\left(1 - \frac{1}{x\sqrt{\pi}} \right) \ge 0 $$
for $x \ge 1$.
\end{proof}

The proof of Theorem \ref{thm:main} relies on the careful balancing of the following two events. First, we control the value of the radius $\tilde{r}_p$ and show that $\tilde{r}_p \approx r_p$. In particular, we show that the probability of $\tilde{r}_p \ge k r_p$ for any constant $k$ is exponentially decreasing in $k$. The argument for this part follows similarly to the argument in Lemma \ref{lem:rupper}.

Next, we need to bound the probability that a `far' point comes `close' to $p$ after the dimensionality reduction. While Theorem \ref{thm:indyk} roughly states that `far' points do not come too `close', we need a more detailed result to quantify how close far points can come after the dimension reduction. 

To study this in a more refined manner, we bucket the points in $X \setminus \{p\}$ according to their distance from $p$. The distance spacing between buckets will be a linear scale. We show that points in $X\setminus \{p\}$ that are in `level' $i$ do not shrink to a `level' smaller than $O(\sqrt{i})$. Note that we need to control this even across all levels. To do this requires a chaining type argument which crucially depends on the doubling dimension of $X$. Finally, a careful combination of probabilities gives us our result.

\begin{theorem}\label{thm:main}
Let $X\subseteq \mathbb{R}^m$ and let $G$ be a random projection from $\mathbb{R}^m$ to $\mathbb{R}^d$ for $d = O(\log \lambda_X \cdot  \log(1/\epsilon)/\epsilon^2)$. Fix $p \in X$ and let $x \in X$ be the point that maximizes $\|p-x\|$ subject to the condition $Gx \in B(Gp, C \tilde{r}_p)$ where $C$ is a fixed constant. Then
$$ \E \|p-x\| \le 2C(1+O(\epsilon)) r_p. $$
\end{theorem}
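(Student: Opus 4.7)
The plan is to decompose the event $\|p-x\| \ge K r_p$ into two controllable sub-events (either $\tilde{r}_p$ is unusually large, or some distant point of $X$ is contracted too aggressively by $G$), bound each via concentration and a doubling-dimension chaining argument, and integrate the resulting tail of $Y := \|p-x\|/r_p$ to bound $\E[Y]$.

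\textbf{Step 1 (upper tail of $\tilde{r}_p$).} Following the reasoning in the proof of Lemma \ref{lem:rupper}, if $\tilde{r}_p \ge k r_p$ with $k \ge 2$ then equation \eqref{eq:rdef} forces every point of $B(p, r_p) \cap X$ to be projected outside $B(Gp, (k-1) r_p)$; applying Lemma \ref{lem:indyk} after rescaling $B(p, r_p)\cap X$ into the unit ball yields
\begin{equation*}
    \Pr\brc{\tilde{r}_p \ge k r_p} \le \exp(-c k^2 d)
\end{equation*}
for some absolute $c > 0$.

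\textbf{Step 2 (distant points do not shrink too much).} For each integer $i \ge 1$, bucket $X\setminus\{p\}$ into levels $L_i := \{q \in X : \|p-q\|/r_p \in [i, i+1)\}$. The key technical claim is that for $d = \Omega(d_X \log(1/\epsilon)/\epsilon^2)$,
\begin{equation*}
    \Pr\brc{\exists q \in L_i : \|Gp - Gq\| \le \tfrac{1}{2}\sqrt{i}\, r_p} \le \exp(-c' i d)
\end{equation*}
for some absolute $c' > 0$. Because $|L_i|$ can be huge, a naive union bound with \eqref{eq:Gxb} is far too wasteful; instead I use a multi-scale net argument driven by the doubling property. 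Rescaling by $1/(i r_p)$, the set $L_i$ lies in a bounded annulus around $p$ and admits $\eta$-nets of size $(1/\eta)^{O(d_X)}$. For the coarsest net, anti-concentration \eqref{eq:Gxb} gives the exponential gain in $i \cdot d$; for finer dyadic scales, each residual chord is controlled by Lemma \ref{lem:indyk}, so the entropy is dominated by the anti-concentration and telescopes to the stated per-level bound.

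\textbf{Step 3 (assembly and integration).} If $Y \ge K$ then $x \in L_i$ for some $i \ge K-1$, and by definition of $x$ we have $\|Gp - Gx\| \le C \tilde{r}_p$. Hence either (i) a point in some $L_i$ with $i \ge K-1$ satisfies $\|Gp-Gq\| \le \tfrac{1}{2}\sqrt{i}\, r_p$, or (ii) $C\tilde{r}_p \ge \tfrac{1}{2}\sqrt{K-1}\, r_p$, i.e.\@ $\tilde{r}_p \ge \frac{\sqrt{K-1}}{2C}\, r_p$. Summing the Step 2 bound over $i \ge K-1$ and applying the Step 1 bound, for a constant $c''>0$ depending only on $C$ we obtain $\Pr(Y \ge K) \le \exp(-c'' K d)$ whenever $K \ge 2C$. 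Therefore
\begin{equation*}
    \E[Y] = \int_0^\infty \Pr(Y \ge K)\, dK \le 2C(1+\epsilon) + \int_{2C(1+\epsilon)}^\infty \exp(-c'' K d)\, dK = 2C(1+\epsilon) + O(1/d),
\end{equation*}
and using $d = \Omega(1/\epsilon^2)$ the residual $O(1/d) = O(\epsilon^2)$ is absorbed into $2C \cdot O(\epsilon)$, giving $\E\|p-x\| \le 2C(1+O(\epsilon))\, r_p$ as claimed.

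\textbf{Main obstacle.} The delicate point is Step 2: obtaining a failure probability that decays in $i \cdot d$ uniformly over all of $L_i$, rather than paying a factor of $|L_i|$. This rules out a term-by-term union bound and forces the multi-scale net construction whose cardinality bounds come from the doubling dimension of $X$. Balancing the anti-concentration gain against the entropy of these nets through a chaining sum is the technical heart of the argument, and is precisely the step where the hypothesis $d = \Omega(d_X \log(1/\epsilon)/\epsilon^2)$ is used; in spirit it mirrors the nearest-neighbor-preserving embedding of Indyk and Naor \yrcite{indyknaor}.
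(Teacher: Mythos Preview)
Your overall architecture (tail of $\tilde{r}_p$, per-level shrinkage bound, then a dichotomy and integration) matches the paper's, but the execution has a genuine gap that breaks the claimed constant $2C(1+O(\epsilon))$.

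The fatal point is Step~3. For $K$ just above $2C$, your case (ii) reads $\tilde{r}_p \ge \frac{\sqrt{K-1}}{2C}\,r_p \approx \frac{1}{\sqrt{2C}}\,r_p$, and since typically $\tilde{r}_p \approx r_p$, this event has probability close to $1$; Step~1 gives nothing here. Hence your asserted tail $\Pr(Y\ge K)\le \exp(-c''Kd)$ is simply false for $K$ in a range of order $C^2$ around $2C$, and you cannot start the integral at $2C(1+\epsilon)$. At best your dichotomy becomes informative once $\sqrt{K-1}/(2C)\ge 2$, i.e.\ $K\gtrsim 16C^2$, which yields $\E[Y]=O(C^2)$, not $2C(1+O(\epsilon))$. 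A secondary issue: the Step~2 bound you state, $\exp(-c' i d)$, is too strong. Points in $L_i$ are at distance $\approx i r_p$, so shrinking to $\tfrac12\sqrt{i}\,r_p$ is a contraction by $\Theta(\sqrt{i})$; via \eqref{eq:Gxb} this gives only $i^{-\Theta(d)}$, not $e^{-\Theta(id)}$, and chaining over nets of size $\lambda_X^{O(\log i)}$ does not upgrade this to exponential-in-$i$.

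The paper resolves both issues by (a) spacing levels at width $\Theta(\epsilon)$, with $t_i = 1+2\epsilon+\epsilon(i-1)/4$, so the relevant contraction factor is $1-\Theta(\epsilon)$ for small $i$ (handled by \eqref{eq:Gxa}, giving $\exp(-c d\epsilon^2)$) and $\Theta(1/\sqrt{i})$ for large $i$ (handled by \eqref{eq:Gxb}, giving $i^{-cd}$); and (b) replacing your single dichotomy by a \emph{double} decomposition over both the level $i$ of $\|p-x\|$ and the level $j$ of $\tilde r_p$, bounding $\Pr(\mathcal E_i)$ via \eqref{eq:bound1}--\eqref{eq:bound2}. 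The second ingredient is exactly what lets them avoid the vacuous ``$\tilde r_p$ is small'' branch and recover the leading constant $2C$. If you want to salvage your proof, you need to move the shrinkage threshold to roughly $Cr_p(1+\epsilon+\epsilon\sqrt{i})$ and condition on $\tilde r_p$ at the same $\epsilon$-granularity, which is essentially the paper's argument.
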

\begin{proof} For simplicity, let $r = r_p, \tilde{r} = \tilde{r}_p,$ and define $t_{-1} = 0, t_0 = 1,$ and $$t_i = 1+2\epsilon +  \frac{\epsilon (i-1)}4 $$  for all $i \ge 1$. Define $\mathcal{E}_i$ to be the event that $2Crt_i \le \|p-x\| \le 2Crt_{i+1}$ (the range $2Crt_i$ to $2Crt_{i+1}$ are our `buckets' from the discussion preceding the proof). Then
\begin{equation}\label{evbound_soln}
   \E\|p-x\| = \sum_{i \ge -1} \E[\|p-x\| \mid \mathcal{E}_i] \Pr(\mathcal{E}_i). 
\end{equation}
We first bound $\Pr(\mathcal{E}_i)$ in two different ways. By conditioning on the value of $\tilde{r}$, we can write this probability as
\begin{align}
    \Pr(\mathcal{E}_i) &= \sum_{j \ge -1}\Pr(\mathcal{E}_i \text{ and } 2rt_j \le \tilde{r} \le 2rt_{j+1}) \label{eq:bound1} \\
    &= \sum_{j \ge -1}\big[\Pr(\mathcal{E}_i \mid 2rt_j \le \tilde{r} \le 2rt_{j+1}) \nonumber \\
    &\hspace{2cm}\cdot\Pr(2rt_j \le \tilde{r}\le 2rt_{j+1})\big] \label{eq:bound2}.
\end{align}
In the first of our two bounds for $\Pr(\mathcal{E}_i)$, we proceed by bounding $\Pr(2rt_j \le \tilde{r}\le 2rt_{j+1})$. Heuristically, the event $2rt_j \le \tilde{r}\le 2rt_{j+1}$ would mean that some point in $B(p, r)$ will be very far away from $p$ after the random projection and the probability of this event can be controlled very well. 

More formally, we first claim that the event $2rt_j \le \tilde{r}\le 2rt_{j+1}$ implies that there exists a point $z$ in $B(p, r)$ such that $\|G(z-p)\| \ge rt_j$. This is because otherwise, we have $\|Gp-Gq\| < rt_j$ for all $q \in B(p,r)$. This means that
\begin{align*}
    \sum_{q \in B(Gp, 2rt_j)} (2rt_j - \|Gp-Gq\| ) &> \sum_{q \in B(p, rt_j)} (2rt_j -rt_j) \\
    &\ge |B(p, r) \cap X| \cdot rt_j \\
    &\ge |B(p, r) \cap X| \cdot r.
\end{align*} 
We also know that $|B(p, r) \cap X| \cdot r \ge 1$ from \eqref{eq:numberlb}. Altogether, we have that $\sum_{q \in B(Gp, 2rt_j)} (2rt_j - \|Gp-Gq\| ) > 1$
which cannot happen by definition of $\tilde{r}$ and our assumption that $2rt_j \le \tilde{r}$ (see Figure \ref{fig:circle_radii}). Therefore by Lemma \ref{lem:indyk}, we have
\begin{equation}\label{eq:proof1}
     \Pr(2rt_j \le \tilde{r}\le 2rt_{j+1}) \le \exp(-C_1dt_j^2) 
\end{equation}
for some constant $C_1$.
Summing over the variable $j$ in inequality \eqref{eq:proof1} gives us a bound on $\Pr(\mathcal{E}_i)$. 
We will only end up using this bound for $j \ge \Omega(\sqrt{i})$, and will use the second bound for small $j$.

We now give a second bound on $\Pr(\mathcal{E}_i)$ by controlling
$\Pr(\mathcal{E}_i \text{ and } 2rt_j \le \tilde{r} \le 2rt_{j+1})$. Note that
the event $\mathcal{E}_i \text{ and } 2rt_j \le \tilde{r} \le 2rt_{j+1}$ together imply that there exists some $x$ that satisfies $$2Crt_i \le \|p-x\| \le 2Crt_{i+1} \ \text{ and } \|G(x-p)\| \le 2Crt_{j+1} $$
due to the fact that $Gx$ is a point in $B(Gp, C \tilde{r}_p)$. Therefore,
\begin{align*}
&\Pr(\mathcal{E}_i \text{ and } 2rt_j \le \tilde{r} \le 2rt_{j+1}) \\
\le &\Pr(\exists x, \, 2Crt_i \le \|p-x\| \le 2Crt_{i+1} \\
&\hspace{2cm} \text{ and } \|G(x-p)\| \le 2Crt_{j+1}).
\end{align*}
We bound the right hand side of the above probability for the range $j = O(\sqrt{i})$. Let $$X_i = \{x \in X \mid 2Crt_{i} \le \|x-p\| < 2Crt_{i+1} \}.$$ By the definition of doubling dimension, we can find a covering of $X_i$ with at most $\lambda^{O(\log(t_i/\epsilon))}$ balls of radius $ 2Cr\epsilon/4$ centered at points in some set $S \subseteq X$. Then by Lemma \ref{lem:indyk}, we have
\begin{multline}\label{eq:proof2}
    \hspace{-0.2cm}\Pr\bigg(\exists s \in S \ \exists x \in B(s, 2Cr\epsilon/4) \cap X_i, \\
    \hspace{1.2cm}\|Gs-Gx\| \ge \frac{2Cr\epsilon \sqrt{i}}{8}\bigg) \le \exp(-O(di))
\end{multline}
	if $d \ge \Omega(\log(\lambda) \log(1/\epsilon)/\epsilon^2)$. Now fix $s \in S$. If $\|G(s-p)\| < 2Cr(1 + \epsilon + \epsilon \sqrt{i}/4)$ then
	
\begin{align*}
    \frac{\|G(s-p)\|}{\|s-p\|} &\le \frac{1+\epsilon + \epsilon \sqrt{i}/4}{1+2\epsilon + \epsilon i/4} \\
    &\le 
	\begin{cases}
	1-\epsilon/4,  &\text{for} \ 0 \le i \le 1/\epsilon^2 \\
	O(1)/\sqrt{i}, &\text{for} \ i > 1/\epsilon^2 
	\end{cases}.
\end{align*}
Hence by applying the two inequalities \eqref{eq:Gxa} and \eqref{eq:Gxb} to the unit vector $(s-p)/\|s-p\|$, we have
\begin{align*}
    &\Pr\left(\exists s \in S, \|G(s-p)\| \le 2Cr\left(1 + \epsilon + \frac{\epsilon \sqrt{i}}{4} \right) \right) \\
	&\le
	\begin{cases}
     \exp(-c^{''}d \epsilon^2),  &\text{for} \ 0 \le i \le 1/\epsilon^2 \\
 i^{-c^{''}d}, &\text{for} \ i > 1/\epsilon^2 
	\end{cases}.
\end{align*}
	Note that we used the inequality \eqref{eq:Gxa} for the bound $i \le 1/\epsilon^2$ and the inequality \eqref{eq:Gxb} for $i > 1/\epsilon^2$. Combining the above bound with the inequality in \eqref{eq:proof2} gives us
\begin{align*}
    &\Pr\left(\exists x \in X_i, \|G(x-p)\| \le 2Cr\left(1+ \epsilon + \frac{\epsilon \sqrt{i}}{8}\right) \right) \\
    &\le 
	\begin{cases}
    2 \exp(-c^{''}d \epsilon^2),  &\text{for} \ 0 \le i \le 1/\epsilon^2 \\
 2i^{-c^{''}d}, &\text{for} \ i > 1/\epsilon^2 
	\end{cases}.
\end{align*}
	Thus for $j \le C_2\sqrt{i}$, we have
	\begin{align*}
	    &\Pr(\exists x, 2Crt_{j+1} > \|G(x-p)\| \\
	    &\hspace{2cm} \text{ and } 2Crt_{i} \le \|x-p\| < 2Crt_{i+1}) \\
	    &\le
	\begin{cases}
    2 \exp(-C_3d \epsilon^2),  &\text{for} \ 0 \le i \le 1/\epsilon^2 \\
 2i^{-C_3d}, &\text{for} \ i > 1/\epsilon^2 
	\end{cases}
	\end{align*}
	where $C_2, C_3$ are fixed constants. Using the representation given in \eqref{eq:bound2} for $\Pr(\mathcal{E}_i)$ along with \eqref{eq:proof1}, we see that for $0 \le i \le 1/\epsilon^2$, we can bound
	\begin{equation} \label{eq:prob_bound1}
	    	 \Pr(\mathcal{E}_i)  \le   4\exp(-C_2d \epsilon^2) + \sum_{j \ge 1} \exp(-C_1dj^2\epsilon^2)
	\end{equation}
	while for $i > 1/\epsilon^2$, we instead use the following stronger bound
	\begin{equation} \label{eq:prob_bound2}
	    	\Pr(\mathcal{E}_i) \le  2C_2\sqrt{i} \cdot  i^{-C_3d} + \sum_{j \ge C_2\sqrt{i}} \exp(-C_1dj^2\epsilon^2)
	\end{equation}
	which comes from using \eqref{eq:bound1} for $j \le C_2 \sqrt{i}$ and \eqref{eq:bound2} for larger $j$. Combing these bounds with \eqref{evbound_soln}, we have
	\begin{equation}\label{eq:expectation}
	    \E\|p-x\| \le 2C(1+O(\epsilon))r + \sum_{i \ge 0} 2Crt_{i+1}\Pr(\mathcal{E}_i). 
	\end{equation}
Our task is to now bound the sum $\sum_{i \ge 0} t_{i+1}\Pr(\mathcal{E}_i)$. In the rest of the proof, we will show that this sum is $O(\epsilon)$. We split the sum into two terms depending on if $i \le 1/\epsilon^2$ or if $i > 1/\epsilon^2$. Using the bounds 
\eqref{eq:prob_bound1} and \eqref{eq:prob_bound2} gives us
\begin{align}
    &\hspace{0.5cm}\sum_{i \ge 0}t_{i+1}\Pr(\mathcal{E}_i) \nonumber \\
    &\le C_4\sum_{i \le 1/\epsilon^2} i \bigg( \exp(-C_2 d \epsilon^2) +  \sum_{j \ge 1} \exp(-C_1dj^2\epsilon^2) \bigg) \label{eq:bound3} \\
    &+C_4 \sum_{i > 1/\epsilon^2}  \epsilon i \bigg( i^{-C_2 d + 1/2} +  \sum_{j \ge C_2\sqrt{i}} \exp(-C_1dj^2\epsilon^2) \bigg) \label{eq:bound4}
\end{align}
for some constant $C_4$. In \eqref{eq:bound3}, we are using the fact that $t_{i+1} = O(i)$ and for \eqref{eq:bound4}, we are instead using $t_{i+1} = O(\epsilon i)$. We can bound \eqref{eq:bound3}
\begin{equation}\label{eq:bound1b}
   \frac{\exp(-C_2 d \epsilon^2)}{\epsilon^4} + \frac{1}{\epsilon^2} \sum_{j \ge 1} \exp(-C_1 d j^2 \epsilon^2) \le O(\epsilon)
\end{equation}
by using the fact that $d = \Omega(\log(1/\epsilon)/\epsilon^2)$. 

We now focus on bounding \eqref{eq:bound4}. As a first step, we have the estimate
$$ \epsilon\sum_{i > 1/\epsilon^2}   i^{-C_2 d + 3/2} = O(\epsilon) $$
which holds for large enough constant $d$. Finally, bounding the remaining sum of \eqref{eq:bound4} by an integral gives us 
\begin{align*}
&\hspace{0.5cm} \epsilon \sum_{i > 1/\epsilon^2} i\sum_{j \ge C_2 \sqrt{i}}\exp(-C_1dj^2\epsilon^2) \\
&\le \epsilon \int_{1}^{\infty} x \int_{\sqrt{x}}^{\infty} \exp(-C_5d t^2 \epsilon^2) \ dt \ dx
\end{align*}
for some constant $C_5$. Now using the definition of the complementary error function, we can compute that
\begin{align*}
     &\hspace{0.5cm} \epsilon \int_{1}^{\infty} x \int_{\sqrt{x}}^{\infty} \exp(-C_5d t^2 \epsilon^2) \ dt \ dx  \\
     &\le O(d^{-1/2}) \int_1^{\infty} x \cdot \text{erfc}(\epsilon \sqrt{C_5d \cdot x}) \ dx  \\
     &\le O(\epsilon) \int_1^{\infty} x \cdot \text{erfc}(\epsilon \sqrt{C_5d \cdot x}) \ dx.
\end{align*}
From Lemma \ref{lem:tech}, we have 
\begin{multline}\label{eq:bound2b}
    \int_1^{\infty} x \cdot \text{erfc}(\epsilon \sqrt{C_5d \cdot x}) \ dx \le \int_1^{\infty}x \cdot \exp(-C_5^2\epsilon^2 d \cdot x) \ dx \\
    = O(\epsilon)
\end{multline}
using the fact that $d = \Omega(\log(1/\epsilon)/\epsilon^2)$. Altogether, the bounds \eqref{eq:bound1b} and \eqref{eq:bound2b} allow us to bound the right hand side of \eqref{eq:bound3} and \eqref{eq:bound4} and therefore, bound the sum $\sum_{i \ge 0} t_{i+1}\Pr(\mathcal{E}_i)$ as $O(\epsilon)$. Finally, using \eqref{eq:expectation}, we end up with
\begin{equation*}
   \E\|p-x\| \le 2C(1+O(\epsilon))r. \qedhere 
\end{equation*}
\end{proof}
As a corollary, we can prove Theorem \ref{cor:main}.
\begin{proof}[Proof of Theorem \ref{cor:main}]
Let $\mathcal{F}_d$ be a locally optimal solution in $GX$. When we evaluate the cost of $\mathcal{F}_d$ in the larger dimension $\mathbb{R}^m$, the number of facilities stays the same.  Now since $\mathcal{F}_d$ is a locally optimal solution in $\mathbb{R}^d$, each point $p$ has a facility that is within distance $C\tilde{r}_p$ in $\mathbb{R}^d$. Then by Theorem \ref{thm:main}, the connection cost of $p$ in the larger dimension is bounded by $C'r_p$, for some constant $C'$, in expected value. Summing over all points $p \in X$ gives us
$$\mathbb{E}[\cost_m(\F_d)] \le |\F_d|+O\bigg(\sum_{p \in X}r_p\bigg).$$
Finally, since $|\F_d| \le \cost_d(\F_d)$ by definition, and since $\sum_{p \in X} r_p = O(F)$ by Lemma \ref{lem:appx}, we have that
$$|\F_d|+O\bigg(\sum_{p \in X}r_p\bigg) \le \cost_d(\F_d)+O(F).$$
Together, these prove the theorem.
\end{proof}

\section{Dimension Reduction for MST: Omitted Proofs} \label{sec:app_mst}

In this section, we prove Lemma \ref{lem:MST} and Theorem \ref{thm:MainMST}.

\subsection{Proof of Theorem \ref{thm:MainMST}} \label{subsec:mst1}

In this subsection, we prove that Lemma \ref{lem:MST} implies Theorem \ref{thm:MainMST}. To see why, first note that $\cost_X(\widetilde{\mathcal{M}}) \ge \cost_X(\mathcal{M})$ and $\cost_{GX}(\mathcal{M}) \ge \cost_{GX}(\widetilde{\mathcal{M}})$, since $\mathcal{M}$ is the minimum spanning tree on $X$ and $\widetilde{\mathcal{M}}$ is the minimum spanning tree on $GX$. Moreover, for each edge $e = (x, y) \in \mathcal{M},$ $\|Gx-Gy\|$ has distribution $\chi_d/\sqrt{d} \cdot \|x-y\|,$ where $\chi_d$ is the square root of a chi-square with $d$ degrees of freedom. This has mean
\[\mu = \|x-y\| \cdot \frac{1}{\sqrt{2d}} \cdot \frac{\Gamma((d+1)/2)}{\Gamma(d/2)} = \|x-y\| \cdot \left(1 - O\left(\frac{1}{d}\right)\right)\]
and variance $\|x-y\|^2-\mu^2 = \|x-y\|^2 \cdot O(1/d)$ \cite{wolframchi}. Therefore, the standard deviation of $\|G(x-y)\|$ is at most $\epsilon \cdot \|x-y\|$ since $d = \Omega(\epsilon^{-2})$. Therefore, the expectation of $\cost_{GX}(\mathcal{M})$ is $\sum_{e = (x, y) \in \mathcal{M}} \|x-y\| \cdot (1-O(1/d)) = M \cdot (1-O(1/d))$. Also, using the well known fact that for any (possibly correlated) random variables $X_1, \dots, X_n,$ $\sqrt{Var(X_1+\dots+X_n)} \le \sum \sqrt{Var(X_i)},$ we have that the standard deviation of $\cost_{GX}(\mathcal{M})$ is at most $\sum_{e = (x, y) \in \mathcal{M}} \epsilon \cdot \|x-y\| = \epsilon \cdot M$.
    
To finish, define random variables $Z_1 = \cost_X(\widetilde{\mathcal{M}})-\cost_X(\mathcal{M}),$ $Z_2 = \cost_X(\mathcal{M}) - \cost_{GX}(\mathcal{M}),$ and $Z_3 = \cost_{GX}(\mathcal{M}) - \cost_{GX}(\widetilde{\mathcal{M}})$. Our observations from the previous paragraph tell us that $Z_1$ and $Z_3$ are nonnegative, and $Z_2$ has nonnegative expectation and standard deviation bounded by $O(\epsilon) \cdot M$. Finally, Lemma \ref{lem:MST} tells us that $\E[Z_1+Z_2+Z_3] \le O(\epsilon) \cdot M$. However, this means that $\E[Z_1] \le O(\epsilon) \cdot M,$ so $0 \le Z_1 \le O(\epsilon) \cdot M$ with high probability by Markov's inequality. Therefore, $\cost_X(\widetilde{\mathcal{M}}) \le (1+O(\epsilon)) \cdot M$ with high probability, so the pullback is a $1+O(\epsilon)$ approximation with high probability. Likewise, we also have that $0 \le Z_3 = O(\epsilon)$ with high probability, and since $\E[Z_2], \sqrt{Var(Z_2)} \le O(\epsilon) \cdot M$, we also have that $|Z_2| = O(\epsilon)$ with high probability. Thus, $|Z_2+Z_3| = O(\epsilon)$ with high probability, which means $\cost_{GX}(\widetilde{\mathcal{M}}) \in [1-O(\epsilon), 1+O(\epsilon)] \cdot M$. As a result, the MST cost is preserved under dimensionality reduction with high probability as well.

\subsection{Proof of Lemma \ref{lem:MST}} \label{subsec:mst2}

In this subsection, we prove prove Lemma \ref{lem:MST}. In fact, we show the following stronger statement.
\begin{multline} \label{Step1}
\hspace{-0.3cm}\E_G\left[\sum_{e = (x, y) \in \widetilde{\mathcal{M}}} \max(0, \|x-y\|-(1+5\epsilon)\|Gx-Gy\|)\right] \\
\le \epsilon \cdot M.
\end{multline}
To see why this implies Lemma \ref{lem:MST}, by removing the maximum with $0$, Equation \eqref{Step1} implies that $\E_G[\cost_{X}(\widetilde{\mathcal{M}}) - (1+5 \epsilon) \cdot \cost_{GX}(\widetilde{\mathcal{M}})] \le \epsilon \cdot M.$ But $\E_G[\cost_{GX}(\widetilde{\mathcal{M}})] \le \E_G[\cost_{GX}(\mathcal{M})] \le (1+\epsilon) \cdot M,$ which means that $\E_G[\cost_{X}(\widetilde{\mathcal{M}}) - \cost_{GX}(\widetilde{\mathcal{M}})] \le \epsilon \cdot M + 5 \epsilon \cdot (1+\epsilon) \cdot M = O(\epsilon) \cdot M.$

\begin{proof}[Proof of Equation \eqref{Step1}]
Consider some range $A_i = \left[(1+\epsilon)^i, (1+\epsilon)^{i+1}\right).$ We will bound the expectation of 
\[K_i := \sum_{\substack{e = (x, y) \in \widetilde{\mathcal{M}} \\ \|Gx-Gy\| \in A_i}} \max\big(0, \|x-y\|-(1+5\epsilon)\|Gx-Gy\|\big)\]
and sum our upper bounds for $\E_G[K_i]$ over a range of $i$. For $K_i$ to be nonzero, we need there to exist $(x, y)$ such that $\|x-y\| \ge \|Gx-Gy\|$ and $\|Gx-Gy\| \in A_i,$ so $\|x-y\| \ge (1+\epsilon)^i.$ Therefore, we only need to sum $\E[K_i]$ over integers $i$ such that $(1+\epsilon)^i \le \diam(X)$.

To do this, first consider some fixed $i$ and some sufficiently large constant $C_1$, and define $t := t_i := \frac{\epsilon}{C_1} \cdot (1+\epsilon)^i$. Consider the following greedy procedure of selecting a partition of $X$. First, pick some point $x_1$ arbitrarily, then pick some point $x_2$ of distance more than $t$ from $x_1$ (in the original space), then some point $x_3$ of distance more than $t$ from $x_1$ and $x_2$, and so on until we have some $x_1, \dots, x_r$ and can no longer pick any more points. Finally, we partition $X$ into subsets $X_1, \dots, X_r$ so that each $x \in X$ is in $X_p$ if $x_p$ is the closest point to $x$ (breaking ties arbitrarily). Note that the partitioning is deterministic (independent of $G$). We show the following proposition:

\begin{proposition} \label{MSTCircleBound}
    The MST cost $M$ of the dataset $X$ (in the original space $\mathbb{R}^m$) is at least $\frac{r \cdot t}{2}.$ 
\end{proposition}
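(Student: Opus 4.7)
The plan is to reduce the claim to the trivial fact that any Hamiltonian cycle through $\{x_1, \dots, x_r\}$ in $\mathbb{R}^m$ has total length strictly greater than $r \cdot t$ (since all pairwise distances among the $x_j$'s exceed $t$, by construction of the greedy procedure). The reduction is the classical ``double-the-tree-and-shortcut'' argument that shows the MST is a $2$-approximation to TSP, applied to the Steiner subtree of the MST of $X$ spanning $\{x_1, \dots, x_r\}$ rather than to all of $X$.

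Concretely, I would let $T$ be the MST of $X$ and let $T' \subseteq T$ be the minimal subtree of $T$ spanning $S := \{x_1, \dots, x_r\}$, obtained by iteratively pruning any leaf of $T$ that is not in $S$. Since $T'$ is a subgraph of $T$, we have $\cost(T') \le \cost(T) = M$. Doubling every edge of $T'$ yields a multigraph $T''$ in which every vertex has even degree, and hence admits an Eulerian circuit of total weight $2\cost(T') \le 2M$. Traversing this Eulerian circuit and shortcutting past already-visited vertices of $S$ and past Steiner vertices not in $S$ produces a Hamiltonian cycle $C$ on the $r$ points of $S$; by the triangle inequality in $\mathbb{R}^m$, its total weight satisfies $\cost(C) \le 2 \cost(T') \le 2M$.

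To finish, I would use the defining property of the greedy selection: for all $i \ne j$, $\|x_i - x_j\| > t$. Since $C$ consists of exactly $r$ edges, each connecting two distinct points of $S$ and hence of length strictly greater than $t$, we obtain $\cost(C) > r \cdot t$. Combining the two bounds gives $2M \ge \cost(C) > r \cdot t$, i.e., $M > r t / 2$, as claimed. There is no substantive obstacle in this proof: the only genuinely new ingredient is choosing to work with the induced Steiner subtree $T'$ rather than with $T$ itself, so that the shortcut cycle has exactly $r$ edges (matching the right-hand side) and the pairwise lower bound $> t$ becomes effective on every edge.
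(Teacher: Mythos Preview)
Your proof is correct and follows essentially the same approach as the paper: the paper invokes the classical Steiner-tree bound (citing \cite{steiner}) that the MST of $X$ is at least $\frac{r}{2(r-1)}$ times the MST of $\{x_1,\dots,x_r\}$, whereas you unroll that same double-and-shortcut argument explicitly via a Hamiltonian cycle on $S$. The only point you omit is verifying $r \ge 2$ (otherwise there is no cycle); the paper handles this by observing that $(1+\epsilon)^i \le \diam(X)$ forces $t \le \diam(X)/C_1$, so for $C_1 > 2$ the greedy procedure cannot terminate with a single point.
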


\begin{proof}
By a known result on Steiner Trees \cite{steiner}, $M$ is at least $\frac{r}{2(r-1)}$ times the MST cost of the set $\{x_1, \dots, x_r\} \subset X$, assuming $r \ge 2$. As the distance between any $x_p, x_q$ is at least $r$, the MST cost of $\{x_1, \dots, x_r\}$ is at least $(r-1) \cdot t,$ so $M \ge \frac{r}{2(r-1)} \cdot (r-1) \cdot t = \frac{r \cdot t}{2}.$ Finally, as $(1+\epsilon)^i \le \diam(X)$, we have $t \le \diam(X)/C_1$, so if $C_1 > 2$, then the greedy procedure of partitioning $X$ cannot end with just $x_1$, so indeed $r \ge 2$.
\end{proof}

Now, we consider partitioning each $X_p$ into subsets $X_{p, 1}, \dots, X_{p, s}$ as follows. Since the radius of $X_p$ is at most $t$, by definition of the doubling dimension, for each $k \ge 1$ we can split $X_p$ into at most $\lambda_X^k$ balls of radius at most $t/2^k$. We choose the smallest integer $k$ so that all of these balls have diameter at most $2t$ when projected by $G$, and let $s_p = s$ be the number of subsets $X_{p, q}$ formed for each $p$. (Note: this partitioning $X_{p, q}$ is now dependent on $G$.) We claim the following:

\begin{proposition} \label{ReducedDoublingBound}
    For any fixed $p$ and all integers $k \ge 1,$ $\Pr(s_p > \lambda_X^{k}) \le \exp\left(-c d 2^{k}\right)$.
\end{proposition}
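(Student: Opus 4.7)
The plan is to interpret the event $\{s_p > \lambda_X^k\}$ as saying that at level $k$ of the doubling-dimension refinement of $X_p$, at least one of the at most $\lambda_X^k$ balls of original-space radius $r_B := t/2^k$ fails the requirement that its projected diameter be bounded by $2t$. Fix such a candidate ball $B$ with center $c_B$. Since the projected diameter of $B$ is at most twice the projected radius from $c_B$, it suffices to control $\Pr(\max_{x \in B}\|G(x-c_B)\| > t)$ for each ball $B$ and then union-bound over the $\lambda_X^k$ candidate balls.

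For a single ball, I would apply Lemma~\ref{lem:indyk} to the shifted and rescaled set $Y_B := (B - c_B)/r_B$, which lies in the Euclidean unit ball of $\mathbb{R}^m$. Its doubling dimension is bounded by a constant multiple of $d_X$, since translation and uniform rescaling preserve the doubling dimension and taking a subset of $X$ inflates it by at most a constant factor. Lemma~\ref{lem:indyk} then yields
\[
    \Pr\bigl(\max_{x \in B}\|G(x-c_B)\| > t\bigr) = \Pr\bigl(\max_{y \in Y_B}\|Gy\| > 2^k\bigr) \le \exp\bigl(-c\, d\, 2^{2k}\bigr),
\]
valid whenever the threshold $2^k$ exceeds $2$.

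Union-bounding over the at most $\lambda_X^k = 2^{k d_X}$ balls at level $k$ then gives
\[
    \Pr(s_p > \lambda_X^k) \le 2^{k d_X}\exp\bigl(-c\, d\, 2^{2k}\bigr).
\]
Under the hypothesis $d = \Omega(d_X)$ already in force (which is needed to invoke Lemma~\ref{lem:indyk} in the first place), the exponential term $2^{2k}$ comfortably dominates both the union-bound contribution $k d_X \ln 2$ and the target $c' d\, 2^k$, so the right-hand side collapses to $\exp(-c'\, d\, 2^k)$ for a suitable constant $c' > 0$.

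The only real obstacle is the edge case $k = 1$, where the threshold $2^k = 2$ sits precisely at the boundary of Lemma~\ref{lem:indyk}'s applicability (which requires $t > 2$ strictly). I would address this either by applying the lemma with an inflated threshold (for instance, demanding expansion by a factor of $3$ instead of $2$, which only shrinks $c'$) or by descending one extra level of the doubling-dimension refinement and invoking the lemma at level $k+1$ with expansion factor $2^{k+1} \ge 4$, which is safely inside the lemma's regime. Either modification costs only a constant factor in $c'$ and leaves the stated form of the bound intact.
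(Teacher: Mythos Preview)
Your proposal is correct and follows essentially the same argument as the paper: interpret $\{s_p > \lambda_X^k\}$ as failure at level $k$, apply Lemma~\ref{lem:indyk} to each of the at most $\lambda_X^k$ rescaled balls to get a $\exp(-cd\,2^{2k})$ per-ball bound, then union-bound using $\lambda_X^k \le \exp(cdk)$ under $d = \Omega(d_X)$. You are in fact slightly more careful than the paper in making explicit both the subset doubling-dimension issue and the $k=1$ boundary case of Lemma~\ref{lem:indyk}, which the paper silently absorbs into the constants.
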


\begin{proof}
    For any fixed $k$, we split $X_p$ into at most $\lambda_X^k$ balls of radius at most $t/2^k$: this process is independent of $G$. Now, fix a small ball: when we apply the random projection $G$, the probability that it has radius more than $t$ when projected is at most $\exp\left(-c d 2^{2k}\right)$, by Lemma \ref{lem:indyk}. But there are $\lambda_X^k \le \exp\left(c d k\right)$ such balls if $d$ is at least $c^{-1} \log \lambda_X$, so the probability that even one of $G X_{p, q}$ has radius more than $t$ is at most $\exp\left(-c d 2^{2k}\right) \cdot \exp\left(c d k \right) = \exp\left(-c d (2^{2k}-k)\right) \le \exp\left(-c d 2^k\right)$.
\end{proof}

We also make the following observations:
\begin{enumerate}
    \item If $x \in X_p,$ then $\|x-x_p\| \le t$, so the diameter of each $X_p$ is at most $2t$. Likewise, the diameter of each $X_{p, q}$ is at most $2t$ in both the original space and the reduced space. \label{SmallRadius}
    \item By properties of the doubling dimension, for any $x_p$ and all $k \ge 1,$ there are at most $\lambda_X^{C_2 \cdot k}$ points $\{x_{p'}\}_{p'= 1}^{r}$ within $2^k \cdot t$ of $x_p$  for some $C_2$, since $x_1, \dots, x_r$ are all at least $t$ apart. \label{DoublingBound}
\end{enumerate}

Recall that $D(X_p, X_{p'})$ is the \emph{maximum} distance between points in $X_p$ and $X_{p'}$ (in the original space), as opposed to $d(X_p, X_{p'})$ which is the \emph{minimum} distance. Now, for any fixed $i$, we bound the expectation of
\[L_i := \sum_{\substack{p, p' \\ d(GX_p, GX_{p'}) < (1+\epsilon)^{i+1} \\ D(X_p, X_{p'}) \ge (1+5 \epsilon) \cdot (1+\epsilon)^i}} D(X_p, X_{p'}) \cdot s_p s_{p'},\]
where the sum is over all pairs $p, p' \in [r]$.

First, we make the following claim.

\begin{lemma}
    For all $i$ and any fixed $G$, $L_i \ge K_i$.
\end{lemma}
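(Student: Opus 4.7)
The plan is to bound $K_i$ edge-by-edge, assigning each nonzero summand to a pair of sub-blocks inside some bad block pair $(p,p')$, and verifying that each sub-block pair absorbs at most one such edge.

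First, I would fix an edge $e=(x,y)\in\widetilde{\mathcal{M}}$ that contributes a positive amount to $K_i$. By definition, $\|Gx-Gy\|\in A_i$, so $(1+\epsilon)^i\le \|Gx-Gy\|<(1+\epsilon)^{i+1}$, and $\|x-y\|>(1+5\epsilon)\|Gx-Gy\|\ge(1+5\epsilon)(1+\epsilon)^i$. Let $p,p'\in[r]$ be the indices with $x\in X_p$ and $y\in X_{p'}$. Then $d(GX_p,GX_{p'})\le\|Gx-Gy\|<(1+\epsilon)^{i+1}$ and $D(X_p,X_{p'})\ge\|x-y\|\ge(1+5\epsilon)(1+\epsilon)^i$, so the pair $(p,p')$ satisfies the defining constraints of the sum in $L_i$. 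I would also record that $p\ne p'$: otherwise both endpoints would lie in a single block of diameter at most $2t=(2\epsilon/C_1)(1+\epsilon)^i$, which is strictly less than $(1+5\epsilon)(1+\epsilon)^i$ for $C_1$ chosen sufficiently large. Finally, the contribution of $e$ to $K_i$ is at most $\|x-y\|\le D(X_p,X_{p'})$.

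The main step, and the principal technical obstacle, is to show that for any bad pair $(p,p')$ and any pair of sub-blocks $(X_{p,q},X_{p',q'})$, there is at most one edge of $\widetilde{\mathcal{M}}$ crossing from $X_{p,q}$ to $X_{p',q'}$ with $\|Gx-Gy\|\ge(1+\epsilon)^i$. I would argue by an MST exchange: suppose $(x_1,y_1),(x_2,y_2)\in\widetilde{\mathcal{M}}$ are two such edges, with $x_i\in X_{p,q}$, $y_i\in X_{p',q'}$. Deleting $(x_1,y_1)$ from the tree $\widetilde{\mathcal{M}}$ splits it into components $T_A\ni x_1$ and $T_B\ni y_1$; since $(x_2,y_2)$ is still an edge of the resulting forest, its two endpoints lie in the same component, WLOG both in $T_A$. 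Then the edge $(y_1,y_2)$ reconnects $T_A$ and $T_B$, and $\widetilde{\mathcal{M}}-(x_1,y_1)+(y_1,y_2)$ is a spanning tree of $GX$ of weight differing from $\widetilde{\mathcal{M}}$ by $\|Gy_1-Gy_2\|-\|Gx_1-Gy_1\|\le 2t-(1+\epsilon)^i$, using that $\diam(GX_{p',q'})\le 2t$ by construction of the sub-blocks. Since $2t=(2\epsilon/C_1)(1+\epsilon)^i<(1+\epsilon)^i$ once $C_1>2\epsilon$, the new tree is strictly lighter, contradicting the minimality of $\widetilde{\mathcal{M}}$.

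Combining these two steps, each bad pair $(p,p')$ hosts at most $s_p\cdot s_{p'}$ edges of $\widetilde{\mathcal{M}}$ that contribute to $K_i$ (one per sub-block pair), each of contribution at most $D(X_p,X_{p'})$. Summing over all bad pairs $(p,p')$ yields
\[K_i \;\le\; \sum_{\substack{(p,p')\text{ bad}}} s_p\, s_{p'}\, D(X_p,X_{p'}) \;=\; L_i,\]
as required. The delicacy of the argument lies entirely in the exchange step, whose validity depends on choosing the absolute constant $C_1$ in $t=(\epsilon/C_1)(1+\epsilon)^i$ large enough that sub-blocks are genuinely smaller in the projected space than the MST edges at level $i$.
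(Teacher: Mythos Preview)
Your proof is correct and follows essentially the same approach as the paper: identify for each contributing edge the bad block pair $(p,p')$, bound the contribution by $D(X_p,X_{p'})$, and use an MST exchange to show at most one contributing edge per sub-block pair $(X_{p,q},X_{p',q'})$. Your treatment is in fact slightly more careful than the paper's---you explicitly verify $p\neq p'$ and spell out the cut-and-swap step of the exchange argument, whereas the paper simply asserts that one of the two short replacement edges preserves the spanning tree property.
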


\begin{proof}
    For any edge $e \in \widetilde{\mathcal{M}},$ if $e$ has length in range $A_i$ (in the projected space), then this length is greater than $2C_1 \cdot t$ (assuming $\epsilon < 1/2$). Then, $e$ is some edge $(Gx, Gy)$ where $x \in X_{p, q}, y \in X_{p', q'}$, where $(p, q) \neq (p', q')$ by Observation \ref{SmallRadius}. So, if edge $e$ contributes toward the sum in $K_i,$ then $D(X_p, X_{p'}) \ge \|x-y\| \ge (1+5 \epsilon) \cdot \|Gx-Gy\| \ge (1+5 \epsilon) \cdot (1+\epsilon)^i.$ At the same time, $d(GX_p, GX_{p'}) \le \|Gx-Gy\| < (1+\epsilon)^{i+1}.$ Thus, this pair $(p, p')$ contributes toward the sum in $L_i$. Moreover, $D(X_p, X_{p'}) \ge \|x-y\| \ge \max(0, \|x-y\|-(1+5 \epsilon) \|Gx-Gy\|).$ This will be useful since $L_i$ is a sum over $D(X_p, X_{p'})$ (multiplied by $s_p s_{p'}$) and $K_i$ is a sum over $\max(0, \|x-y\|-(1+5 \epsilon) \|Gx-Gy\|)$.
    
    Finally, it is impossible for two pairs $(Gx, Gy)$ and $(Gx', Gy')$ to both be edges in $\widetilde{\mathcal{M}}$ that contribute to the sum $K_i$, if $x, x' \in X_{p, q}$ and $y, y' \in X_{p', q'}$. If there were such pairs $(Gx, Gy), (Gx', Gy')$, this means that the edges $(Gx, Gy)$ and $(Gx', Gy')$ have length in $A_i$, and therefore have length at least $2C_1 \cdot t$. However, the diameters of $G X_{p, q}$ and $G X_{p', q'}$ are at most $2 t$, so it would be better to replace edge $(Gx', Gy')$ with either edge $(Gx, Gx')$ or edge $(Gy, Gy')$: exactly one of these replacements will preserve the spanning tree property, and either replacement reduces the total cost. Thus, for each pair $(p, p')$ contributing to the sum in $L_i,$ at most $s_p \cdot s_{p'}$ corresponding pairs $(x, y)$ can contribute to the sum in $K_i$, and since $D(X_p, X_{p'}) \ge \max(0, \|x-y\|-(1+5 \epsilon) \|Gx-Gy\|)$ whenever $x \in X_{p, q}, y \in X_{p', q'}$, this finishes the proof.
\end{proof}

We will now bound the expectation of $L_i$.

\begin{lemma}
    For any fixed $i$, $\E[L_i] \le \frac{\epsilon^2}{10 \log n} \cdot M$.
\end{lemma}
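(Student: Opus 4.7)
I would bound $\E[L_i]$ by bucketing pairs $(p, p')$ according to the scale $R := \|x_p - x_{p'}\|$ and, for each bucket, combining three ingredients: (i) a tail bound on the ``shrinking'' event $A_{pp'} := \{d(G X_p, G X_{p'}) < (1+\epsilon)^{i+1}\}$, (ii) a doubling-dimension count of how many pairs sit at a given scale, and (iii) the light-tail bound on the random weight $s_p s_{p'}$ supplied by Proposition~\ref{ReducedDoublingBound}. The constraint $D(X_p, X_{p'}) \ge (1+5\epsilon)(1+\epsilon)^i$ together with $\diam X_p, \diam X_{p'} \le 2t$ forces $R \ge (1+4\epsilon)(1+\epsilon)^i$ once $C_1$ is a sufficiently large constant, so bucket $j \ge 0$ corresponds to $R \in [2^j (1+\epsilon)^i,\, 2^{j+1}(1+\epsilon)^i)$.

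\textbf{Probability estimate.} For fixed $p, p'$ I would decompose $G(x - y) = G(x_p - x_{p'}) + G\bigl((x - x_p) - (y - x_{p'})\bigr)$. The first term is a deterministic vector of norm $R$, which by \eqref{eq:Gxa} and \eqref{eq:Gxb} stays close to $R$ with failure probability at most $\exp(-d\epsilon^2/32) + (C/2^j)^d$. The second term ranges over the translated difference set, which lies in a ball of radius $2t$ and has doubling dimension $O(d_X)$, so Lemma~\ref{lem:indyk} bounds its maximum image by $T \cdot 2t$ except with probability $\exp(-c d T^2)$. Choosing $T$ proportional to $2^j$ makes the intersection of both good events rule out $A_{pp'}$, yielding
\[\Pr[A_{pp'}] \,\le\, \exp(-d \epsilon^2 / 32) \,+\, (C/2^j)^d \,+\, \exp(-c' d \cdot 4^j).\]

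\textbf{Counting, weights, summation.} Since $\{x_p\}$ are pairwise at distance at least $t$, Observation~\ref{DoublingBound} bounds the number of $p'$ in bucket $j$ (for fixed $p$) by $\lambda_X^{O(j + \log(1/\epsilon))}$, each contributing $D(X_p, X_{p'}) = O(2^j)(1+\epsilon)^i$ to the weight. For the random factor, Proposition~\ref{ReducedDoublingBound} yields $\E[s_p^4] = O(\lambda_X^{O(1)})$ once $d = \Omega(\log \lambda_X)$, so Cauchy--Schwarz gives $\E[s_p s_{p'} \mathbf{1}_{A_{pp'}}] \le \lambda_X^{O(1)} \sqrt{\Pr[A_{pp'}]}$. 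Assembling these factors and using that the geometric series in $j$ converges under our hypothesis on $d$, one obtains $\E[L_i] \le r \cdot (1+\epsilon)^i \cdot (1/\epsilon)^{O(d_X)} \cdot \exp(-\Omega(d\epsilon^2))$. Invoking $M \ge r t / 2 = \Omega(r \epsilon (1+\epsilon)^i)$ from Proposition~\ref{MSTCircleBound} and the hypothesis $d \ge C_6 \epsilon^{-2}\bigl(d_X \log(1/\epsilon) + \log \log n\bigr)$ for $C_6$ sufficiently large then gives the claimed bound $\E[L_i] \le \tfrac{\epsilon^2}{10 \log n}\, M$.

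\textbf{Main obstacle.} The most delicate point is that $s_p$, $s_{p'}$, and $A_{pp'}$ all depend on the same projection $G$, so they are coupled; Cauchy--Schwarz sidesteps their correlation at the cost of a square root on $\Pr[A_{pp'}]$, which is tolerable because that probability is doubly-exponentially small in $d$. The secondary challenge is the competition between the $\lambda_X^{O(j)}$ growth in the number of pairs at scale $2^j t$ and the $(C/2^j)^d$ or $\exp(-\Omega(d \cdot 4^j))$ decay of $\Pr[A_{pp'}]$ in $j$; these balance only once $d$ is large enough compared to $d_X$ and $\log \log n$, which is precisely the hypothesis of Theorem~\ref{thm:MainMST}.
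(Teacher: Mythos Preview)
Your overall architecture matches the paper's: bucket pairs by distance scale, bound $\Pr[A_{pp'}]$ via a center-plus-fluctuation decomposition, count pairs per bucket via doubling, and control the random weight $s_ps_{p'}$. Your use of Cauchy--Schwarz to decouple $s_ps_{p'}$ from $\mathbf{1}_{A_{pp'}}$ is a legitimate alternative to the paper's layer-cake computation of $\E[Z_{p,p'}]$ and costs only a harmless square root.

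There is, however, a genuine gap in your probability estimate. With $T\propto 2^j$, the fluctuation term satisfies $2Tt = \Theta(2^j\epsilon)\,(1+\epsilon)^i$. On the good fluctuation event, $A_{pp'}$ forces
\[
\|G(x_p-x_{p'})\| \;<\; (1+\epsilon)^{i+1} + 2Tt \;=\; \bigl(1+\epsilon + \Theta(2^j\epsilon)\bigr)(1+\epsilon)^i,
\]
so the required contraction ratio is $\alpha := (1+\epsilon)/2^j + \Theta(\epsilon)$. For $j\gtrsim\log(1/\epsilon)$ the second summand dominates and $\alpha = \Theta(\epsilon)$, \emph{independent of $j$}. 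Equation~\eqref{eq:Gxb} then gives $\Pr[A_{pp'}]\le (3\alpha)^d = (O(\epsilon))^d$, not $(C/2^j)^d$. Since your pair count and weight grow like $\lambda_X^{O(j)}\cdot 2^j$, the contribution of these large-$j$ buckets to $\sum_j$ does not decay and the series diverges; your claimed convergence to $r(1+\epsilon)^i(1/\epsilon)^{O(d_X)}\exp(-\Omega(d\epsilon^2))$ is not justified.

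The fix is to place the expansion threshold at the \emph{square-root} scale rather than the linear scale. In your dyadic framework this means $T\propto 2^{j/2}$, so that $2Tt = \Theta(2^{j/2}\epsilon)(1+\epsilon)^i$ stays well below $R\approx 2^j(1+\epsilon)^i$ and the contraction ratio is $\Theta(2^{-j/2})$, yielding a genuinely $j$-decaying bound $(O(1)/2^{j/2})^d$ from~\eqref{eq:Gxb}, while the expansion failure becomes $\exp(-\Omega(d\,2^j))$. This is exactly what the paper engineers: it buckets at the finer $(1+\epsilon)$-scale and places the thresholds for $\diam(GX_p)$ and $\|G(x_p-x_{p'})\|$ at $(1+\epsilon)^{i+j/2}$, then splits the analysis into the regimes $j\le\epsilon^{-1}$ (where~\eqref{eq:Gxa} gives $\exp(-d(j\epsilon)^2/100)$) and $j>\epsilon^{-1}$ (where~\eqref{eq:Gxb} gives $\exp(-d(j\epsilon)/100)$). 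The square-root placement is the missing idea in your sketch.
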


\begin{proof}    
For each $j \ge 1,$ define $B_{i, j}$ to be the interval $[(1+5 \epsilon) \cdot (1+\epsilon)^{i+j-1}, (1+5 \epsilon) \cdot (1+\epsilon)^{i+j})$. Fix some $p, p'$ such that $D(X_p, X_{p'}) \in B_{i, j}$ (note: this is independent of $G$). Since all points in $X_p$ are at most $t$ from $x_p$ (and similar for $X_{p'}$), we have that $\|x_p-x_{p'}\| \ge (1+5 \epsilon) \cdot (1+\epsilon)^{i+j-1} - 2 t \ge (1+3 \epsilon) \cdot (1+\epsilon)^{i+j}.$ Now, if $d(GX_p, GX_{p'}) < (1+\epsilon)^{i+1},$ then one of the following three events must be true:
\begin{enumerate}
    \item $\|x_p - x_{p'}\| \le (1+\epsilon)^{i+(j/2)} \cdot (1+3\epsilon)$
    \item $\diam(GX_p) \ge \epsilon \cdot (1+\epsilon)^{i+(j/2)}$
    \item $\diam(GX_{p'}) \ge \epsilon \cdot (1+\epsilon)^{i+(j/2)}$.
\end{enumerate}
Indeed, if all three were false, then $d(GX_p, GX_{p'}) \ge \|x_p - x_{p'}\| - \diam(GX_p) - \diam(GX_{p'}) \ge (1+\epsilon)^{i+(j/2)} \cdot (1+\epsilon) \ge (1+\epsilon)^{i+1}$.

Now, the probability of the first event (over the randomness of $G$) is at most the probability that a random projection shrinks $x_p-x_{p'}$ by a factor of at least $(1+\epsilon)^{j/2}$. By Equation \eqref{eq:Gxa}, if $j \le \epsilon^{-1},$ then this happens with probability at most $\exp\left(-d(j \epsilon)^2/100\right)$, and by Equation \eqref{eq:Gxb}, if $j > \epsilon^{-1},$ then this happens with probability at most $(1+\epsilon)^{-(j/2) \cdot d/20} \le \exp\left(-d (j \epsilon)/100\right).$ The probability of each of the second and third events occurring, since $\diam(X_p), \diam(X_{p'}) \le \epsilon \cdot (1+\epsilon)^i/C_1,$ is at most $\exp\left(-c d \cdot C_1^2 (1+\epsilon)^{j}\right) \le \exp\left(-d (j \epsilon)/100\right)$ by Lemma \ref{lem:indyk}. Next, note that by Proposition \ref{ReducedDoublingBound}, for some constant $C_3,$ $\Pr(s_p \ge \lambda_X^{C_3 \cdot k}) \le \exp(-d \cdot 2^k/100)$ for all real $k \ge 1$, and the same is true for $s_{p'}$.

Again consider some fixed $j$ and some $p, p'$ with $D(X_p, X_{p'}) \in B_{i, j}.$ Define the random variable $Z_{p, p'} := s_p s_{p'} \cdot \mathbb{I}\left(d(GX_p, GX_{p'}) < (1+\epsilon)^{i+1}\right),$ where $\mathbb{I}$ represents an indicator random variable. Then, if $j \le \epsilon^{-1}$, $d(GX_p, GX_{p'}) < (1+\epsilon)^{i+1}$ occurs with probability at most $3 \cdot \exp\left(-d(j \epsilon)^2/100\right) \le 3 \cdot \exp\left(-d \cdot \epsilon^2/100\right)$, so $\Pr(Z_{p, p'} > 0) \le 3 \cdot \exp\left(-d \cdot \epsilon^2/100\right)$. Next, for any $k \ge 1$, if $Z_{p, p'} \ge \lambda_X^{2k \cdot C_3},$ then either $s_p$ or $s_{p'}$ is at least $\lambda_X^{k \cdot C_3}$, which occurs with probability at most $2 \exp\left(-d \cdot 2^k/100\right)$ by Proposition \ref{ReducedDoublingBound}. Hence,
\begin{align*}
\E[Z_{p, p'}] &\le 3 \cdot \exp\left(-\frac{d \epsilon^2}{100}\right) \cdot \lambda_X^{2 \cdot C_3} \\
&\hspace{1.5cm}+ \sum_{k = 1}^{\infty} 2 \cdot \exp\left(-\frac{d \cdot 2^k}{100}\right) \cdot \lambda_X^{2(k+1) \cdot C_3} \\
&\le 10 \cdot \exp\left(-\frac{d \epsilon^2}{200}\right)
\end{align*}
by our choice of the dimension $d$. However, if $j > \epsilon^{-1},$ then $d(GX_p, GX_{p'})$ occurs with probability at most $3 \cdot \exp\left(-d(j \epsilon)/100\right)$, so $\Pr(Z_{p, p'} > 0) \le 3 \cdot \exp\left(-d \cdot (j \epsilon)/100\right)$. But for any $k \ge 1,$ if $Z_{p, p'} \ge \lambda_X^{2(k+\log(j \epsilon)) \cdot C_3},$ then either $s_p$ or $s_{p'}$ is at least $\lambda_X^{(k+\log (j \epsilon)) \cdot C_3},$ which occurs with probability at most $2 \exp\left(-d \cdot 2^k \cdot (j \epsilon)/100\right).$ Hence,
\begin{align*}
    \E[Z_{p, p'}] &\le 3 \cdot \exp\left(-\frac{d(j \epsilon)}{100}\right) \cdot \lambda_X^{2(1+\log (j \epsilon)) \cdot C_3} \\
    &\hspace{0.5cm}+ \sum_{k = 1}^{\infty} 2 \cdot \exp\left(-\frac{d \cdot (j \epsilon) \cdot 2^k}{100}\right) \cdot \lambda_X^{2(k+1+\log (j \epsilon)) \cdot C_3} \\
    &\le 10 \cdot \exp\left(-\frac{d(j \epsilon)}{200}\right)
\end{align*}
    by our choice of the dimension $d$.
    
    Next, note that for each $p$, the number of $p'$ with $D(X_p, X_{p'}) \le (1+5 \epsilon) \cdot (1+\epsilon)^{i+j} \le \left[C_1 \cdot \epsilon^{-1} \cdot (1+\epsilon)^{5+j}\right] \cdot t$ is at most $\lambda_X^{C_2 \cdot (\log C_1 + \log \epsilon^{-1} + \epsilon \cdot (5+j))}$ by Observation \ref{DoublingBound}. Hence, the total number of pairs $(p, p')$ with $D(X_p, X_{p'}) \in B_{i, j}$ is at most $r \cdot \lambda_X^{C_2 \cdot (\log C_1 + \log \epsilon^{-1} + \epsilon \cdot (5+j))} \le r \cdot \lambda_X^{C_4 \cdot (\log \epsilon^{-1} + j \epsilon)}$ for some constant $C_4.$
    
    Combining everything together, we have that
\begin{align}
    \E[L_i] &= \sum_{j \ge 1} \sum_{p, p': D(X_p, X_{p'}) \in B_{i, j}} D(X_p, X_{p'}) \cdot \E[Z_{p, p'}] \nonumber \\
    &\le \sum_{j = 1}^{\epsilon^{-1}} \bigg(r \cdot \lambda_X^{C_4 \cdot (\log \epsilon^{-1} + j \epsilon)} \cdot (1+5 \epsilon) \cdot  (1+\epsilon)^{i+j} \nonumber \\
    &\hspace{2cm} \cdot 10 \cdot \exp\bigg(-\frac{d \epsilon^2}{200}\bigg)\bigg) \nonumber \\
    &+ \sum_{j > \epsilon^{-1}} \bigg(r \cdot \lambda_X^{C_4 \cdot (\log \epsilon^{-1} + j \epsilon)} \cdot (1+5 \epsilon) \cdot  (1+\epsilon)^{i+j} \nonumber \\
    &\hspace{2cm} \cdot 10 \cdot \exp\bigg(-\frac{d (j \epsilon)}{200}\bigg)\bigg) \nonumber \\
    &\le 20 C_1 r t \cdot \Biggr(\sum_{j = 1}^{\epsilon^{-1}} \lambda_X^{C_5 (\log \epsilon^{-1})} \exp\left(-\frac{d \epsilon^2}{200}\right)\\
    &\hspace{0.5cm}+\sum_{j > \epsilon^{-1}} \lambda_X^{C_5 (\log \epsilon^{-1} + j \epsilon)} \exp\left(-\frac{d (j \epsilon)}{200}\right)\Biggr) \label{bash} 
\end{align}
    for some constant $C_5$. Above, the first equality follows by definition of $L_i$. The next inequality follows from our bound on $\E[Z_{p, p'}]$, our bound the number of $(p, p')$ with $D(X_p, X_{p'}) \in B_{i, j},$ and since $D(X_p, X_{p'}) \in B_{i, j}$ implies $D(X_p, X_{p'}) \le (1+5 \epsilon) \cdot (1+\epsilon)^{i+j}.$ The final inequality follows from simple factorization and the facts that $C_1 t \le (1+\epsilon)^i$ and $1+5\epsilon \le 2$. 
    
    Now, if we choose $d = C_6 \cdot (\log \log n + \log \epsilon^{-1} \log \lambda_X) \cdot \epsilon^{-2}$ for some sufficiently large constant $C_6,$ we have that $\lambda_X^{C_5 (\log \epsilon^{-1})} \cdot \exp(-d \epsilon^2/200) \le \frac{\epsilon^3}{1000 C_1 \log n}$ for all $j \le \epsilon^{-1}$, and $\lambda_X^{C_5(\log \epsilon^{-1}+j \epsilon)} \cdot \exp\left(-d(j \epsilon)/200\right) \le \exp\left(j \cdot \epsilon^{-1}\right) \cdot \exp\left(-d(j \epsilon)/400\right) \le \frac{\exp(-j)}{1000 C_1 \log n}$ for all $j > \epsilon^{-1}$. Hence, Equation \eqref{bash} can be upper bounded by
\begin{multline}
    20 C_1 \cdot rt \cdot \left(\epsilon^{-1} \cdot \frac{\epsilon^3}{1000 C_1 \log n} + \frac{\sum_{j \ge \epsilon^{-1}}e^{-j}}{1000 C_1 \log n}\right) \\
    \le \frac{\epsilon^2}{20 \log n} \cdot rt. \label{OverallBound}
\end{multline}
    Proposition \ref{MSTCircleBound} tells us that $rt \le 2 M$. Hence, Equation \eqref{OverallBound} is at most $\frac{\epsilon^2}{10 \log n} \cdot M,$ as desired.
\end{proof}

In sum, we have that $\E[K_i] \le \frac{\epsilon^2}{10 \log n} \cdot M$ for all $i$. Moreover, for small $i$, Equation \eqref{OverallBound} tells us that $\E[K_i] \le \frac{\epsilon^2}{20 \log n} \cdot rt \le \frac{\epsilon^2}{20 \log n} \cdot n \cdot (1+\epsilon)^i,$ since $r \le n$ and $t \le (1+\epsilon)^i$. Therefore, the LHS of Equation \eqref{Step1} is at most
\begin{multline*}
    \sum_{i: (1+\epsilon)^i \le \diam(X)} \E[K_i] \\
    \le \frac{\epsilon^2}{20 \log n} \cdot \sum_{i: (1+\epsilon)^i \le \diam(X)} \min\left(2 M, n \cdot (1+\epsilon)^i\right).
\end{multline*}
    Using the bound $\frac{\epsilon^2}{10 \log n} \cdot M$ for $i$ with $\frac{\diam(X)}{n} < (1+\epsilon)^i \le \diam(X)$ and the bound $\frac{\epsilon^2}{20 \log n} \cdot n \cdot (1+\epsilon)^i$ for $i$ with $(1+\epsilon)^i \le \frac{\diam(X)}{n},$ we can bound this by
\begin{multline*}
\frac{\epsilon^2}{20 \log n} \cdot \left(2 M \cdot \log_{1+\epsilon} n + \frac{\diam(X)}{1 - \frac{1}{1+\epsilon}}\right) \\
\le \frac{\epsilon \cdot M}{5} + \frac{\diam(X) \cdot \epsilon}{10 \log n} < \epsilon \cdot M,
\end{multline*}
    since $\diam(X) \le M.$ This concludes the proof.
\end{proof}

\section{Lower Bounds: Omitted Proofs} \label{sec:app_lb}

\subsection{Dependence on the Doubling Dimension}

In this subsection, we prove Theorems \ref{thm:lb_fac}, \ref{thm:lb_mst_cost}, and \ref{thm:lb_mst_pullback}.

We begin with Theorem \ref{thm:lb_fac}. To do so, we construct a set $X$ of $m$ points in $\mathbb{R}^m$ such that if we randomly project $X$ to $o(\log m)$ dimensions, then with high probability, the facility location cost is not preserved up to a constant factor. Moreover, the optimal set of facility centers in the projected space, with high probability, is not a constant-factor approximation to facility location in the original space. The point set $X$ we choose will just be a scaled set of identity vectors in $\mathbb{R}^m$: it is simple to see that this point set has $\lambda_X = m$. These points have the convenient property that each point's projection is independent of each other.

\begin{proof}[Proof of Theorem \ref{thm:lb_fac}]
    As mentioned previously, the points in $X$ will just be $R e_1, \dots, Re_m$, the $m$ identity unit vectors in $\mathbb{R}^m$ scaled by a factor $R \ge 1$. Since these points each have distance $R \sqrt{2} \ge \sqrt{2}$ from each other, the optimum set of facilities is all of them, which has cost $m$.
    
    Now, consider a random projection $G$ down to $d = o(\log m)$ dimensions, and define $C = \sqrt{\frac{\log n}{10 d}}$ and $R = \sqrt{C}$. Note that $R, C = \omega(1)$. Our goal will be to show that with at least $\frac{2}{3}$ probability, for all but $\frac{3m}{C}$ points $p \in GX$, $\tilde{r}_p \le \frac{2}{R}$, where we recall that $\tilde{r}_p$ is the positive real number such that
$$ \sum_{q \in B(p, \tilde{r}_p) \cap GX} (\tilde{r}_p - \|p-q\|) = 1.$$
    We trivially have the bound $\tilde{r}_p \le 1$ for all $p \in GX,$ which means that if we show our goal, then $\sum_{p \in GX} \tilde{r}_p \le m \cdot \frac{2}{R} + \frac{3m}{C} \cdot 1 \le \frac{5m}{R} = o(m)$. However, $\sum_{p \in GX} \tilde{r}_p$ is a constant-factor approximation to the optimum facility location cost by Lemma \ref{lem:appx}, which proves that the facility location cost of $GX$ is $o(m)$.
    
    Now, for each $e_i,$ by Equation \eqref{eq:Gxa}, we have for any $C \ge 6,$ 
\begin{align*}
    \Pr(\|Ge_i\| \le C) &\ge 1 - \exp\left(-d(C-1)^2/8\right) \\
    &\ge 1 - \exp\left(-(C-1)^2/8\right) \\
    &\ge 1 - \frac{1}{2C}.
\end{align*}
    Moreover, conditioned on $\|Ge_i\| \le C,$ by Lemma \ref{Nearby}, we have that for each $j \neq i,$ $\Pr(\|Ge_j-Ge_i\| \le \frac{1}{C}) \ge n^{-1/10}$ if $n$ is sufficiently large. Therefore, if $Ge_i: \|Ge_i\| \le C$ is fixed, since the $Ge_j$'s are independent vectors, we can apply the Chernoff bound to say that with probability at least $1-n^{-10},$ at least $\log n \ge R$ values of $j \neq i$ satisfy $\|Ge_j-Ge_i\| \le \frac{1}{C}$, or equivalently, $\|G(Re_j)-G(Re_i)\| \le \frac{R}{C} = \frac{1}{R}$. By removing our conditioning on $Ge_i$, we have that with probability at least $1 - \frac{1}{2C} - n^{-10} \ge 1 - \frac{1}{C},$ there are at least $R$ points in $GX$ that are within $\frac{1}{R}$ of $G(Re_i),$ in which case we have that for $p = G(Re_i)$, $\tilde{r}_p \le \frac{2}{R}.$ Therefore, in expectation, at most $\frac{m}{C}$ of the points in $GX$ have $\tilde{r}_p > \frac{2}{R}.$ Thus, by Markov's inequality, with probability at least $\frac{2}{3},$ at most $\frac{3m}{C}$ of the points in $GX$ have $\tilde{r}_p > \frac{2}{R}$. This proves the first part of the theorem.
    
    To prove the second part of the theorem, note that the optimal facility location cost over $GX$ is $o(m)$ with probability at least $2/3$, which implies that the number of open facilities in any optimal solution $\F_d$ is $o(m)$. But then, each point in $X$ which is not an open facility center is at least $R \sqrt{2}$ away from the nearest open facility center in the original space $X$, so the facility location cost in $X$ is at least $(m-o(m)) \cdot R \sqrt{2} = \omega(m)$.
\end{proof}

Next, we prove Theorems \ref{thm:lb_mst_cost} and \ref{thm:lb_mst_pullback}. These results prove that the dependence on doubling dimension $d_X$ is required in the projected dimension $d$, both to approximate the cost of the minimum spanning tree and to produce a minimum spanning tree in the lower dimension that is still an approximate MST in the original dimension.

\begin{proof}[Proof of Theorem \ref{thm:lb_mst_cost}]
    Let $X = \{0, e_1,\dots,e_m\}$, where $m = n-1,$ $0$ is the origin in $\mathbb{R}^m$ and $e_i$ is the $i$th identity vector for each $1 \le i \le m$. Clearly, the minimum spanning tree connects $0$ to all of the $e_i$'s and has cost $M = m$. Now, we show that for $C = \sqrt{\frac{\log n}{10d}} = \omega(1)$, the MST cost of $GX$ is at most $\frac{10 m}{C} = o(m)$ for sufficiently large $m$ with at least $2/3$ probability.
    
    To do so, note that since $G$'s entries are independent, $Ge_1, \dots, Ge_m$ are all i.i.d. $\frac{1}{\sqrt{d}} \cdot \mathcal{N}(0, I_d)$. Consider some $e_i, e_j$ and suppose that $\|G e_i\|, \|G e_j\| \le C$ but $\|G(e_i-e_j)\| \ge \frac{4}{C}.$ Then, if we let $v = G(e_i+e_j)/2,$ for each $k \neq i, j$, $\Pr(\|G e_k - v\| \le \frac{1}{C}) \ge n^{-1/10}$ by Lemma \ref{Nearby}. By the independence of $Ge_1, \dots, Ge_m$, with probability at least $1-n^{-10},$ there is some $k \neq i, j$ in $[m]$ such that $\|G e_k - v\| \le \frac{1}{C}$. In this case, the minimum spanning tree of $GX$ would not have the edge $(Ge_i, Ge_j),$ as this edge could be replaced by either the edge $(Ge_i, Ge_k)$ or $(Ge_k, Ge_j),$ both of which are shorter.
    
    Thus, with probability at least $1-n^{-8},$ if we just connect the points $Ge_i$ over all $i$ with $\|Ge_i\| \le C$ in an MST, every edge has length at most $\frac{4}{C}.$ We can create a possibly suboptimal spanning tree by connecting all $Ge_i$ with norm at most $C$ in an MST, connecting one of these vertices arbitrarily to $0 = G \cdot 0$, and finally connecting $Ge_i$ to $0$ for all $i$ with $\|Ge_i\| > C$. The first part has total cost at most $m \cdot \frac{4}{C}$ with probability at least $1-n^{-8}$. The second part has total cost at most $C$ with probability at least $1-n^{-8}$ (as long as some $\|Ge_i\| \le C$). Finally, the third part has total expected cost $m \cdot \E[\|Ge_i\| \cdot \mathbb{I}(\|Ge_i\| \ge C)],$ since each edge $e_i$ contributes to the third part only if $\|Ge_i\| \ge C$, and there are $m$ potential vertices $Ge_1, \dots, Ge_m.$ However, by the Cauchy-Schwarz inequality, we know that
\begin{align*}
\E\left[\|Ge_i\| \cdot \mathbb{I}(\|Ge_i\| \ge C)\right] &\le \sqrt{\E\left[\|Ge_i\|^2\right] \cdot \Pr(\|Ge_i\| \ge C)} \\
&\le \sqrt{1 \cdot \exp\left(-d \cdot (C-1)^2/8\right)} \\
&\le \exp\left(-(C-1)^2/16\right) \le \frac{1}{C},
\end{align*}
    with the final inequality true if $C \ge 7$. Therefore, with probability at least $\frac{4}{5},$ the third part has cost at most $\frac{5 m}{C}$ by Markov's inequality. So, with probability at least $\frac{4}{5} - 2 n^{-8} \ge \frac{2}{3},$ the total cost of this spanning tree in $GX$ (which may not even be minimal) is at most $\frac{4}{C} \cdot m + C + \frac{5}{C} \cdot m \le \frac{10 m}{C}$ assuming $m$ is sufficiently large.
\end{proof}

\begin{proof}[Proof of Theorem \ref{thm:lb_mst_pullback}]
    As in our proof of Theorem \ref{thm:lb_mst_cost}, let $C = \sqrt{\frac{\log n}{10 d}} = \omega(1)$. Consider $n = C \cdot m+1$ and let $X = \{0\} \cup \{e_i \cdot k/C\}$ for $1 \le i \le m, 1 \le k \le C$. The minimum spanning tree connects $0$ to $e_i/C$ to $2e_i/C$ to so on, so each edge has length $1/C$ and the total MST cost is $M = m$.
    
    Now, by Equations \eqref{eq:Gxa} and \eqref{eq:Gxb}, for each $e_i$, the probability that $\|Ge_i\| \in [1/10, 100]$ is at least $1 - \exp(-d/10) - (3/100)^d > 0.06$ for all $d \ge 1.$ Thus, with exponential failure probability in $m$, among $e_1, \dots, e_{m/2}$, at least $0.02 m$ of the $Ge_i$'s have norm between $1/10$ and $100$. Now, for some $i \le m/2$ with $1/10 \le \|Ge_i\| \le 100$, since $d = o(\log n)$, by Lemma \ref{Nearby}, the probability that $\|Ge_j-Ge_i\| \le \frac{1}{100 C}$ for any $j > m/2$ is at least $n^{-1/10}$. Hence, with exponential failure probability, for each $i$ with $1/10 \le \|Ge_i\| \le 100$, there is some $j > m/2$ with $\|Ge_j-Ge_i\| \le \frac{1}{100 C}.$
    
    Let $I$ be the set of $i$ such that $\|Ge_i\| \ge \frac{1}{10}$ and there is some $j$ with $\|Ge_j-Ge_i\| \le \frac{1}{100 C}.$ For each $i \in I,$ the distance between $Ge_i \cdot k/C$ and $Ge_i \cdot \ell/C$ for any $\ell \neq k$ is at least $\frac{1}{10C}$ but the distance between $Ge_i \cdot k/C$ and $Ge_j \cdot k/C$ is at most $\frac{1}{100C}.$ This means that the closest point to $Ge_i \cdot k/C$ in $GX$ is of the form $Ge_j \cdot k'/C$ for some $j \neq i$ and $k'$ which may or may not equal $k$. However, for every $Gx \in GX,$ the minimum spanning tree of $GX$ must contain the edge connecting $Gx$ to its closest neighbor, so for each $i \in I$ and $1 \le k \le C$, $\widetilde{\mathcal{M}}$ must connect $Ge_i \cdot k/C$ to $Ge_j \cdot k'/C$, which has length at least $k/C$ in the original space $\mathbb{R}^m$. Therefore, the pullback of the MST has length at least 
\[\sum_{i \in I} \sum_{k = 1}^{C} \frac{k}{C} \ge \frac{C}{2} \cdot |I|,\]
    which with exponential failure probability in $m$ is at least $\frac{C}{100} \cdot m = \frac{C}{100} \cdot M = \omega(M)$.
\end{proof}

\subsection{Approximate Solutions Cannot be Pulled Back}

In this subsection, we prove Lemmas \ref{lem:opt_nec} and \ref{lem:opt_nec_mst}. In other words, we give a simple example showing that our definition of \emph{locally optimal} (for FL) and that \emph{optimal} (for MST) is necessary, if we want dependence on $d_X = \log \lambda_X$ as opposed to $\log n$. 
In particular, our lemmas give examples showing that pulling back of \emph{any} approximately optimal solution found in the projected space to the original space does not work.

\begin{proof}[Proof of Lemma \ref{lem:opt_nec}]
Consider the following set of points:
\[Y = \{b_1,b_2,  \ldots  , b_m\}= \{e_1, e_1 + e_2, \ldots, e_1 + e_2 + \cdots + e_m \} \]
where $e_i$ is the $i$th standard basis vector. We refer to this dataset as the `walk' dataset. Using the definition of doubling dimension (see Section \ref{sec:prelim}), we can compute that the doubling dimension of $Y$ is some constant independent of $m$. Now construct the dataset $X$ by scaling all the points in $Y$ by the factor $m^{1+1/2d}$. This does not affect the doubling dimension. Consider the projection of $X$ into $\mathbb{R}^d$ where $d = O(1)$. Before projection, the optimum solution is to open all facilities, costing $m$.

Now consider applying a random projection $G$ and note that the projection of the differences $G(b_i - b_{i+1})$ are independent. Therefore, by Proposition \ref{prop:ChiLB}, there is a pair of consecutive points $b_i, b_{i+1}$ such that $\|G(b_i-b_{i+1})\|$ shrinks by a factor of $C_1/m^{1/d}$ with probability at least $9/10$. Furthermore, by Equation \eqref{eq:Gxb}, we have that all the differences $\|G(b_i-b_{i+1})\|$ do not shrink by a factor worse than $C_2/m^{1/d}$ with probability at least $9/10$. Hence, with some constant probability, \emph{both} the following events occur:
\begin{itemize}
    \item There exists some $i^*$ such that $\|G(b_{i^*}-b_{i^*+1})\| = O(m^{1-1/2d})$
    \item $\|G(b_i-b_{i+1})\| = \Omega(m^{1-1/2d})$ for all $i$.
\end{itemize}
 In this case, the optimal solution in the projected space is to include all facilities, which has total cost $m$. However, a solution that is within a $1+O(m^{-1/2d})$ multiplicative factor of the optimal solution is to include all facilities except for $Gb_i^*$. However, evaluating this solution in the original dimension incurs a cost at least $\Omega(m^{1+1/2d})$, whereas the optimal cost is still $m$. Hence, the approach has approximation ratio of at least $m^{1/2d}$, which is $\omega(1)$, i.e., superconstant unless $d=\Omega(\log m)$.
\end{proof}

\begin{proof}[Proof of Lemma \ref{lem:opt_nec_mst}]
    Assume WLOG that $n = 2k^2$ for some $k$, that $X$ lies in $\mathbb{R}^{m}$ for $m = k+1$, and that $d = \epsilon \cdot \log n$ for some $\epsilon = o(1)$. Now, let $e_1, e_2, \dots, e_k$ represent the identity vectors in $\mathbb{R}^k$. Now, we will choose our $n$ points as follows. First, we will choose the $k^2$ points $X' = \{(0, \textbf{0}), (\frac{1}{k}, \textbf{0}), \dots, (\frac{k^2-1}{k}, \textbf{0})\},$ where $\textbf{0}$ represents the last $k$ coordinates all being $0$. For the remaining $k^2$ points, for each $0 \le i \le k-1$ we add the set $X_i = \{(i, e_i), (i + \frac{1}{k}, e_i) \dots, (i+\frac{k-1}{k}, e_i)\}$. We let $X = X' \cup X_0 \cup \dots \cup X_{k-1}$.
    
    First, we show that the doubling dimension of $X$, $\lambda_X,$ is at most $O(1)$. First, note that $X'$ and each $X_i$ is trivially embeddable into one dimension, because the points in $X'$ and in each $X_i$ only vary on one coordinate, so each of these individually have doubling dimension $O(1)$. Therefore, for any ball $B = B(r, p)$ of radius $r \le 10$ around some point $p$, $B \cap X$ is contained in some union of $O(1)$ of $X', X_0, \dots, X_{k-1}$. Consequently, the points in $B \cap X$ can be decomposed into $O(1)$ balls of radius $r/2$, since $B \cap X'$ and $B \cap X_i$ each have doubling dimension bounded by a constant. Now, if we consider some ball $B = B(r, p)$ of radius $r > 10,$ suppose that $p = (a_0, a_1, \dots, a_k) \in \mathbb{R}^{k+1}$. Now, consider the $5$ points $\{(a_0+\frac{j}{2} \cdot r, \textbf{0})\}_{j = -2}^{2},$ where the $\textbf{0}$ represents the last $k$ coordinates all being $0$. For every point $x$ in $X \cap B,$ $x$'s first coordinate must be in the range $[a_0-r, a_0+r]$ and $x$'s remaining coordinates have total magnitude at most $1$. With these two observations, it is immediate that every point in $X \cap B$ is within $r/2$ of some point $\{(a_0 + \frac{j}{2} \cdot r, \textbf{0})\}$ for some integer $-2 \le j \le 2$. Therefore, if $r > 10$, $B \cap X$ can be covered by $5$ balls of radius $r/2.$ Thus, $\lambda_X = O(1)$, so $X$ has doubling dimension $\log \lambda_X = O(1)$.
    
    Now, a straightforward verification tells us that for any $i \neq j,$ the points in $X_i$ and the points in $X_j$ are at least $\sqrt{2}$ away from each other. Moreover, each point $(i+\frac{j}{k}, e_i)$'s closest point in $X'$ is the corresponding point $(i+\frac{j}{k}, \textbf{0}),$ and this distance is $1$. Therefore, the minimum spanning trees of $X$ are as follows. First, connect the points in $X'$ in a line and all of the points in each $X_i$ in a line. Finally, for each $0 \le i \le k-1,$ choose some arbitrary $j$ and connect $(i+\frac{j}{k}, e_i)$ and $(i+\frac{j}{k}, \textbf{0}).$ The total MST cost $M$ is $\frac{k^2-1}{k} + k \cdot \frac{k-1}{k} + k \cdot 1 = 3 k - 1 - \frac{1}{k} = (3-o(1)) k$.
    
    Now, when the random projection $G: \mathbb{R}^{k+1} \to \mathbb{R}^d$ is applied, we have that each vector $(0, e_i)$ is independently mapped to some vector $(a_{i1}, \dots, a_{id})$, where each $a_{ij}$ for $1 \le i \le k, 1 \le j \le d$ is an i.i.d. $\mathcal{N}(0, 1/d)$. So for any $\epsilon = o(1)$ and $n$ sufficiently large, if we choose $\delta = e^{-1/(100 \epsilon)},$ we have that $\Pr(|a_{i1}|, \dots, |a_{id}| \le \delta/\sqrt{d}) = \Theta(\delta)^{d} \le e^{-\log n/4} < 1/\sqrt{2k},$ where we used the fact that $d = \eps \log n$. Hence, a simple Chernoff bound tells us that with $1-o(1)$ probability, at least $\sqrt{k}/2$ of the $(0, e_i)$'s get mapped to some $(a_{i1}, \dots, a_{id})$ with norm at most $\delta$.
    
    Now, consider the following $\omega(1)$-approximate MST for $X$. Let $A = \epsilon^{-1}$, and choose some set $I = \{i_1, \dots, i_A\}$. Our ``approximate'' MST will be as follows. For each $i \in I,$ remove the $k-1$ edges connecting $X_i$ together, and for each $1 \le j \le k,$ connect $(i + \frac{j}{k}, 0)$ with $(i + \frac{j}{k}, e_i)$. Each time this is done, we remove $k-1$ edges of length $1/k$ and add $k-1$ edges of length $1$ (recall that one of these edges of length $1$ was already in the MST), so the MST cost increases by $\epsilon^{-1} ((k-1) 1 - (k-1)/k) = \epsilon^{-1} k \cdot (1-o(1)).$ Hence, regardless of what set $A$ we chose, the approximate MST is a $\omega(1)$-approximation, as the true MST has cost $M = O(k)$.
    
    However, we claim that with high probability, we can choose $A$ so that this becomes a $(1+o(1))$-approximation in the projected space. Indeed, since $\epsilon \ge \frac{1}{\log n}$, with $1-o(1)$ probability, at least $\sqrt{k}/2 \ge \epsilon^{-1}$ values $e_i$ get mapped to some point with norm at most $\delta$. So, we choose $A$ to be of size $\epsilon^{-1}$ so that for all $i \in A,$ $e_i$ gets mapped to a point with norm at most $\delta$. Recall that $\mathcal{M}$ denote the true MST for $X$, and let $\mathcal{M}'$ be this poor-approximation spanning tree. Note that the only edges in $\mathcal{M}' \backslash \mathcal{M}$ connect $(i + \frac{j}{k}, 0)$ to $(i + \frac{j}{k}, e_i)$ for $i \in I, 0 \le j \le k-1$. Since there are $\eps^{-1} \cdot k$ such edges, and each edge has size at most $\delta$ when projected, we have that
\begin{align*}
    \cost_{GX}(\mathcal{M}') &\le \cost_{GX}(\mathcal{M}) + \delta \cdot \eps^{-1} \cdot k \\
    &\le \cost_{GX}(\mathcal{M}) + \eps^{-1} \cdot e^{-\eps^{-1}/100} \cdot k \\
    &= \cost_{GX}(\mathcal{M}) + o(k).
\end{align*}
    
    Now, let's suppose that $d \ge \omega(\log \log n)$. We saw in subsection \ref{subsec:mst1} that $\cost_{GX}(\mathcal{M})$ had expectation at most $M = \cost_X(\mathcal{M})$ and standard deviation $O(M/\sqrt{\log \log n})$, regardless of the dataset $X$. So, with $9/10$ probability, $\cost_{GX}(\mathcal{M}') \le \cost_{GX}(\mathcal{M}) + o(k) = (1+o(1)) M$. Moreover, by Theorem \ref{thm:MainMST}, with $9/10$ probability, $\widetilde{M},$ the cost of the MST in the reduced space $GX$, is within a $1 \pm o(1)$ factor of $M$. Therefore, with at least $4/5-o(1)$ probability, $\cost_{GX}(\mathcal{M}') \le (1+o(1)) \cdot \widetilde{M},$ so $\mathcal{M}'$ is an $\omega(1)$-approximate MST in $X$ but a $1+o(1)$-approximate MST in $GX$.
\end{proof}

\subsection{Lower Bounds for $k$-means and $k$-medians}

In this subsection, we prove Theorem \ref{thm:lb_kmeans}, which shows the tightness of the bounds of \cite{ilyapaper} for $k$-means and $k$-medians clustering even in the case of \emph{constant doubling dimension}.

We remark that \cite{ilyapaper} showed tightness of their result if doubling dimension is ignored. Namely, they showed the existence of such a point set $X$ that may have large doubling dimension. Hence, our contribution is making such a set that also has doubling dimension $O(1)$.

\begin{proof}[Proof of Theorem \ref{thm:lb_kmeans}]
    We start with the case where $n = 2t$ and $k = 2t-1$ for some $t$. As in \cite{ilyapaper}, we wish to consider $t$ pairs of points where each pair is of distance $1$ from each other, but all other distances are larger.
    
    Namely, we do the following. First, define $D = t^{1/d}/10,$ and let $R = \sqrt{D}$. We have that $D, R = \omega(1)$, since $d = o(\log n) = o(\log t)$. Now, for $1 \le i \le t,$ let $a_i = (2 \cdot i, \textbf{0}),$ meaning that $a_i$'s first coordinate is $2 \cdot i$ and the remaining $t = m-1$ coordinates are $0$. Next, for each $1 \le i \le t-1,$ define $b_i = a_i + e_{i+1}$, i.e., $b_i$ has first coordinate $2 \cdot i$, $(i+1)$th coordinate $1$, and all remaining coordinates $0.$ However, define $b_t = a_t + \frac{1}{R} \cdot e_{i+1}$. Our set $X$ will be the union of the $a_i$'s and $b_i$'s.
    
    Now, since $k = n-1,$ the $k$-medians cost of $X$ is just the distance between the closest pair of points in $X$, which is $\frac{1}{R}.$ The $k$-means cost of $X$ is just the squared distance between the closest pair of points in $X$. However, by Proposition \ref{prop:ChiLB}, for each $i$, 
\begin{align*}
    \Pr\left(\|G b_i - G a_i\| \le \frac{10}{t^{1/d}}\right) &= \Pr\left(\|G e_{i+1}\| \le \frac{10}{t^{1/d}}\right) \\
    &\ge \left(\frac{10}{e \cdot t^{1/d}}\right)^d \ge \frac{3}{t}.
\end{align*}    
    Moreover, since $e_2, \dots, e_{t}$ are all distinct unit vectors, the vectors $Ge_2, \dots, Ge_t$ are independent, which means that with probability at least $1 - (1 - 3/t)^{t-1} \ge 0.9$ (for $t$ sufficiently large), some $1 \le i \le t-1$ will have $\|Gb_i-Ga_i\| \le 10/t^{1/d} = 1/D.$ Thus, some pair of points $(a_i, b_i)$ satisfy $\|Ga_i-Gb_i\| \le 1/D$, whereas the closest distance between two points in $X$ was only $1/R$. Therefore, with at least $9/10$ probability, the $k$-medians cost has multiplied by a $R/D = o(1)$ factor after projection, and likewise, the $k$-means cost has multiplied by a $R^2/D^2 = o(1)$ factor.
    
    Now, let $p, q \in X$ be the pair of points minimizing $\|Gp-Gq\|$. With probability at least $4/5$, $\|Ga_t-Gb_t\| \ge 1/(20 R) > 1/D$, which means that either $p$ or $q$ is not in $\{a_t, b_t\}$: assume WLOG that $p \not\in \{a_t, b_t\}$. Thus, an optimal choice of $k$ centers (for either $k$-means or $k$-medians) is choosing all points in $X$, except $p$. But then, in the original space, these centers have $k$-medians cost equal to the distance from $p$ to its closest point in $X$, which is at least $1$. Likewise, the $k$-means cost is also at least $1$. However, the optimal $k$-medians and $k$-means costs are $1/R$ and $1/R^2$, respectively, so the optimal choice in $GX$ is an $R = \omega(1)$ or $R^2 = \omega(1)$ approximation for $k$-medians and $k$-means, respectively. This finishes the proof in the case that $k = n-1$.
    
    For general values of $k < n,$ we can simply consider having $n' = k+1$ points in the configuration as above, but with exactly one of the points replicated $n-k$ times. In this case, the cost of $k$-medians clustering is still the distance of the closest pair of distinct points, and the cost of $k$-medians clustering is still the square of the distance of the closest pair of distinct points. So, the lower bound of $\Omega(\log k)$ still holds.
\end{proof}

\section{Facility Location with Squared Costs} \label{sec:fl_squared}
Recall that the facility location with squared costs problem is defined as follows.
Given a dataset $X \subset \mathbb{R}^m$, our goal is to find a subset $\mathcal{F} \subseteq X$ that minimizes the objective
\begin{equation}\label{eq:objective2}
\cost(\mathcal{F}) =| \mathcal{F}| + \sum_{x \in X} \,  \min_{f \in \mathcal{F}}\|x-f\|^2. 
\end{equation}

Similar to Equation \eqref{eq:rdef}, we give a geometric expression that is a constant factor approximation to the cost of the objective presented in \eqref{eq:objective2}. For each $p \in X$, associate it with a radius $r_p > 0$ that satisfies the relation
\begin{equation}\label{eq:rnewdef}
     \sum_{q \in B(p,r)} ( r_p^2 - \|p-q\|^2) = 1.
\end{equation}
We generalize the results in \cite{MP_alg} and \cite{sublin_MP} to give an analogue of Lemma \ref{lem:appx} for the squared objective \eqref{eq:objective2}.

\begin{lemma}\label{lem:appx2}
Let $C_{OPT}$ denote the cost of the optimal solution to the objective given in \eqref{eq:objective2}. Then
$$\frac{1}8 \cdot C_{OPT} \le \sum_{p \in X} r_p^2 \le 24 \cdot C_{OPT}. $$
\end{lemma}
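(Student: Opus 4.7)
The plan is to mimic the proof of Lemma~\ref{lem:appx} for the unsquared objective (due to \cite{MP_alg,sublin_MP}), working throughout with the squared quantities $r_p^2$ and $\|p-q\|^2$ and the defining identity \eqref{eq:rnewdef}. The two inequalities $C_{OPT}\le 8\sum r_p^2$ and $\sum r_p^2 \le 24\,C_{OPT}$ are argued independently.

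For the lower bound, I would run the natural squared analogue of Algorithm~\ref{alg:MPalg}: sort the points by increasing $r_p$ and greedily open a facility at $p$ iff $B(p, 2r_p)\cap \mathcal{F}=\emptyset$. Since the output $\mathcal{F}$ is feasible, its cost upper bounds $C_{OPT}$. Every unopened point lies within $2r_p$ of some opened facility, so its squared connection cost is at most $4r_p^2$, totaling $4\sum r_p^2$. The greedy rule makes the balls $\{B(f, r_f)\}_{f\in \mathcal{F}}$ pairwise disjoint, and a short case analysis (splitting on whether $q\in B(f, r_f)$ was processed before or after $f$, and in the ``before'' case invoking the facility that blocked $q$ together with the greedy condition on $f$) yields $r_q^2 \ge r_f^2/4$ for every $q\in B(f, r_f)\cap X$. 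Combined with the trivial bound $1=\sum_{q\in B(f,r_f)}(r_f^2-\|f-q\|^2)\le r_f^2\cdot |B(f, r_f)\cap X|$ and disjointness of the balls, this gives $|\mathcal{F}|\le 4\sum r_p^2$, for a total MP-cost of at most $8\sum r_p^2$.

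For the upper bound, I would exploit the ``admissible-radius'' inequality $r_p^2 \le s^2 + 1/|B(p,s)\cap X|$, which follows from \eqref{eq:rnewdef} by plugging in the test value $r=\sqrt{s^2+1/|B(p,s)\cap X|}$. Fix an optimal $\mathcal{F}^*$ and write $d_p^* = D(p,\mathcal{F}^*)$ and $C_f=\{p:f_p^*=f\}$. Setting $s=2d_p^*$ and using that $B(p, 2d_p^*)$ contains every $q\in C_{f_p^*}$ with $d_q^*\le d_p^*$ yields $r_p^2\le 4(d_p^*)^2 + 1/k_p$, where $k_p$ is the rank of $p$ in its cluster by $d_q^*$. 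Summing the first term over all $p$ gives $4\sum_p(d_p^*)^2\le 4\,C_{OPT}$; what remains is to charge the slack $\sum_p 1/k_p$ against the opening costs $|\mathcal{F}^*|$ so that the total comes out $O(C_{OPT})$.

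The main obstacle is exactly this last step. The naive per-cluster estimate $\sum_{p\in C_f}1/k_p\le H_{|C_f|}$ is logarithmic and only yields $\sum r_p^2\le O(C_{OPT}\log n)$, too weak to recover the absolute constant $24$. To eliminate the logarithm I would either pick the test radius $s$ in the admissible-radius inequality adaptively to the local density rather than rigidly as $2d_p^*$, or equivalently work with the integrated form $\int_0^{r_p^2} |B(p,\sqrt{u})\cap X|\,du = 1$ obtained from \eqref{eq:rnewdef} via $u=r^2$, and amortize within each cluster $C_f$ by the key observation that if many points of $C_f$ have comparable and large $r_p^2$, then they must also be geometrically close to $f$, so that their squared connection costs $(d_p^*)^2$ are themselves large enough to re-absorb the $1/k_p$ slack. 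Making this re-absorption rigorous and tight, with the constant $24$, is where I expect the real work to lie.
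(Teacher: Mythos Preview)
Your lower-bound argument is essentially the paper's: run the squared MP algorithm, bound the connection cost by $4\sum_p r_p^2$, and bound the opening cost by $4\sum_p r_p^2$ using disjointness of the balls $B(f,r_f)$ together with the fact that every $q\in B(f,r_f)$ has $r_q\ge r_f/2$. Fine.

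The upper bound, however, has a real gap, and it is exactly the one you flag. Charging $r_p^2$ directly against the \emph{optimal} solution $\mathcal{F}^*$ via the admissible-radius inequality with $s=2d_p^*$ produces the harmonic term $\sum_{p\in C_f}1/k_p$, and there is no way to beat the $\log|C_f|$ loss from this decomposition: points in a cluster can be arranged so that $d_p^*$ is tiny for all of them while the $r_p$'s are governed by the global density, and your proposed ``re-absorption'' (large $r_p^2$ forces large $(d_p^*)^2$) is simply false in general. The adaptive-$s$ and integrated-form ideas do not rescue this, because the issue is structural: the optimal facilities carry no information about the radii $r_q$.

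The paper avoids this entirely by comparing $r_p$ not to the optimal facilities but to the \emph{MP facilities} $\mathcal{F}$ themselves. The key step is the pointwise bound: if $q\in\mathcal{F}$ is the MP-facility closest to $p$, then
\[
r_p^2 \;\le\; 2\bigl(\|p-q\|^2 + r_q^2\bigr),
\]
proved by contradiction (if it failed, $B(q,r_q)\subset B(p,r_p)$ and a direct computation shows the defining sum for $r_p$ already exceeds $1$). This gives $r_p^2 \le 4\max(r_q^2,\|p-q\|^2)$. Now introduce the charge function
\[
\mathrm{charge}(p,\mathcal{F}) \;=\; d(p,\mathcal{F})^2 + \sum_{q\in\mathcal{F}}\max\bigl(0,\,r_q^2-\|q-p\|^2\bigr),
\]
which satisfies $\sum_p \mathrm{charge}(p,\mathcal{F})=\cost(\mathcal{F})$ exactly (the second sum telescopes to $|\mathcal{F}|$ by \eqref{eq:rnewdef}). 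One checks $\mathrm{charge}(p,\mathcal{F})\ge \max(r_{\delta(p)}^2,\|p-\delta(p)\|^2)\ge r_p^2/4$, so $\sum_p r_p^2 \le 4\cost(\mathcal{F})\le 24\,C_{OPT}$, using the 6-approximation of the MP output. The point is that the MP facilities come equipped with radii $r_q$ that already account for the opening cost via \eqref{eq:rnewdef}, which is precisely what the optimal facilities lack and what your approach cannot synthesize.
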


To prove Lemma \ref{lem:appx2}, we first given an algorithm for \eqref{eq:objective2} inspired by the MP algorithm. Our algorithm, which we denote as the `Squared MP Algorithm,' is the following.
\\

\removelatexerror
\begin{algorithm}[H]
	\SetKwInOut{Input}{Input}
	\SetKwInOut{Output}{Output}
	\Input{Dataset $X = \{p_1, \cdots, p_n\} \subseteq \mathbb{R}^d$}
	\Output{Set $\mathcal{F}$ of facilities}
	\DontPrintSemicolon
	$\mathcal{F} \gets \emptyset$ \;
	\For{$i = 1$ to $n$}{
    Compute $r_i$ satisfying:
    	$ \sum_{q \in B(p_i, r_i)} (r_i^2 - \|p_i-q\|^2) = 1$
	\;
    }
    Sort such that $r_1 \le \ldots \le r_n$\;
    \For{$i=1$ to $n$}{
    \If{$B(p_i, 2r_i) \cap \mathcal{F} = \emptyset$}{
    $\mathcal{F} \gets \mathcal{F} \cup \{p_i\}$}
    }
    Output $\mathcal{F}$
	\caption{$\textsc{Squared MP Algorithm}$}
	\label{alg:MPalg_squared}
\end{algorithm}

We first claim that the set of facilities returned by Algorithm \ref{alg:MPalg_squared} is a constant factor approximation to the optimal set. 
\begin{theorem}\label{thm:MPalt}
Let $C_{OPT}$ denote the cost of the optimal solution to the objective given in \eqref{eq:objective2} and let $\F$ denote the set of facilities returned by Algorithm \ref{alg:MPalg_squared}. Then $\textup{cost}(\F) \le 6 \cdot C_{OPT}$.
\end{theorem}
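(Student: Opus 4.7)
The proof adapts the analysis of the classical Mettu--Plaxton algorithm to the squared objective in three steps. First, I would establish two preparatory geometric facts. (a) The balls $\{B(f, r_f) : f \in \F\}$ are pairwise disjoint: for distinct $f, g \in \F$ with $r_f \le r_g$, the greedy check at the moment $g$ was added forces $\|f - g\| > 2 r_g \ge r_f + r_g$. (b) For any $p, q \in X$, the monotonicity $|r_p - r_q| \le \|p - q\|$ holds. The direction $r_q \le r_p + \|p-q\|$ is verified by plugging $R = r_p + \|p-q\|$ into \eqref{eq:rnewdef} and comparing $R^2 - \|q-y\|^2 \ge r_p^2 - \|p-y\|^2$ for $y \in B(p, r_p) \subseteq B(q, R)$ via the triangle inequality; the symmetric direction gives $r_q \ge r_p - \|p-q\|$ whenever $q \in B(p, r_p)$.

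Second, I would bound the two parts of $\cost(\F) = |\F| + \sum_x \min_{f \in \F}\|x-f\|^2$. For the connection cost, the greedy rule implies that each $x \notin \F$ is within $2 r_x$ of some $f \in \F$, so $\min_f \|x-f\|^2 \le 4 r_x^2$ and the total is at most $4 \sum_x r_x^2$. For the opening cost, I would apply the identity $1 = \sum_{q \in B(f,r_f)} (r_f^2 - \|f-q\|^2)$ for each $f \in \F$, factor $r_f^2 - \|f-q\|^2 = (r_f - \|f-q\|)(r_f + \|f-q\|) \le 2 r_f r_q$ by (b), and combine with AM--GM and the pairwise disjointness of $\{B(f, r_f)\}$ to obtain $|\F| \le C_1 \sum_x r_x^2$ for a small absolute constant $C_1$. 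Finally, I would relate $\sum_x r_x^2$ to $C_{OPT}$ by LP duality: set $\alpha_x = r_x^2/4$ and verify that for every potential facility $f \in X$, the dual constraint reduces via (b) to $(r_f+u)^2/4 - u^2 \le (r_f^2 - u^2)/2$ for $u = \|x-f\| \in [0, r_f]$, which rearranges to the trivially true $(r_f - u)^2 \ge 0$. Summing gives $\sum_x \max(0, \alpha_x - \|x-f\|^2) \le 1/2 \le 1$, so $\alpha$ is dual feasible and $\sum_x r_x^2 \le 4 \cdot C_{OPT}$; chaining the bounds and tightening the constants yields $\cost(\F) \le 6 \cdot C_{OPT}$.

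The main obstacle will be the opening-cost bound $|\F| \le O(\sum_x r_x^2)$. In the linear MP setting the analog $|\F| \le \sum_x r_x$ is a one-line consequence of $r_q \ge r_f - \|f-q\|$ together with $\sum_q (r_f - \|f-q\|) = 1$; in the squared case the summand $r_f^2 - \|f-q\|^2$ is \emph{not} bounded termwise by a constant multiple of $r_q^2$, since the ratio $(r_f^2 - \|f-q\|^2)/r_q^2 \le (r_f + \|f-q\|)/(r_f - \|f-q\|)$ is unbounded as $\|f-q\| \to r_f$. The analysis must therefore aggregate contributions across $q \in B(f, r_f)$ and balance ``interior'' points (where $r_q \approx r_f$) against ``boundary'' points (where the per-term contribution $r_f^2 - \|f-q\|^2$ is itself small), using AM--GM together with the identity $|B(f, r_f)| \cdot r_f^2 = 1 + \sum_{q \in B(f, r_f)} \|f-q\|^2$ rather than any clean term-by-term inequality.
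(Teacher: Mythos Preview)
Your approach is genuinely different from the paper's, and it has two problems: one real gap and one constant mismatch.

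\textbf{The opening-cost gap.} You flag $|\F| \le C_1 \sum_x r_x^2$ as the main obstacle, and rightly so, but the tools you propose (the $1$-Lipschitz bound (b) plus AM--GM plus the identity $|B(f,r_f)|\,r_f^2 = 1 + \sum_{q}\|f-q\|^2$) are not enough. Writing $N = |B(f,r_f)|$, $S = \sum_q \|f-q\|^2$, $T = \sum_q r_q^2$, your bound $r_q \ge r_f - \|f-q\|$ together with Cauchy--Schwarz yields only
\[
T \ \ge\ 1 + 2S - 2\sqrt{S(1+S)},
\]
and the right side tends to $0$ as $S \to \infty$ (many points near the boundary of $B(f,r_f)$). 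So there is no uniform lower bound on $T$ from (b) alone. The missing ingredient is the algorithm-specific fact that if $q \in \F$ and $p \in B(q,r_q)$ then $r_p \ge r_q/2$: otherwise $p$ (or the facility that blocked $p$) would lie in $B(q,2r_q)$ and prevent $q$ from being opened. With this, $T \ge \tfrac14 N r_q^2 \ge \tfrac14$, and disjointness gives $|\F| \le 4\sum_x r_x^2$. The paper uses exactly this observation (in the proof of Lemma~\ref{lem:appx2}); your (b) is strictly weaker and cannot substitute for it.

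\textbf{The constant.} Even after patching the gap, your route gives $\cost(\F) \le (4+4)\sum_x r_x^2$ and $\sum_x r_x^2 \le 4\,C_{OPT}$ from your (correct, and in fact sharper than the paper's) dual argument, i.e.\ a factor of $32$, not $6$. No amount of ``tightening'' recovers $6$: the loss is structural, because you pass through $\sum_x r_x^2$ and incur multiplicative slack on both sides. The paper instead defines $\text{charge}(x,\F') = d(x,\F')^2 + \sum_{p\in\F'} \max(0, r_p^2 - \|p-x\|^2)$, verifies $\sum_x \text{charge}(x,\F') = \cost(\F')$, and compares $\text{charge}(x,\F)$ to $\text{charge}(x,\F^*)$ pointwise: for $q^*$ the nearest optimal facility to $x$, one shows $\text{charge}(x,\F^*) \ge \max(r_{q^*}^2, \|x-q^*\|^2)$ and $\text{charge}(x,\F) \le 2\|x-q^*\|^2 + 4r_{q^*}^2$, whose ratio is at most $6$. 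This charging argument is what you need if you want the stated constant.
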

\begin{proof} The proof follows similarly to Theorem $1$ in \cite{MP_alg} with some adaptations.
Let $\F'$ denote any set of facilities. For any point $x \in X$, let
$$ \text{charge}(x, \F') = d(x, \F')^2 + \sum_{p \in \F'} \max (0, r_p^2 - \|p-x\|^2)$$ where $d(x, \F')$ denotes the distance between $x$ and the closest point to $x$ in $\F'$ and $r_p$ is defined as in \eqref{eq:rnewdef}. We first show that $\sum_{x \in X} \text{charge}(x, \F') = \cost(F')$. Indeed, this follows from swapping the order of summation:
 \begin{align*}
    &\hspace{0.5cm} \sum_{x \in X} \text{charge}(x, \F') \\
    &=  \sum_{x \in X} \sum_{p \in \F'} \max (0, r_p^2 - \|p-x\|^2)  + \sum_{x \in X} d(x, \F')^2 \\
    &=  \sum_{p \in \F'} \sum_{x \in X} \max (0, r_p^2 - \|p-x\|^2)+ \sum_{x \in X} d(x, \F')^2  \\
    &= \sum_{p \in \F'} 1 + \sum_{x \in X} d(x, \F')^2  = \cost(\F').
\end{align*} 

Now denote $F^*$ as the set of facilities for the optimal solution. We first study the individual term $\text{charge}(x, \F^*)$. We first give a lower bound for $\text{charge}(x, \F^*)$. Let $q^*$ be the closest point to $x \in \F^*$. If $x \not \in B(q^*, r_{q^*})$ then $\text{charge}(x, \F^*) \ge \|x-q^*\|^2 > r_{q^*}^2$. Otherwise,
\begin{align*}
    \text{charge}(x, \F^*) &\ge \|x-q^*\|^2 + r_{q^*}^2 - \|x-q^*\|^2 \\
    &= r_{q^*}^2 \ge \|x-q^*\|^2 
\end{align*}  so altogether,
\begin{equation}\label{eq:charge1}
    \text{charge}(x, \F^*) \ge \max(r_{q^*}^2, \|x-q^*\|^2).
\end{equation}
Now let $\F$ denote the set of solutions returned by Algorithm \ref{alg:MPalg_squared}.
We now upper bound $\text{charge}(x, \F)$ in terms of the quantities $r_{q^*}^2, \|x-q^*\|^2$. Recall that $q^* \in \F^*$ is the closest point to $x$ in $\F^*$. We note that there must be a point $q \in \F$ such that $r_q \le r_{q^*}$ and $\|q-q^*\| \le 2r_{q*}$ due to how Algorithm \ref{alg:MPalg_squared} selects the set of facilities in step $6$. 

Now if $x \in B(q, r_q)$ then $d(x, \F) \le \|x-q\|$ and thus $ \text{charge}(x, \F) \le r_q^2$ since step $6$ of Algorithm \ref{alg:MPalg_squared} insures that $x \not \in B(q', r_{q'})$ for any other $q' \in \F$. Otherwise, $x \not \in B(q, r_q)$ in which case we claim that $\text{charge}(x, \F) \le \|x-q\|^2$. This claim is immediate unless there exists some $q' \in \F$ such that $x \in B(q', r_{q'})$. However in this case, a similar reasoning as above means $\text{charge}(x, \F) \le r_{q'}^2$ but 
\begin{equation*}
    \|x-q\| \ge \|q-q'\|-\|x-q'\| > 2r_{q'} - r_{q'} = r_{q'}
\end{equation*}
where the second inequality again follows from step $6$ of Algorithm \ref{alg:MPalg_squared}. Therefore, 
\begin{align}
    \text{charge}(x, \F) \le \|x-q\|^2 &\le (\|x-q^*\| + \|q^*-q\|)^2 \nonumber \\
    &\le 2\|x-q^*\|^2 + 2\|q^*-q\|^2 \nonumber \\
    &\le 2\|x-q^*\|^2 + 4r_{q^*}^2. \label{eq:charge2}
\end{align}
Comparing \eqref{eq:charge1} to \eqref{eq:charge2}, we can compute that the ratio of $2\|x-q^*\|^2 + 4r_{q^*}^2$ to $\max(r_{q^*}^2, \|x-q^*\|^2)$ is at most $6$ from which it follows that
 \begin{equation*}
    \text{charge}(x, \F) \le 6 \cdot  \text{charge}(x, \F^*).
\end{equation*}
Summing over $x \in X$ completes the proof.
\end{proof}

Using Theorem \ref{thm:MPalt}, we are now in position to prove Lemma \ref{lem:appx2}. The proof of Lemma \ref{lem:appx2} follows similarly to the proof of Lemma $2$ in \cite{sublin_MP} with some modifications to suit our alternate objective function given in \eqref{eq:objective2}.

\begin{proof}[Proof of Lemma \ref{lem:appx2}]
We first prove the lower bound. Note that for every $p_i \in X$, Algorithm \ref{alg:MPalg_squared} will open a facility within distance at most $2r_p$. Hence, $4 \sum_{p \in X} r_p^2$ is an upper bound on the cost to connect the points to their nearest facility. Now from similar reasoning as in the proof of Theorem \ref{thm:MPalt}, we note that each $p$ is in at most one ball $B(q, r_q)$ for some $q \in \F$, where $\F$ denotes the set of facilities returned by Algorithm \ref{alg:MPalg_squared}. Therefore,
\begin{equation*}
    \sum_{p \in X} r_p^2 \ge \sum_{q \in \F} \sum_{p \in B(q, r_q)} r_p^2.
\end{equation*}
Now if $p \in B(q, r_q)$ for some $q \in \F$ then we must have $r_q \le 2r_p$ because otherwise, step $6$ of Algorithm \ref{alg:MPalg_squared} would not have chosen $q$ as a facility center. Thus,
\begin{equation*}
\sum_{p \in X} r_p^2 \ge \sum_{q \in \F} \sum_{p \in B(q, r_q)} r_p^2 \ge \frac{1}4 \sum_{q \in \F} r_q^2 \cdot |B(q, r_q)|.
\end{equation*}
Finally, we know that
\[ 1 = \sum_{p \in B(q, r_q)}( r_q^2 - \|p-q\|^2) \le r_q^2 \cdot |B(q, r_q)| \]
from which it follows that $4\sum_{p \in X} r_p^2 \ge |\F|$. Altogether, we see that $8 \sum_{p \in X} r_p^2$ is an upper bound to the cost of the solution returned by Algorithm \ref{alg:MPalg_squared} so the lower bound follows.

For the upper bound, we will show that the sum of the radii squared is not too large compared to $\cost(\F)$ where $\F$ is the set of facilities returned by Algorithm \ref{alg:MPalg_squared}. Consider $p \not \in \F$ and let $q$ be the closest facility to $p$. First, we must have $r_p^2 \le 2(\|p-q\|^2 + r_q^2)$ because otherwise, $r_p^2 > (\|p-q\| + r_q)^2$ which implies that $B(q, r_q) \subseteq B(p, r_p)$. Furthermore,
\begin{align*}
   &\hspace{0.5cm}\sum_{p' \in B(p, r_p)} (r_p^2 - \|p-p'\|^2) \\
   &\ge \sum_{p' \in B(q, r_q)} (r_p^2 - \|p-p'\|^2) \\
   &> \sum_{p' \in B(q, r_q)}(2r_q^2 +2\|p-q\|^2 - \|p-p'\|^2) \\
   &\ge \sum_{p' \in B(q, r_q)}(r_q^2 +2\|p-q\|^2 + \|p'-q\|^2- \|p-p'\|^2) \\
   & \ge \sum_{p' \in B(q, r_q)} (r_q^2 - \|q-p'\|^2) = 1
\end{align*}
which contradicts \eqref{eq:rnewdef}. To summarize, if $p \not \in \F$ and $q$ is the closest facility in $\F$ to $p$, then 
\begin{equation}\label{eq:appx2}
    r_p^2 \le 2(\|p-q\|^2 + r_q^2).
\end{equation}
Going back to the upper bound, recall the definition of $\text{charge}(p, \F)$ used in the proof of Theorem \ref{thm:MPalt}:
$$ \text{charge}(p, \F) = d(p, \F)^2 + \sum_{q \in \F} \max (0, r_q^2 - \|q-p\|^2).$$
We also showed there that $\sum_{p \in X} \text{charge}(p, \F) = \cost(\F)$. Now
\begin{align*}
    \cost(\F) &= \sum_{p \in X} \text{charge}(p, \F) \\
    &\ge \sum_{q \in \F} r_q^2 + \sum_{p \in X \setminus \F} \max(r_{\delta(p)}^2, \|p-\delta(p)\|^2)
\end{align*}
where $\delta(p)$ denotes the closest element in $\F$ to $p$. From \eqref{eq:appx2}, we know that $r_p^2 \le 2(\|p-q\|^2 + r_q^2)$ so $\max(r_{\delta(p)}^2, \|p-\delta(p)\|^2) \ge r_p^2/4$ which gives us
\[ 6 \cdot C_{OPT} \ge  \cost(\F) \ge \frac{1}4 \cdot \sum_{p \in X} r_p^2, \]
as desired.
\end{proof}
We can prove the following statements about the expected value of $r_p$, defined as in \eqref{eq:rnewdef}, after a random projection to a suitable dimension depending on the doubling dimension of the set $X$. The following lemma is analogous to Lemmas \ref{lem:rlower} and \ref{lem:rupper} and omit its proof since the proof follows identically from the proofs in Lemmas \ref{lem:rlower} and \ref{lem:rupper}.

\begin{lemma}\label{lem:rnew_appx}
Let $X \subseteq \mathbb{R}^m$ and let $p \in X$. Let $G$ be a random projection from $\mathbb{R}^m$ to $\mathbb{R}^d$ for $d = O(\log \lambda_X)$. Let $r_p$ and $\tilde{r}_p$ be the radius of $p$ and $Gp$ in $\mathbb{R}^m$ and $\mathbb{R}^d$ respectively, computed according to Eq. \eqref{eq:rnewdef}. Then there exist constants $c, C > 0$ such that 
$$  cr_p^2 \le \E[\tilde{r}_p^2] \le C r_p^2.$$
\end{lemma}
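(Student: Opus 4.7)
My plan is to adapt, essentially verbatim, the two-sided argument used for $\E[\tilde r_p]$ in Lemmas \ref{lem:rupper} and \ref{lem:rlower}, but applied to the squared version of the defining equation. The basic tool is that the function $f(R) := \sum_{q \in B(p,R) \cap X}(R^2 - \|p-q\|^2)$ is monotone in $R$, so to show $\tilde r_p^2 \le R^2$ it suffices to produce an $R$ at which the projected analogue of $f$ is at least $1$, and dually to show $\tilde r_p^2 \ge R^2$ it suffices to exhibit an $R$ at which that sum is strictly less than $1$.

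For the upper bound $\E[\tilde r_p^2] \le C\, r_p^2$, I would reuse the events $\mathcal{E}_k$ from the proof of Lemma \ref{lem:rupper}: conditional on $\max_{x \in B(p, r_p) \cap X} \|G(x-p)\| \in [(k-1)(1+\delta) r_p,\, k(1+\delta) r_p)$, every $Gq$ with $q \in B(p, r_p) \cap X$ lies in $B(Gp, k(1+\delta) r_p)$. Set $R^2 := k^2 (1+\delta)^2 r_p^2 + r_p^2$; then the projected sum over $B(Gp, R)$ is at least $|B(p, r_p) \cap X| \cdot r_p^2 \ge 1$, where the last inequality follows from $\sum_{q \in B(p, r_p)}(r_p^2 - \|p-q\|^2) = 1 \implies |B(p, r_p) \cap X|\, r_p^2 \ge 1$. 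So $\E[\tilde r_p^2 \mid \mathcal{E}_k] \le O(k^2) \cdot r_p^2$, Lemma \ref{lem:indyk} gives $\Pr(\mathcal{E}_k) \le \exp(-\Omega(k^2 d))$, and the tail sum $\sum_k k^2 \exp(-\Omega(k^2 d))$ is $O(1)$ once $d = \Omega(1)$.

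For the lower bound $\E[\tilde r_p^2] \ge c\, r_p^2$, let $k := |B(p, r_p/2) \cap X|$. From
\begin{equation*}
1 \;=\; \sum_{q \in B(p, r_p)}(r_p^2 - \|p-q\|^2) \;\ge\; \sum_{q \in B(p, r_p/2)}(r_p^2 - r_p^2/4) \;=\; \tfrac{3}{4}\, k\, r_p^2,
\end{equation*}
we obtain $k \le 4/(3 r_p^2)$. Applying Theorem \ref{thm:indyk} exactly as in the proof of Lemma \ref{lem:rlower}, namely deleting the $k-1$ non-$p$ points of $B(p, r_p/2)$ and inserting a fictitious nearest neighbor at distance $(1-\epsilon) r_p/2$ from $p$ before invoking part (2), we get that with probability at least $1/2$ no point of $X \setminus B(p, r_p/2)$ is mapped within $(1-\epsilon) r_p/2$ of $Gp$. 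On this event,
\begin{equation*}
\sum_{Gq \in B(Gp,(1-\epsilon) r_p/2)}\bigl((1-\epsilon)^2 r_p^2/4 - \|Gp-Gq\|^2\bigr) \;\le\; k \cdot (1-\epsilon)^2 r_p^2/4 \;\le\; (1-\epsilon)^2/3 \;<\; 1,
\end{equation*}
so $\tilde r_p \ge (1-\epsilon) r_p/2$, giving $\tilde r_p^2 \ge (1-\epsilon)^2 r_p^2/4$, and combining, $\E[\tilde r_p^2] \ge \Omega(r_p^2)$. Since $\epsilon$ is fixed to a small absolute constant here, only the Indyk--Naor dimension requirement $d = O(\log \lambda_X)$ survives, with no $1/\epsilon^2$ blow-up.

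The only subtlety relative to the unsquared case is the balancing in the upper bound: the threshold $R$ must simultaneously be at least $k(1+\delta) r_p$ (so that every $Gq$ with $q \in B(p, r_p) \cap X$ is enclosed in $B(Gp, R)$) and satisfy $R^2 - k^2(1+\delta)^2 r_p^2 \ge r_p^2$ (so that each term in the sum contributes enough to exceed $1$). The choice $R^2 = k^2(1+\delta)^2 r_p^2 + r_p^2$ meets both conditions while still giving $R^2 = O(k^2 r_p^2)$, which is exactly what makes the Gaussian tail sum in $k$ converge. Once this choice is in place, every remaining step is a line-by-line translation from Lemmas \ref{lem:rupper} and \ref{lem:rlower}.
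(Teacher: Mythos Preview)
Your proposal is correct and is precisely the adaptation the paper has in mind: the paper omits the proof entirely, stating that it ``follows identically from the proofs in Lemmas~\ref{lem:rlower} and~\ref{lem:rupper},'' and you carry out exactly that translation, including the one place where care is needed (choosing $R^2 = k^2(1+\delta)^2 r_p^2 + r_p^2$ so that both the enclosure condition and the per-term lower bound $R^2 - \|Gp-Gq\|^2 \ge r_p^2$ hold simultaneously). Fixing $\epsilon,\delta$ to absolute constants in the invocation of Theorem~\ref{thm:indyk} to get the clean $d = O(\log\lambda_X)$ dependence is also the right reading of the statement.
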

Combining Lemma \ref{lem:rnew_appx}, which states that $\sum_p r_p^2$ is a constant factor approximation to thelb optimal solution of the objective given in \eqref{eq:objective2}, with Lemma \ref{lem:appx2}, we obtain the following theorem that is analogous to Theorem \ref{thm:constant}.
\begin{theorem}
Let $X \subseteq \mathbb{R}^m$ and let $p \in X$. Let $G$ be a random projection from $\mathbb{R}^m$ to $\mathbb{R}^d$ for $d = O(\log \lambda_X)$. Let $\mathcal{F}_m$ be the optimal solution in $\mathbb{R}^m$ and let $\F_d$ be the optimal solution for the dataset $GX \subseteq \mathbb{R}^d$. Then there exists constants $c,C>0$ such that
$$  c \cdot \textup{cost}(\F_m) \le \E[\textup{cost}(\F_d)] \le C \cdot \textup{cost}(\F_m).$$
\end{theorem}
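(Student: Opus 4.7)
The plan is to mirror the proof of Theorem \ref{thm:constant} exactly, using the squared-cost analogues of the two main ingredients that have already been established above: Lemma \ref{lem:appx2}, which says that $\sum_{p} r_p^2$ is a constant-factor approximation to the optimal squared facility location cost, and Lemma \ref{lem:rnew_appx}, which says that $\E[\tilde r_p^2] = \Theta(r_p^2)$ whenever the projection dimension $d$ is $\Omega(\log \lambda_X) = \Omega(d_X)$. Both statements are already available, so the argument reduces to gluing them together correctly.

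First, I would apply Lemma \ref{lem:appx2} to the dataset $X \subset \mathbb{R}^m$ to obtain
\[\frac{1}{8}\,\cost(\mathcal{F}_m) \;\le\; \sum_{p \in X} r_p^2 \;\le\; 24\,\cost(\mathcal{F}_m).\]
Then, conditionally on the random projection $G$, I would apply the same Lemma \ref{lem:appx2} to the projected dataset $GX \subset \mathbb{R}^d$. Because Lemma \ref{lem:appx2} is a deterministic statement about any finite point set in Euclidean space, it yields
\[\frac{1}{8}\,\cost(\mathcal{F}_d) \;\le\; \sum_{p \in X} \tilde r_p^2 \;\le\; 24\,\cost(\mathcal{F}_d),\]
where $\tilde r_p$ is the radius of $Gp$ computed in $GX$ according to Eq.\ \eqref{eq:rnewdef}. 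Taking expectations over $G$ and using linearity of expectation turns the middle term into $\sum_{p \in X} \E[\tilde r_p^2]$.

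Next, I would invoke Lemma \ref{lem:rnew_appx} on each point $p \in X$ to conclude that there exist constants $c', C' > 0$ (independent of $p$ and $X$) such that
\[c' \cdot r_p^2 \;\le\; \E[\tilde r_p^2] \;\le\; C' \cdot r_p^2.\]
Summing this pointwise bound over all $p \in X$ gives
\[c' \sum_{p \in X} r_p^2 \;\le\; \sum_{p \in X} \E[\tilde r_p^2] \;\le\; C' \sum_{p \in X} r_p^2,\]
which, chained with the two applications of Lemma \ref{lem:appx2}, yields constants $c, C > 0$ with
\[c\,\cost(\mathcal{F}_m) \;\le\; \E[\cost(\mathcal{F}_d)] \;\le\; C\,\cost(\mathcal{F}_m),\]
as desired.

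Since both of the underlying ingredients are already proved in the excerpt, there is no real technical obstacle; the only point requiring a little care is that Lemma \ref{lem:appx2} is a deterministic (worst-case) statement applied to a random point set $GX$, so one must take the expectation only after sandwiching $\cost(\mathcal{F}_d)$ between $\sum_p \tilde r_p^2$ bounds that hold for every realization of $G$. Provided we pick the hidden constant in $d = O(\log \lambda_X)$ to be large enough to invoke Lemma \ref{lem:rnew_appx}, the theorem follows immediately.
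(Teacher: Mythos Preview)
Your proposal is correct and is exactly the approach the paper takes: the paper simply says the theorem follows by combining Lemma~\ref{lem:appx2} (that $\sum_p r_p^2$ is a constant-factor approximation to the optimal squared cost) with Lemma~\ref{lem:rnew_appx} (that $\E[\tilde r_p^2]=\Theta(r_p^2)$), just as in the proof of Theorem~\ref{thm:constant}. Your write-up actually spells out the chaining and the point about applying the deterministic Lemma~\ref{lem:appx2} to $GX$ before taking expectations more carefully than the paper does.
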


Note that the crucial ingredient in the proof of Theorem \ref{thm:main} that allowed us to connect properties of the doubling dimension to facility location clustering was the relation given in Equation \eqref{eq:rdef}. The analogous relation for our new objective function in \eqref{eq:objective2} is given in \eqref{eq:rnewdef} and one can easily check that the steps in the proof of Theorem \ref{thm:main} transfer. Therefore, we have the following theorem.
\begin{theorem}\label{thm:main2}
Let $X\subseteq \mathbb{R}^m$ and let $G$ be a random projection from $\mathbb{R}^m$ to $\mathbb{R}^d$ for $d = O(\log \lambda_X \cdot  \log(1/\epsilon)/\epsilon^2)$. Fix $p \in X$ and let $Gx$ be any point in $B(Gp, C\tilde{r}_p)$ in $\mathbb{R}^d$ where $C$ is a fixed constant and $\tilde{r}_p$ is computed according to Eq. \eqref{eq:rnewdef} in $\mathbb{R}^d$. Then
$$ \E \|p-x\| \le 2C(1+O(\epsilon)) r_p. $$
\end{theorem}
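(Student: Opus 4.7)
My plan is to adapt the proof of Theorem~\ref{thm:main} essentially verbatim, the only nontrivial change being the calculation that uses the definition of $r_p, \tilde r_p$. Concretely, I will let $r = r_p$ and $\tilde r = \tilde r_p$ (both now defined via Eq.~\eqref{eq:rnewdef}), set $t_{-1} = 0$, $t_0 = 1$, and $t_i = 1+2\epsilon + \epsilon(i-1)/4$ for $i\ge 1$, and define $\mathcal{E}_i$ as the event that $2Crt_i \le \|p-x\| \le 2Crt_{i+1}$. I will write
\[
\E\|p-x\| \;=\; \sum_{i\ge -1} \E[\|p-x\|\mid\mathcal{E}_i]\,\Pr(\mathcal{E}_i),
\]
and bound $\Pr(\mathcal{E}_i)$ in two ways, exactly as in Theorem~\ref{thm:main}: for large $j$ by conditioning on the range $2rt_j \le \tilde r \le 2rt_{j+1}$, and for small $j$ by a chaining-type argument using the doubling dimension.

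The one genuine change is the first bound. From Eq.~\eqref{eq:rnewdef}, $1 = \sum_{q\in B(p,r)} (r^2 - \|p-q\|^2) \le |B(p,r)\cap X|\cdot r^2$, so $|B(p,r)\cap X|\cdot r^2 \ge 1$ (replacing the earlier $|B(p,r)\cap X|\cdot r\ge 1$). I claim that the event $2rt_j \le \tilde r \le 2rt_{j+1}$ still forces some $z\in B(p,r)$ with $\|G(z-p)\| \ge rt_j$: otherwise every $q\in B(p,r)$ would satisfy $\|Gp-Gq\| < rt_j$, and then
\[
\sum_{Gq\in B(Gp,2rt_j)} \bigl((2rt_j)^2 - \|Gp-Gq\|^2\bigr)
\;\ge\; \sum_{q\in B(p,r)} \bigl(4r^2 t_j^2 - r^2 t_j^2\bigr)
\;=\; 3\,|B(p,r)\cap X|\,r^2 t_j^2 \;\ge\; 3 t_j^2 \;>\; 1,
\]
contradicting the definition of $\tilde r$ (since $f(\rho) = \sum_{Gq\in B(Gp,\rho)}(\rho^2-\|Gp-Gq\|^2)$ is increasing in $\rho$ and equals $1$ at $\rho=\tilde r$). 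Then Lemma~\ref{lem:indyk} applied to $B(p,r)/r$ gives $\Pr(2rt_j \le \tilde r \le 2rt_{j+1}) \le \exp(-C_1 d t_j^2)$, matching Eq.~\eqref{eq:proof1} of the original proof.

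For the second bound on $\Pr(\mathcal{E}_i)$, the chaining argument goes through unchanged: I bucket $X_i := \{y : 2Crt_i \le \|y-p\| < 2Crt_{i+1}\}$ into $\lambda_X^{O(\log(t_i/\epsilon))}$ balls of radius $2Cr\epsilon/4$ centered at points $s\in S\subseteq X$ (by doubling dimension), control the projected radius of each small ball via Lemma~\ref{lem:indyk}, and bound the contraction $\|G(s-p)\|/\|s-p\|$ at each center $s$ via Equations~\eqref{eq:Gxa} and \eqref{eq:Gxb} in the two regimes $i \le 1/\epsilon^2$ and $i > 1/\epsilon^2$. Plugging the two bounds on $\Pr(\mathcal{E}_i)$ into $\sum_i 2Crt_{i+1}\Pr(\mathcal{E}_i)$ and splitting according to whether $i\le 1/\epsilon^2$ or $i>1/\epsilon^2$, the same Gaussian/error-function estimates (culminating in Lemma~\ref{lem:tech}) produce a total bound of $O(\epsilon)\cdot r$, yielding $\E\|p-x\| \le 2C(1+O(\epsilon))r$. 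The only potential obstacle is the first step above: ensuring the squared geometric definition of $\tilde r$ still produces the contradiction $>1$. This is handled by the extra factor of $3 t_j^2$ arising from $(2rt_j)^2 - (rt_j)^2 = 3r^2t_j^2$, which comfortably absorbs the weaker lower bound $|B(p,r)\cap X|\cdot r^2 \ge 1$; beyond this the argument is identical to Theorem~\ref{thm:main}.
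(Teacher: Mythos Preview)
Your proposal is correct and is exactly the approach the paper indicates: the paper states only that ``the steps in the proof of Theorem~\ref{thm:main} transfer'' after noting that Eq.~\eqref{eq:rnewdef} replaces Eq.~\eqref{eq:rdef}, and your adaptation pinpoints precisely the one place where the argument differs and verifies that the contradiction still goes through (using $|B(p,r)\cap X|\cdot r^2\ge 1$ and $(2rt_j)^2-(rt_j)^2=3r^2t_j^2$). Everything else, including the chaining step and the integral estimates, is indeed independent of whether $r_p,\tilde r_p$ come from the linear or squared definition, so the proof carries over as you describe.
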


To derive a statement analogous to Theorem \ref{cor:main} for our alternate objective function, we need a notion of a locally optimal solution. This task also follows from using Section \ref{sec:local} as a blue print. In particular, we can define local optimality of a solution to \eqref{eq:objective2} as follows.

\begin{definition}\label{def:local2}
A solution $\mathcal{F}$ to the objective given in \eqref{eq:objective2} is \emph{locally optimal} if for all $p \in X$, we have $B(p, 3r_p) \cap \F \ne \emptyset$ where $r_p$ is computed as in \eqref{eq:rnewdef}.
\end{definition}

Then the following lemma follows similarly to Lemma \ref{lem:localopt}.

\begin{lemma} \label{lem:localopt2}
Let $\mathcal{F}$ be an any collection of facilities. If there exists a $p \in X$ such that $B(p, 3r_p) \cap  \mathcal{F} = \emptyset$, then $ \textup{cost}(\mathcal{F} \cup \{p\}) < \textup{cost}(\mathcal{F})$,
i.e., we can improve the solution.
\end{lemma}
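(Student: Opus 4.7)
The plan is to mirror the proof of Lemma \ref{lem:localopt} but adapted to the squared cost objective, using the defining relation $\sum_{q \in B(p, r_p)} (r_p^2 - \|p-q\|^2) = 1$ from Equation \eqref{eq:rnewdef}. Suppose $p \in X$ satisfies $B(p, 3r_p) \cap \mathcal{F} = \emptyset$. I want to produce a lower bound on the saving obtained by adding $p$ to $\mathcal{F}$, then verify the saving exceeds $1$ (the cost of opening one new facility).

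First, for each $q \in B(p, r_p) \cap X$, every $f \in \mathcal{F}$ satisfies $\|p-f\| \ge 3r_p$, so by the triangle inequality $\|q-f\| \ge 3r_p - \|p-q\| \ge 2r_p > 0$. Thus the current squared connection cost of $q$ is at least $(3r_p - \|p-q\|)^2$. After opening a facility at $p$, this cost becomes at most $\|p-q\|^2$. So the per-point saving in connection cost is at least
\[
(3r_p - \|p-q\|)^2 - \|p-q\|^2 \;=\; 9r_p^2 - 6 r_p \|p-q\|.
\]
(Saves at other points only help; I will discard those contributions.)

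The key pointwise inequality is
\[
9 r_p^2 - 6 r_p \|p-q\| \;\ge\; 3\bigl(r_p^2 - \|p-q\|^2\bigr),
\]
which I would verify by setting $s = \|p-q\|$ and rewriting the difference as $6r_p^2 - 6 r_p s + 3 s^2 = 3\bigl((s-r_p)^2 + r_p^2\bigr) \ge 0$. Summing over $q \in B(p, r_p) \cap X$ and invoking Equation \eqref{eq:rnewdef} gives
\[
\sum_{q \in B(p, r_p)} \bigl[(3r_p - \|p-q\|)^2 - \|p-q\|^2\bigr] \;\ge\; 3 \sum_{q \in B(p, r_p)} (r_p^2 - \|p-q\|^2) \;=\; 3.
\]
Therefore the total connection cost drops by at least $3$ upon adding $p$, while the facility opening cost increases by $1$, giving $\cost(\mathcal{F} \cup \{p\}) \le \cost(\mathcal{F}) - 2 < \cost(\mathcal{F})$.

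There is no real obstacle here, since the proof is essentially a one-line algebraic manipulation followed by telescoping against the defining equation of $r_p$. The only subtlety worth double-checking is the sign: the factor-of-$3$ slack in the pointwise bound is precisely what absorbs the single unit of facility-opening cost, which is exactly analogous to the $(m+3) r_p - 1$ slack argument for the unsquared case in Lemma \ref{lem:localopt}. No separate lower bound on $|B(p, r_p) \cap X|$ is needed in the squared version because the quadratic cancellation already extracts the required factor.
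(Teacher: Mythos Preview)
Your proof is correct. The paper does not spell out a proof of Lemma~\ref{lem:localopt2}; it simply says the lemma ``follows similarly to Lemma~\ref{lem:localopt}.'' A literal adaptation of that proof would first extract the cardinality bound $|B(p,r_p)\cap X|\ge 1/r_p^2$ and then compare $4mr_p^2+9r_p^2$ (old connection cost) with $mr_p^2+1$ (new connection cost plus opening cost). Your version sidesteps the cardinality step entirely by proving the pointwise inequality $(3r_p-\|p-q\|)^2-\|p-q\|^2\ge 3(r_p^2-\|p-q\|^2)$ and summing directly against the defining relation \eqref{eq:rnewdef}; this is a slightly cleaner route but lands on the same savings bound of at least $2$.
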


Finally, as a corollary to Lemma \ref{lem:localopt2} and Theorem \ref{thm:main2}, we have the following corollary.

\begin{corollary}\label{cor:main2}
Let $X \subset \mathbb{R}^m$ and let $G$ be a random projection from $\mathbb{R}^m$ to $\mathbb{R}^d$ for $d = O(\log \lambda_X \cdot  \log(1/\epsilon)/\epsilon^2)$. Let $\mathcal{F}_d$ be a locally optimal solution for the dataset $GX$ for the objective function given in \eqref{eq:objective2}. Then, the cost of $\mathcal{F}_d$ evaluated in $\mathbb{R}^m$, denoted as $\textup{cost}_m(\mathcal{F}_d)$, satisfies
$$ \E[\textup{cost}_m(\mathcal{F}_d)] \le |\mathcal{F}_d| + C' \cdot \sum_{p \in X} r_p$$
for some constant $C' > 0$.
\end{corollary}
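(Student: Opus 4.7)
The strategy closely mirrors the proof of Theorem \ref{cor:main} given in Appendix \ref{sec:app_fl}, with two substitutions: Theorem \ref{thm:main2} replaces Theorem \ref{thm:main}, and local optimality in the sense of Definition \ref{def:local2} replaces that of Definition \ref{def:local}. The first step is to split $\cost_m(\mathcal{F}_d)$ into the opening cost and the (squared) connection cost. The opening cost is exactly $|\mathcal{F}_d|$ and is invariant under passage from $\mathbb{R}^d$ back to $\mathbb{R}^m$, so there is nothing to do for that term. For the connection cost, I would assign to each $p \in X$ the facility $f_p \in \mathcal{F}_d$ that is closest to $Gp$ in the projected space; by Definition \ref{def:local2} we have $\|Gp - G f_p\| \le 3 \tilde{r}_p$, so that $Gf_p \in B(Gp, 3 \tilde{r}_p)$.

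Next, I would invoke Theorem \ref{thm:main2} applied with $C = 3$ to obtain a bound on $\E\|p - f_p\|$ in terms of $r_p$. The statement of Corollary \ref{cor:main2} ultimately sums these bounds over all $p \in X$, so the connection cost satisfies $\E\bigl[\sum_{p \in X} \|p - f_p\|^{\bullet}\bigr] \le C' \sum_{p \in X} r_p^{\bullet}$, where $\bullet$ is the appropriate exponent dictated by the objective in Equation \eqref{eq:objective2}. Adding the opening-cost term $|\mathcal{F}_d|$ yields the claimed inequality.

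The main obstacle is that the objective \eqref{eq:objective2} uses \emph{squared} distances, whereas Theorem \ref{thm:main2} as stated only controls the first moment $\E\|p - x\|$. To recover the required second-moment bound, I would revisit the bucketing argument in the proof of Theorem \ref{thm:main}: the events $\mathcal{E}_i$ there show that $\|p - x\|$ lies in an interval $[2Crt_i, 2Crt_{i+1}]$ with probability $\Pr(\mathcal{E}_i)$ decaying exponentially in $t_i^2$ for small $i$ and polynomially in $i$ for large $i$. Since $t_i$ grows linearly in $i$, the sum $\sum_i (2C r t_{i+1})^2 \Pr(\mathcal{E}_i)$ converges to $O(r^2)$ under exactly the same dimension choice $d = O(\log \lambda_X \cdot \log(1/\epsilon)/\epsilon^2)$ used in Theorem \ref{thm:main2} (the convergence is easier in fact, because the dominating terms still have exponent $d \epsilon^2 = \omega(1)$, which kills any fixed polynomial of $t_i$). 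This gives $\E\|p - f_p\|^2 \le O(r_p^2)$ for each $p$, and summing concludes the proof.

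In short, the work is almost entirely in promoting Theorem \ref{thm:main2} from a first-moment to a second-moment statement by rerunning the chaining/doubling-dimension bucketing argument with $t_{i+1}^2$ in place of $t_{i+1}$ inside the series \eqref{eq:expectation}; everything else (local optimality $\Rightarrow$ nearby facility, splitting the objective into opening plus connection cost, and summing over $p$) is a direct transcription of the proof of Theorem \ref{cor:main}.
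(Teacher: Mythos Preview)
Your proposal is correct and follows exactly the route the paper takes: the paper merely states that Corollary~\ref{cor:main2} is a corollary of Lemma~\ref{lem:localopt2} and Theorem~\ref{thm:main2}, without further argument. Your write-up is in fact more careful than the paper's, since you explicitly flag that the squared objective requires a second-moment bound $\E\|p-f_p\|^2 = O(r_p^2)$ rather than the first-moment bound stated in Theorem~\ref{thm:main2}, and you correctly observe that re-running the bucketing argument of Theorem~\ref{thm:main} with $t_{i+1}^2$ in place of $t_{i+1}$ handles this (the tail probabilities $\Pr(\mathcal{E}_i)$ decay fast enough to absorb the extra factor); the paper simply asserts that ``the steps in the proof of Theorem~\ref{thm:main} transfer'' and leaves this point implicit. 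Note also that the stated bound $\sum_{p} r_p$ in the corollary is almost certainly a typo for $\sum_{p} r_p^2$, consistent with your $\bullet$-exponent hedging.
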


\begin{remark}
We can compute that a constant smaller than $3$ works for Definition \ref{def:local2} and consequently Lemma \ref{lem:localopt2} but this choice is inconsequential since we already incur a multiplicative constant factor in Theorem \ref{thm:main2}.
\end{remark}

Finally, we argue that the lower bound of Theorem \ref{thm:lb_fac} also carries over to our new objective function, meaning that the dimension we project to must depend on the doubling dimension. We define the connection cost of the objective \eqref{eq:objective2} as the second portion.

\begin{theorem}\label{thm:lb2}
Let $d = o(\log n)$ and let $G$ be be a random projection from $\mathbb{R}^m$ to $\mathbb{R}^d$. There exists $X \subseteq \mathbb{R}^m$ where $|X| = n$ such that with at least $2/3$ probability, the optimal cost multiplies by $o(1)$ when projected. In addition, there exists an optimal solution $\widetilde{\mathcal{F}}$ in $\mathbb{R}^d$ that is only an $\omega(1)$-approximate solution in the original space $\mathbb{R}^m$.
\end{theorem}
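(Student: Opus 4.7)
The plan is to reuse the construction and argument from Theorem \ref{thm:lb_fac}, but replace the linear-cost estimates on $\tilde r_p$ with their squared-cost analogues defined via Equation \eqref{eq:rnewdef}, and invoke Lemma \ref{lem:appx2} in place of Lemma \ref{lem:appx}. Concretely, I would take $X = \{Re_1,\dots,Re_m\}$ with $R = \sqrt{C}$ and $C = \sqrt{\log n/(10d)} = \omega(1)$, exactly as in the proof of Theorem \ref{thm:lb_fac}. Since the squared distance between any two distinct points of $X$ is $2R^2 \ge 2 > 1$, it is strictly better to open each $Re_i$ as a facility than to leave it unopened, so the optimum in $\mathbb{R}^m$ equals $m$ and opens all facilities.

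Next I would bound the squared radii $\tilde{r}_p$ of the projected points $Gp = GRe_i$, where $\tilde r_p$ is defined by $\sum_{q \in B(Gp,\tilde r_p) \cap GX}(\tilde r_p^2 - \|Gp-q\|^2)=1$. Taking $q = Gp$ in the defining sum gives $\tilde r_p^2 \le 1$ trivially. The heart of the argument is a counting claim: if at least $R$ projected points lie within distance $1/R$ of $Gp$, then setting $r^2 = 2/R$ we get
\[
\sum_{q \in B(Gp, r)} \bigl(r^2-\|Gp-q\|^2\bigr) \ge R \cdot \bigl(2/R - 1/R^2\bigr) = 2 - 1/R \ge 1,
\]
so monotonicity of the left-hand side in $r$ (paralleling the argument in Lemma \ref{lem:rupper}) yields $\tilde r_p^2 \le 2/R$. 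To establish the density hypothesis, I would replay the two-step estimate from the proof of Theorem \ref{thm:lb_fac}: apply Equation \eqref{eq:Gxa} to get $\|Ge_i\|\le C$ with probability $1-O(1/C)$, then condition on $Ge_i$ and use Lemma \ref{Nearby} together with independence of the $Ge_j$'s to conclude that with probability at least $1 - n^{-10}$ at least $\log n \ge R$ of the vectors $Ge_j$ lie within $1/C$ of $Ge_i$, i.e.\ within $1/R$ of $Gp$ after undoing the $R$-scaling. Markov's inequality then ensures that, with probability at least $2/3$, at most $3m/C$ of the projected points fail this density condition.

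Combining the two bounds, with constant probability
\[
\sum_{p \in GX} \tilde r_p^2 \;\le\; m \cdot \frac{2}{R} + \frac{3m}{C}\cdot 1 \;=\; o(m),
\]
and Lemma \ref{lem:appx2} immediately upgrades this to the statement that the optimum cost of \eqref{eq:objective2} on $GX$ is $o(m)$, proving the first claim. For the pullback claim, I would argue: since the projected optimum has cost $o(m)$, any optimal set of facilities $\widetilde{\mathcal F}$ in $\mathbb{R}^d$ must open $o(m)$ facilities. When evaluated back in $\mathbb{R}^m$, each of the $m - o(m)$ unopened points is at squared distance at least $2R^2$ from the nearest opened facility, so the pulled-back cost is at least $(m - o(m)) \cdot 2R^2 = \omega(m)$, giving an $\omega(1)$ approximation ratio.

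The main conceptual step is the passage from ``$R$ nearby points'' to a $2/R$ bound on $\tilde r_p^2$, which is the squared-cost analogue of the crucial inequality in Theorem \ref{thm:lb_fac}; once this is in hand, every other step is a direct transcription with the squared definitions and Lemma \ref{lem:appx2} in place of Lemma \ref{lem:appx}. I do not anticipate any genuine obstacle beyond verifying this inequality carefully.
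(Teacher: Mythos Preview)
Your proposal is correct and follows essentially the same approach as the paper: the identical construction $X=\{Re_1,\dots,Re_m\}$ with $R=\sqrt{C}$, the same density argument via Lemma \ref{Nearby}, the squared-radius bound $\tilde r_p^2 = O(1/R)$ on good points combined with the trivial bound $\tilde r_p^2\le 1$ on the remaining $O(m/C)$ points, and the appeal to Lemma \ref{lem:appx2} and the facility-count argument for the pullback. The only difference is cosmetic (you obtain $\tilde r_p^2 \le 2/R$ where the paper states $\tilde r_p \le 2/\sqrt{R}$), which is immaterial.
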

\begin{proof}[Proof Sketch]
The proof follows similarly as in the proof of Theorem \ref{thm:lb_fac}. We again define $X = \{Re_1, \dots, Re_m\}$, where $R = \sqrt{C}$ and $C = \sqrt{\frac{\log n}{10 d}}$. As in the proof of Theorem 6.1, we again have for any fixed $p = R e_i$, with probability at least $1 - \frac{1}{C},$ there are at least $R$ points in $GX$ within $\frac{1}{R}$ distance of $Gp$. For any such point $p$, letting $\tilde{r}_p$ be the associated radius for $GX$ around $Gp$ as computed by Equation \eqref{eq:objective2}, we have that $\tilde{r}_p \le \frac{2}{\sqrt{R}}$. So, with at least $2/3$ probability, at most $\frac{3m}{C}$ of the points have $\tilde{r}_p > \frac{2}{\sqrt{R}} = o(1)$. As in the proof of Theorem 6.1, this shows that the optimal cost multiplies by a $o(1)$ factor, by using Lemma \ref{lem:appx2} this time.

In the original space $X \subset \mathbb{R}^m$, the optimal squared facility location cost is $m$, which is achievable by setting every point in $X$ as a facility. However, since the optimal facility cost in $GX$ is $o(m),$ the optimal solution $\widetilde{\mathcal{F}}$ in the reduced space $\mathbb{R}^d$ assigns at most $o(m)$ points to be facilities. Therefore, for the remaining $m-o(m)$ points, the connection cost in the original space is at least $(R \sqrt{2})^2 \ge R^2,$ so the cost of $\widetilde{\mathcal{F}}$ in the original space $X$ is at least $R^2 \cdot (m-o(m)) = \omega(1) \cdot m$. Thus, any optimal solution $\widetilde{\mathcal{F}}$ is an $\omega(1)$-approximate solution in the original space $\mathbb{R}^m$.
\end{proof}

\end{document}